%% file: SODA2027/main.tex
\documentclass[11pt,letterpaper,hidelinks]{article}

\usepackage[margin=1in]{geometry}

\usepackage[T1]{fontenc}
\usepackage{amsmath,amssymb,amsfonts,amsthm,mathtools}
\usepackage{graphicx,xcolor}
\usepackage[ruled,vlined]{algorithm2e}
\SetKwInOut{KwIn}{Input}
\SetKwInOut{KwOut}{Output}
\usepackage{enumitem}
\usepackage{hyperref}
\usepackage[nameinlink,capitalise]{cleveref}
\usepackage{natbib}
\usepackage{bbm}
\usepackage{authblk}

\newtheorem{theorem}{Theorem}
\newtheorem{lemma}[theorem]{Lemma}
\newtheorem{corollary}[theorem]{Corollary}
\newtheorem{proposition}[theorem]{Proposition}

\newtheorem{remark}[theorem]{Remark}

\title{I.i.d. Prophet Inequalities with Discounted Rewards: As Hard as the Non-i.i.d. Case}
\author[1]{Jung-hun Kim\thanks{\texttt{junghun.kim@ensae.fr}}}
\author[1,2]{Vianney Perchet\thanks{\texttt{vianney.perchet@normalesup.org}}}

\affil[1]{FairPlay Team, CREST, ENSAE, Institut Polytechnique de Paris}
\affil[2]{Criteo AI Lab}

\date{}

\begin{document}

\maketitle

\begin{abstract}
We study prophet inequalities with discounted rewards, where i.i.d.\ base
rewards are multiplicatively discounted over time. Our main message is that
even this structured and arbitrarily weak form of nonstationarity can erase the
classical advantage of the stationary i.i.d.\ setting. Focusing on
single-quantile threshold policies, we show that the competitive ratio
transitions from the classical \(1-1/e\) guarantee to a fundamental \(1/2\)
barrier as discounting accumulates over many phases in a canonical regime with a common-decay factor and
equal-length phases. We further show that, in the same regime, the \(1/2\) barrier persists even for arbitrary stopping
rules. Consequently, i.i.d.\ base
rewards under discounting can be as hard as the fully non-i.i.d.\
case. On the algorithmic side, we design single-quantile threshold rules that attain
the tight bounds by calibrating acceptance decisions to an effective horizon
induced by discounting, and we extend this calibration to
heterogeneous decay factors and unequal phase lengths. We further show that a similar discontinuous breakdown persists in an
infinite-horizon continuous-decay benchmark, where arbitrarily weak
decay collapses the stationary benchmark from \(1\) to \(1/2\).
\end{abstract}

\input{Introduction}

\input{related}
\input{problem}

\input{common_decay}
\input{upper_bound_common_decay}
\input{DP}

\input{upper_bound}

\input{infinite}

\input{Experiments}

\input{conclusion}

\appendix
\newpage
\section{Appendix}

\input{comparision_value_quantile}

\input{proof_upper}

\input{proof_dis}

\input{proof_lower}
\input{proof_upper_final}
\input{proof_DP}

\input{proof_lower_gen}

\input{classical_continuous_decay}

\input{proof_S=n}

\input{proof_continuous_decay_lower}
\bibliographystyle{plainnat}
\bibliography{mybib}
\end{document}

%% file: Introduction.tex
\section{Introduction}
\vspace{-2mm}

Prophet inequalities constitute a fundamental class of problems in online optimal stopping \citep{hill1992survey}.
A decision maker (the ``gambler'') sequentially observes a stream of non-negative random variables (the ``rewards''), revealed one at a time, and must decide at each stage whether to \emph{accept the current reward and stop} or to \emph{continue observing future rewards}.
Performance is benchmarked against a \emph{prophet}, an omniscient agent who observes all realizations in advance and can always select the maximum reward.
The goal of the gambler is to design an online stopping rule whose expected payoff is competitive with that of the prophet, typically measured through the competitive ratio.
Owing to their rich mathematical structure and practical relevance, prophet inequalities have been extensively studied \citep{correa2019recent} and have found applications in areas such as posted-price mechanisms \citep{lucier2017economic}, online advertisement allocation \citep{alaei2012online}, and hiring processes in labor markets \citep{arsenis2022individual}. 

The prophet-inequalities literature has established tight constant-factor guarantees.
On one side, in stationary i.i.d.\ reward settings, the classical $1-1/e$
competitive-ratio bound is known to be tight for a class of single-threshold policies \citep{correa2019recent,correa2017posted}.
On the other side, for independent but non-identically
distributed (non-i.i.d.) reward sequences, the optimal competitive ratio degrades to $1/2$, as shown by
\citet{samuel1984comparison}.
Despite this extensive literature, existing results largely focus on these two
extreme endpoints and have not addressed structured forms of nonstationarity that
\textit{interpolate} between them.
We study a natural mechanism to bridge between the stationary i.i.d.\ and fully non-i.i.d.\ settings:  rewards are discounted with time, i.e., the scale of the reward distribution decreases over time.
This framework isolates a discounting-induced transition from stationarity to
nonstationarity, allowing a precise characterization of how classical guarantees
break down.
In particular, we show that even arbitrarily small discounting, when accumulated
over long horizons, rules out any competitive ratio exceeding $1/2$, matching
the worst-case guarantee in the non-i.i.d.\ setting. In other words, \textsl{i.i.d.\ prophet inequalities with slightly discounted rewards are as hard as
the non-i.i.d.\ case.}

Beyond its theoretical interest, reward decay  arises
naturally in many practical environments where opportunities deteriorate over
time due to time preferences, depreciation, congestion effects, or market
saturation \citep{frederick2002time,gallego1994optimal,vickrey1969congestion,nerlove1962optimal,bass1969new}.


We shall focus on \emph{single-quantile threshold} stopping policies. Such a policy fixes an acceptance quantile in advance and accepts the first observation whose value exceeds the corresponding quantile threshold of the current reward distribution. In the stationary i.i.d.\ setting, this is equivalent, up to tie-breaking, to using a fixed value threshold. Under reward decay, however, fixed-value and fixed-quantile thresholds no longer coincide. We formulate our policies in terms of a fixed acceptance quantile, and refer to them as single-threshold policies throughout.

Threshold policies are a cornerstone of prophet inequalities due to their simplicity and sharp guarantees. In the stationary i.i.d.\ setting, single-threshold policies achieve the classical \(1-1/e\) guarantee, which is tight within that policy class \citep{correa2017posted,correa2019recent}. In the independent non-i.i.d.\ setting, a fixed value-threshold rule attains the optimal worst-case competitive ratio \(1/2\) \citep{samuel1984comparison} (see also \citet{ehsani2018prophet,goldenshluger2024optimal}). From an economic perspective, single-threshold stopping rules naturally correspond
to standard posted-price mechanisms \citep{arnosti2023tight}, in which the decision maker commits to a fixed
acceptance rule in advance. Such mechanisms are appealing due to their single-parameter structure, which
reduces approximation to choosing a single price and yields
transparent and tractable policies \citep{nedelec2022learning}.


\medskip

As a consequence, we will study the performance of the single-threshold policies under decaying rewards
and characterize their achievable competitive ratios, to answer the following natural question:
\begin{center}
 \textit{To what extent do single-threshold policies retain their classical
    performance guarantees under decaying rewards?}
\end{center}
We tackle this question by identifying the fundamental performance limits and
designing single-threshold algorithms that match these limits in this setting.
Our main contributions are the following.
\begin{itemize}
  \item  \textbf{Fundamental limits under reward decay.}
  We derive a tight worst-case upper bound for single-quantile threshold policies under
  decaying rewards for a common decay factor and equal-length phases.
These bounds reveal a sharp transition from the classical $1-1/e$ guarantee to a
worst-case $1/2$ barrier in the boundary regime between stationarity and discounting.
This $1/2$ barrier is worst-case optimal, matching the classical non-i.i.d.\
impossibility bound. 
We also show that the $1/2$ barrier is not an artifact of single-threshold policies: any arbitrary stopping rules face the same worst-case barrier.

  \smallskip
\item \textbf{Single-threshold optimality via an effective-horizon principle.}  
  We design single-threshold algorithms that match the above upper bound for a common decay factor with equal-length phases.
The key insight is an \emph{effective-horizon} principle: at the level of the
prophet benchmark (expected maxima), a decaying reward sequence behaves like a stationary i.i.d.\ sequence observed over a shorter horizon.
Calibrating the acceptance quantile to this effective horizon yields tight guarantees. The same calibration extends to heterogeneous decay factors and unequal phase lengths.

  
    \smallskip
\item \textbf{Infinite-horizon continuous-decay benchmark.}
As a clean benchmark for the long-run effect of discounting,  we study an infinite-horizon model with
continuously decaying rewards.
 This benchmark reveals a sharp
discontinuity: while the stationary infinite-horizon setting admits a trivial competitive ratio
of $1$, introducing arbitrarily weak decay induces a sharp breakdown, yielding a
worst-case competitive ratio of $1/2$.
Moreover, we prove that our single-threshold rule is optimal in this benchmark and
achieves this bound.
\end{itemize}

%% file: related.tex
\section{Related Work}
The study of prophet inequalities dates back to the seminal work of
\citet{krengel1977semiamarts}.
Subsequent work established sharp constant-factor benchmarks in the independent
non-i.i.d.\ and stationary i.i.d.\ settings.
In the independent but non-identically distributed (non-i.i.d.) case,
\citet{samuel1984comparison} showed that a threshold stopping rule attains the
optimal competitive ratio of $1/2$.
In the i.i.d.\ setting, \citet{hill1982comparisons} established a competitive
ratio of at least $1-1/e$, later reinterpreted in algorithmic terms as a
single-quantile threshold policy that accepts the first observation exceeding a
suitable quantile \citep{correa2019recent, ehsani2018prophet, goldenshluger2024optimal}.
Beyond single-threshold policies, more adaptive but computationally involved
approaches have been proposed to improve performance in related settings, such
as \citet{abolhassani2017beating} and \citet{correa2017posted}.

In contrast to the above literature, which primarily focuses on stationary i.i.d.\ rewards or fully general non-identically distributed reward sequences, our work
studies single-threshold stopping policies under a
\emph{discounting-induced} nonstationary reward model.
This framework captures a continuous transition from stationarity to reward
decay, isolating the effect of discounting as a structured departure from the
i.i.d.\ setting.


Discounting has also appeared
in related prophet-region, secretary-type, and random-horizon settings. In the
prophet-region literature, \citet{allaart2006prophet} studies uniformly
bounded random variables under discounting, and characterizes the corresponding
prophet regions, yielding sharp additive comparisons between the expected
maximum and the optimal stopping value. In the discounted secretary problem,
\citet{babaioff2009secretary} consider random-order arrivals with
time-dependent discount factors and obtain competitive guarantees for fixed
sets of arbitrary values. Relatedly, \citet{giambartolomei2024iid} study
i.i.d.\ prophet inequalities with a random horizon, where the horizon
distribution is known but its realization is not; this model can be viewed
through a discounted optimal stopping problem with discount factors given by
the horizon survival probabilities. While these works feature discounting or
horizon-induced discounting, they address different mechanisms and objectives:
prophet-region/additive-gap comparisons, random-order secretary models, and
uncertainty in the horizon length. In contrast, our focus is on multiplicative competitive ratios for i.i.d.
draws from a base distribution under reward decay, and in
particular on how structured reward decay induces a transition from the
stationary i.i.d.\ prophet setting to non-i.i.d.-type hardness.

%% file: problem.tex
\section{Problem Statement}
We consider the classical prophet-inequality
framework over a finite horizon $n\in\mathbb{N}$, where a nonnegative random
reward $X_i\ge 0$ is revealed sequentially to a
{gambler} at each stage $i\in[n]$.
To model \emph{decaying reward distributions}, we partition the time horizon into
$S \ge 1$ consecutive phases, or time intervals,
$\{\mathcal T_s\}_{s=1}^S$, which form a disjoint partition of $[n]$.
We denote by $n_s := |\mathcal T_s|$ the length of phase $s$, so that
$\sum_{s=1}^S n_s = n$.

 Let \(Y_1,\ldots,Y_n\) be i.i.d.\ samples from a base reward distribution \(\mathcal D\) supported on \(\mathbb R_+\), with finite mean. Let $\gamma_s\in[0,1]$ for all $s\in[S-1]$ denote phase-wise decay factors, and define
the cumulative decay coefficients
\[
\Gamma_s:=\prod_{t=1}^s \gamma_t,
\quad s=1,\dots,S-1, \text{ and } \; \Gamma_0:=1.
\]
Then, for each phase $s\in[S]$ and index $i\in\mathcal T_s$, the observed (decayed)
reward is defined as
\[
X_i :=
\begin{cases}
Y_i, & s=1,\\[4pt]
\Gamma_{s-1}\,Y_i, & s\ge 2,
\end{cases}
\]
where $\Gamma_{s-1}$ captures the cumulative decay in reward magnitude across
phases.


We first analyze the canonical setting with a common decay factor and
equal-length phases, where \(\gamma_s=\widetilde\gamma\in[0,1]\) for all \(s\in[S-1]\) and \(n_s=n/S\) for all \(s\in[S]\) (or asymptotically \(n_s/n\to 1/S\)). In this case, the observed rewards take the simple form \(X_i=\widetilde\gamma^{s-1}Y_i\) for \(i\in\mathcal T_s\). We focus on this canonical setting first because it isolates the effect of accumulated discounting in its simplest form. We later return to the general phase-dependent formulation to handle heterogeneous decay factors and unequal phase lengths.


Following most of the prior literature in prophet inequalities
\citep{hill1982comparisons,samuel1984comparison,correa2017posted,ehsani2018prophet,goldenshluger2024optimal},
we assume that the reward distributions are known, which corresponds
to knowing the base distribution $\mathcal D$, as well as the phase partition
$\{\mathcal T_s\}_{s\in[S]}$ and decay factors $(\gamma_s)_{s\in[S-1]}$ in advance.

\paragraph{Stopping rule.}
At each stage $i$, after observing $X_i$, the decision maker must either \emph{stop and accept the current reward} or \emph{continue to the next stage}, thereby irrevocably forfeiting the option to stop at any earlier stage.
A {stopping rule} (or policy) $\tau$ is a random variable taking values in
$\{1,2,\ldots,n\}\cup\{n+1\}$ such that $\tau$ is a stopping time with respect to
the natural filtration generated by $(X_i)_{i\ge1}$. The value $\tau = n+1$ corresponds to the event that no reward is
accepted within the horizon, in which case the payoff is defined to be zero.
The expected reward of a policy $\tau$ is given by
\[
\mathbb{E}[X_\tau] = \sum_{i=1}^{n} \mathbb{E}[X_i \,\mathbbm{1}\{\tau = i\}].
\]
\paragraph{Prophet benchmark and competitive ratio.}
We compare the performance of an online stopping rule $\tau$ with that of a \emph{prophet} who knows all realizations $\{X_1, \ldots, X_n\}$ in advance and can always select the maximum reward.  
The prophet’s expected reward is therefore
\[
\mathbb{E}\!\left[\max_{i \in [n]} X_i\right].
\]
The performance of a stopping policy $\tau$ is measured by its \emph{competitive ratio} (CR) in the decay rewards with horizon $n$, which is defined as
\[
\mathrm{CR}_n(\tau)
= \frac{\mathbb{E}[X_\tau]}{\mathbb{E}\!\left[\max_{i \in [n]} X_i\right]}.
\]
The quantity $\mathrm{CR}_n(\tau)$ implicitly depends on the decay profile, the
phase partition, and the base distribution; we suppress this dependence to keep
the notation light. We seek stopping policies that maximize $\mathrm{CR}_n(\tau)$
uniformly over the class of base reward distributions $\mathcal{D}$.

\paragraph{Quantile rather than value thresholds.}
Throughout the paper, we focus on single-quantile threshold policies. This choice is natural under reward decay because a fixed quantile of the reward distribution induces phase-dependent value thresholds that scale with the discount factors. In particular, if the first-phase threshold is \(\alpha_1\), then the corresponding threshold in phase \(s+1\) is obtained simply as
\[
\alpha_{s+1}=\gamma_s\alpha_s .
\]
Thus single-quantile thresholds retain the same one-parameter simplicity and online implementation cost as static value thresholds.

Single-quantile and single-value threshold policies have comparable implementation
cost, and in the stationary i.i.d.\ setting they coincide up to tie-breaking.
Under reward decay, however, this equivalence breaks down: a single fixed value
threshold is not invariant to the changing reward scale and can be strictly
suboptimal. Appendix~\ref{app:comparison} provides a separating example showing
that, even with a common decay factor and equal-length phases, single-value
thresholds cannot attain the single-quantile benchmark characterized below.
Henceforth, we focus on single-quantile threshold policies, which we refer to simply as \emph{single-threshold policies}.

%% file: common_decay.tex
\section{Fundamental Limits in the Canonical Decay Model}

We begin with the canonical case of a common decay factor
\(\gamma_s=\widetilde{\gamma}\) for all \(s\in[S-1]\) and equal asymptotic
phase lengths, \(n_s=n/S\) for all \(s\in[S]\). For simplicity, we assume that
\(S\) divides \(n\). This symmetric setting
 captures the central phenomenon: accumulated discounting can drive the
competitive ratio from the classical i.i.d.\ value \(1-1/e\) down to the
non-i.i.d.\ barrier \(1/2\).

\begin{theorem}\label{cor:upper}
Consider the setting with a common decay factor and equal-length phases described above.
There exists a family of decaying reward distributions such that for any single-threshold stopping policy $\tau$, 
the competitive ratio satisfies
\[
\limsup_{n\to\infty}\mathrm{CR}_n(\tau) \;\le\;
\begin{cases}
\displaystyle
\frac{\big(1-\widetilde{\gamma}^{S}\exp\!\big(-S\tfrac{1-\widetilde{\gamma}}{1-\widetilde{\gamma}^{S}}\big)\big)
      \big(1-\exp\!\big(-\tfrac{1-\widetilde{\gamma}}{1-\widetilde{\gamma}^{S}}\big)\big)}
     {1-\widetilde{\gamma}\,\exp\!\big(-\tfrac{1-\widetilde{\gamma}}{1-\widetilde{\gamma}^{S}}\big)},
& \text{if } \; 0 \le \widetilde{\gamma} < 1,
\\[12pt]
1 - \tfrac{1}{e},
& \text{if }\; \widetilde{\gamma} = 1.
\end{cases}
\]
\end{theorem}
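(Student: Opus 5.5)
We construct a hard instance in the spirit of the classical \(1-1/e\) tightness example for single thresholds in the i.i.d.\ case. We then compute, in the limit \(n\to\infty\), the reward of an arbitrary single-quantile policy against the prophet, and maximize over the quantile.

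\textbf{The instance.} For each \(n\), let the base distribution \(\mathcal D_n\) take the value \(M_n=n/\epsilon\) with probability \(\epsilon/n\), and the value \(1\) (or \(0\)) otherwise, with \(\epsilon\to0\) after \(n\to\infty\). The intent is that, with \(m=n/S\) draws per phase, all the value comes from rare ``large'' draws. In the Poisson limit, phase \(s\) contains a large draw with probability \(1-e^{-\epsilon/S}\), so the scale is \(\lambda:=\epsilon/S\) per phase. A cleaner normalization is to fix \(p\) as the per-draw probability of the large atom and scale it so that the phase intensity is a free parameter \(\lambda\). We then choose \(\lambda\) adversarially at the end; this choice should produce the specific constant \(c=\frac{1-\widetilde\gamma}{1-\widetilde\gamma^S}\).

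\textbf{Evaluating a single-quantile policy.} A single-quantile policy with quantile \(q\) accepts each draw independently with probability \(q\), in the sense that it randomizes at the atom. Taking \(q=x/n\) in the limit, each phase accepts at Poisson rate \(x/S\). The key dichotomy is this. If the policy accepts only large atoms, it collects \(\widetilde\gamma^{s-1}M\) from the first phase that has an acceptance. Otherwise it risks stopping on small values, which are worthless relative to \(M\). Summing the geometric series gives a gambler value proportional to
\[
\sum_{s=1}^S \widetilde\gamma^{s-1}e^{-(s-1)a}(1-e^{-a})=\frac{(1-\widetilde\gamma^Se^{-Sa})(1-e^{-a})}{1-\widetilde\gamma e^{-a}},
\]
where \(a\) is the effective per-phase acceptance intensity, capped at the large-atom intensity. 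The prophet value is \(\sum_s\widetilde\gamma^{s-1}\Pr[\text{first large in phase } s]\cdot M\). To make the ratio exactly the displayed expression, I would design the instance so that the prophet is \(\approx\) ``\(M\) times the small total intensity'' with normalization \(\sum_s\widetilde\gamma^{s-1}(a/\ldots)\). Concretely, take the large atom with vanishing intensity \(\delta\to0\) per phase, so the prophet's value is \(\approx M\delta\sum_s\widetilde\gamma^{s-1}=M\delta\frac{1-\widetilde\gamma^S}{1-\widetilde\gamma}\). Add a continuum of intermediate values, as in the standard worst case for i.i.d.\ single thresholds (a distribution making the prophet's value equal for every threshold level). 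Each quantile then trades acceptance probability against conditional value, and the policy's best choice is \(a=c\), giving the stated bound.

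\textbf{Main obstacle.} The main obstacle is designing the adversarial distribution so that the supremum over quantiles equals exactly the displayed expression and not something larger. This is the discounted analogue of the i.i.d.\ instance where every threshold yields at most \(1-1/e\). I would first try the ``equal-revenue''-type construction relative to the effective horizon \(n_{\rm eff}=n\frac{1-\widetilde\gamma^S}{S(1-\widetilde\gamma)}\): take \(\Pr[Y>v]\) such that the prophet benchmark behaves like an i.i.d.\ maximum over \(n_{\rm eff}\) draws. I would then verify by direct optimization in the intensity \(a\) that the gambler's normalized value is maximized at \(a=c\). The case \(\widetilde\gamma=1\) follows from the classical i.i.d.\ tight example, and is also the limit of the formula as \(\widetilde\gamma\to1\). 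Finally, the limsup statement follows by Poisson approximation as \(n\to\infty\) with \(S\) fixed.
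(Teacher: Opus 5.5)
\textbf{There is a genuine gap.} Your proposal never fixes the hard instance. The step you flag as the ``main obstacle'' is exactly the content of the theorem: you must show that on one fixed instance \emph{every} acceptance quantile earns at most the displayed fraction of the prophet. That step is only asserted (``the policy's best choice is $a=c$'').

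Several intermediate claims also fail as stated.
\begin{itemize}
\item In your first instance ($M_n=n/\epsilon$ with probability $\epsilon/n$, else $1$ or $0$), the small values are negligible relative to the $\Theta(n)$ contribution of the large atoms. The quantile that accepts exactly the large atoms stops at the first large draw. That draw is the prophet's maximum, since earlier phases are less discounted, so the ratio tends to $1$.
\item There is no tension unless the common value carries order-one mass, comparable to the rare-atom contribution $c_{\widetilde\gamma}=\frac1S\sum_s\widetilde\gamma^{s-1}$, and its size must be tuned.
\item The gambler's value is not ``proportional to'' $A_S(x):=\sum_s\widetilde\gamma^{s-1}e^{-(s-1)x/S}(1-e^{-x/S})$ with an $x$-independent constant. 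It is $(\frac1x+y)A_S(x)$, and this $x$-dependent prefactor is what makes the optimization over quantiles nontrivial.
\item The number $\frac{1-\widetilde\gamma}{1-\widetilde\gamma^S}=\frac1{Sc_{\widetilde\gamma}}$ is the per-phase acceptance intensity of the \emph{policy} at $x=1/c_{\widetilde\gamma}$. It is not a parameter the adversary chooses.
\item The ``continuum of intermediate values'' / equal-revenue idea is left entirely unverified.
\end{itemize}

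\textbf{What is missing.} Take $Y=n$ with probability $1/n^2$ and $Y=y$ otherwise. A quantile $p=x/n$ earns at most $(\frac1x+y)A_S(x)+o(1)$, while the prophet gets $y+c_{\widetilde\gamma}+o(1)$. So the ratio is at most $f_S(x,y)=\frac{(1/x+y)A_S(x)}{y+c_{\widetilde\gamma}}$. At $x=1/c_{\widetilde\gamma}$ this equals $A_S(1/c_{\widetilde\gamma})$ for \emph{every} $y$.

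To make this the global maximum in $x$ for a single $y$, the paper sets $y=y(1/c_{\widetilde\gamma})$ with $y(x)=\frac{A_S(x)}{x^2A_S'(x)}-\frac1x$. Since $\operatorname{sign}\partial_x f_S(x,y)=\operatorname{sign}(y-y(x))$, it suffices that $y(\cdot)$ is nondecreasing.

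That single-crossing property is the real work. It reduces to $\operatorname{Var}_x(U)\le(\mathbb E_x U)^2$ under the density proportional to $\widetilde\gamma^{\lfloor Su\rfloor}e^{-xu}$ on $[0,1]$. This is proved by writing $U=(K+V)/S$ with an independent truncated-geometric $K$ and truncated-exponential $V$. A first-order condition at $x=1/c_{\widetilde\gamma}$ alone would not rule out a better quantile elsewhere.

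If you only needed the reversed quantifier order (an instance for each policy), a much easier argument works. Send $y\to\infty$ when $x\le 1/c_{\widetilde\gamma}$ and $y\to 0$ when $x\ge 1/c_{\widetilde\gamma}$, and use concavity of $A_S$ with $A_S(0)=0$. But the statement asks for one family that defeats all quantiles simultaneously.
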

\begin{proof}[Proof Sketch] Here we provide a proof sketch and the full version is provided in
Appendix~\ref{app:upper_adaptive}.
 The key is to construct a single hard
distribution, depending only on $(S,\widetilde{\gamma})$, that upper bounds every
single-threshold policy.

\paragraph{Step 1: Hard instance and prophet benchmark.}
Let the phases have equal length $n/S$, and consider
\[
Y_i=
\begin{cases}
n, & \text{with probability }1/n^2,\\
y_{\widetilde{\gamma}}, & \text{with probability }1-1/n^2,
\end{cases}
\qquad
X_i=\widetilde{\gamma}^{s-1}Y_i,\quad i\in\mathcal T_s,
\]
where the constant $y_{\widetilde{\gamma}}\ge 0$ will be chosen below. Define
\[
c_{\widetilde{\gamma}}:=\frac1S\sum_{s=1}^S\widetilde{\gamma}^{s-1}
=
\begin{cases}
\dfrac{1-\widetilde{\gamma}^S}{S(1-\widetilde{\gamma})},&0\le\widetilde{\gamma}<1,\\[6pt]
1,&\widetilde{\gamma}=1.
\end{cases}
\]
Then we show that 
\[
\mathbb E\!\left[\max_{i\in[n]}X_i\right]
=y_{\widetilde{\gamma}}+c_{\widetilde{\gamma}}+o(1).
\]

\paragraph{Step 2: Upper bound for an arbitrary threshold.}
A single-quantile threshold is characterized by a per-stage acceptance
probability $p$. If $R_s(p)$ is the
expected reward conditional on acceptance in phase $s$, then we show that, uniformly in $p$,
\[
pR_s(p)
\le \frac{\widetilde{\gamma}^{s-1}}{n}
   +\widetilde{\gamma}^{s-1}y_{\widetilde{\gamma}} p.
\]
Consequently,
\[
\mathbb E[X_\tau]=\sum_{s=1}^S\sum_{i\in\mathcal{T}_s} p\,R_{s}(p)\,(1-p)^{i-1}
\le
\sum_{s=1}^S
\left(\frac{\widetilde{\gamma}^{s-1}}{n}
+\widetilde{\gamma}^{s-1}y_{\widetilde{\gamma}} p\right)
(1-p)^{(s-1)n/S}
\frac{1-(1-p)^{n/S}}{p}.
\]
Writing $x=np$, for every fixed $x>0$ this yields
\[
\mathbb E[X_\tau]
\le
\left(\frac1x+y_{\widetilde{\gamma}}\right)A_S(x)+o(1),
\]
where
\[
A_S(x):=
\sum_{s=1}^S
\widetilde{\gamma}^{s-1}e^{-(s-1)x/S}(1-e^{-x/S})
=
\frac{(1-\widetilde{\gamma}^S e^{-x})(1-e^{-x/S})}
     {1-\widetilde{\gamma} e^{-x/S}}.
\]
Thus
\[
\mathrm{CR}_n(\tau)
\le f_S(x,y_{\widetilde{\gamma}})+o(1),
\qquad
f_S(x,y):=
\frac{(\frac1x+y)A_S(x)}{y+c_{\widetilde{\gamma}}}.
\]

\paragraph{Step 3: Choosing a fixed hard value \(y_{\widetilde{\gamma}}\).}
Set
\[
x_{\widetilde{\gamma}}:=\frac1{c_{\widetilde{\gamma}}}
\]
and define
\[
y(x):=\frac{A_S(x)}{x^2A_S'(x)}-\frac1x,
\qquad
y_{\widetilde{\gamma}}:=y(x_{\widetilde{\gamma}}).
\]
We show that, for $y_{\widetilde{\gamma}}=y(x_{\widetilde{\gamma}})$,
$f_S(\cdot,y_{\widetilde{\gamma}})$ is increasing before $x_{\widetilde{\gamma}}$ and decreasing
after $x_{\widetilde{\gamma}}$. Hence
\[
\sup_{x\in[0,\infty]}f_S(x,y_{\widetilde{\gamma}})
=f_S(x_{\widetilde{\gamma}},y_{\widetilde{\gamma}})
=A_S(x_{\widetilde{\gamma}}),
\]
where the last equality follows from $1/x_{\widetilde{\gamma}}=c_{\widetilde{\gamma}}$. Thus the
single choice $y_{\widetilde{\gamma}}$ defeats every single  threshold.

\paragraph{Step 4: Evaluation.}
For $0\le\widetilde{\gamma}<1$,
\[
x_{\widetilde{\gamma}}=\frac{S(1-\widetilde{\gamma})}{1-\widetilde{\gamma}^S},
\]
and hence
\[
A_S(x_{\widetilde{\gamma}})
=
\frac{
\left(1-\widetilde{\gamma}^S
\exp\!\left(-S\frac{1-\widetilde{\gamma}}{1-{\widetilde{\gamma}}^S}\right)\right)
\left(1-
\exp\!\left(-\frac{1-\widetilde{\gamma}}{1-{\widetilde{\gamma}}^S}\right)\right)}
{1-\widetilde{\gamma}
\exp\!\left(-\frac{1-\widetilde{\gamma}}{1-{\widetilde{\gamma}}^S}\right)}.
\]
When $\widetilde{\gamma}=1$, we have $x_{\widetilde{\gamma}}=1$ and
$A_S(1)=1-1/e$. Therefore,
\[
\limsup_{n\to\infty}\mathrm{CR}_n(\tau)
\le A_S(x_{\widetilde{\gamma}}),
\]
which is the claimed bound.
\end{proof}
\begin{figure}[t]
    \centering
    \includegraphics[width=0.5\linewidth]{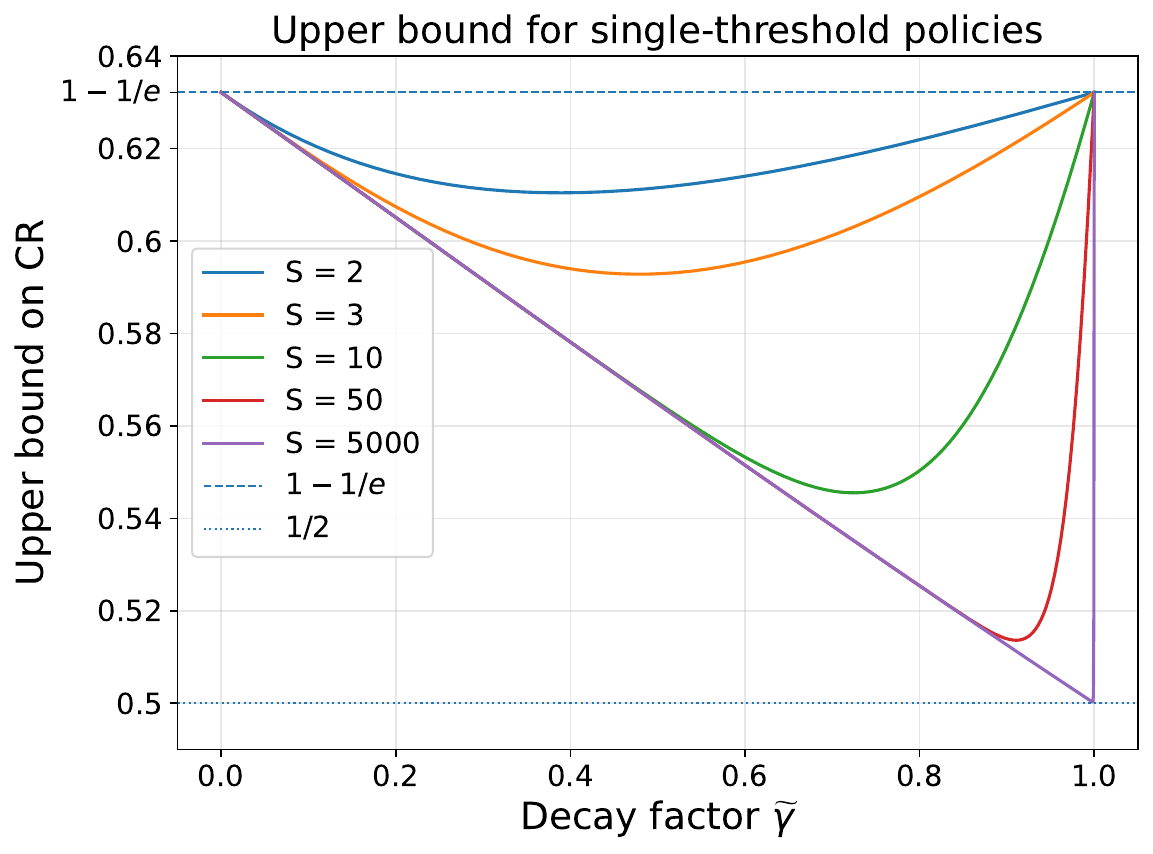}
    \caption{Upper bound on the competitive ratio as a function of the decay factor.}
    \label{fig:upperbound_single_threshold}
\end{figure}
Figure~\ref{fig:upperbound_single_threshold} plots the upper bound in
Theorem~\ref{cor:upper} as a function of the decay factor
\(\widetilde{\gamma}\) for several values of \(S\). For fixed \(S\), the bound
recovers the classical value \(1-1/e\) at the stationary point
\(\widetilde{\gamma}=1\) and \footnote{Only the first phase contributes
non-negligibly and the problem reduces to an i.i.d. instance with effective
horizon~\(n_1\)}$\widetilde{\gamma}=0$. However, as \(S\) grows, the worst-case guarantee
deteriorates in the near-stationary regime \(\widetilde{\gamma} \to 1\),
and the minimum approaches the \(1/2\) barrier. The following corollary makes this convergence to the \(1/2\) barrier explicit.



\begin{corollary}[Worst-case asymptotics]\label{cor:worst}
Under the assumptions of Theorem~\ref{cor:upper}, let
\(U_S(\widetilde{\gamma})\) denote the upper-bound value in
Theorem~\ref{cor:upper}. Then, for every fixed
$\widetilde{\gamma}\in[0,1)$,
\[
    \lim_{S\to\infty} U_S(\widetilde{\gamma})
    =
    \frac{1-\exp(-(1-\widetilde{\gamma}))}
         {1-\widetilde{\gamma}\exp(-(1-\widetilde{\gamma}))}
    =:U_\infty(\widetilde{\gamma}).
\]
Moreover, $U_\infty(\widetilde{\gamma})$ is non-increasing on
$[0,1)$ and satisfies
\[
    \lim_{\widetilde{\gamma}\uparrow 1}
    U_\infty(\widetilde{\gamma})
    =
    \frac12 .
\]
Equivalently, the same conclusion holds along any joint limit $S\to\infty$   and $\widetilde{\gamma}\uparrow1$
such that $\widetilde{\gamma}^{S}\to 0$ (i.e., $S(1-\widetilde{\gamma})\to\infty$).
\end{corollary}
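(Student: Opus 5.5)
The plan is to work directly with the closed form of \(U_S(\widetilde{\gamma})\) from Theorem~\ref{cor:upper}. Write \(\beta_S := \frac{1-\widetilde{\gamma}}{1-\widetilde{\gamma}^S}\), so that
\[
U_S(\widetilde{\gamma}) = \frac{\bigl(1-\widetilde{\gamma}^S e^{-S\beta_S}\bigr)\bigl(1-e^{-\beta_S}\bigr)}{1-\widetilde{\gamma}e^{-\beta_S}} .
\]
For fixed \(\widetilde{\gamma}\in[0,1)\) we have \(\widetilde{\gamma}^S\to 0\), hence \(\beta_S\to 1-\widetilde{\gamma}\). Also \(0\le \widetilde{\gamma}^S e^{-S\beta_S}\le \widetilde{\gamma}^S\to0\), and the denominator stays bounded away from zero because \(\widetilde{\gamma}e^{-\beta_S}\le\widetilde{\gamma}<1\). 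Continuity then gives \(U_S(\widetilde{\gamma})\to U_\infty(\widetilde{\gamma})\). The same argument covers the joint limit: if \(\widetilde{\gamma}^S\to 0\) and \(\widetilde{\gamma}\uparrow 1\), then \(\beta_S=(1-\widetilde{\gamma})(1+o(1))\to 0\) and the first factor tends to \(1\). It remains to check that the ratio \(\frac{1-e^{-\beta_S}}{1-\widetilde{\gamma}e^{-\beta_S}}\) tends to \(1/2\). To see this, expand numerator and denominator to first order:
\[
1-e^{-\beta_S}=\beta_S(1+o(1)) = (1-\widetilde{\gamma})(1+o(1)),
\qquad
1-\widetilde{\gamma}e^{-\beta_S}=(1-\widetilde{\gamma})+\widetilde{\gamma}\bigl(1-e^{-\beta_S}\bigr)=2(1-\widetilde{\gamma})(1+o(1)).
\]

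For the limit \(\widetilde{\gamma}\uparrow1\) of \(U_\infty\), set \(u=1-\widetilde{\gamma}\downarrow 0\). Then
\[
U_\infty=\frac{1-e^{-u}}{1-(1-u)e^{-u}} = \frac{u-u^2/2+O(u^3)}{2u-\tfrac{3}{2}u^2+O(u^3)} \to \tfrac12 ,
\]
which is the same expansion as above with \(\beta\) replaced by \(u\).

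For monotonicity, write \(g(u)=\frac{1-e^{-u}}{1-(1-u)e^{-u}}\) for \(u\in(0,1]\). Since \(u\) is decreasing in \(\widetilde{\gamma}\), it suffices to show that \(g\) is non-decreasing in \(u\). Multiplying numerator and denominator by \(e^{u}\) gives
\[
g(u)=\frac{e^u-1}{e^u-1+u}=\frac{1}{1+\frac{u}{e^u-1}} .
\]
The function \(u/(e^u-1)\) is decreasing on \((0,\infty)\) (a standard fact: its derivative has the sign of \(e^u(1-u)-1<0\)). Hence \(g\) is increasing in \(u\), so \(U_\infty\) is non-increasing in \(\widetilde{\gamma}\). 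The endpoint \(\widetilde{\gamma}=0\) gives \(g(1)=1-1/e\), consistent with the figure.

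The main obstacle is the joint limit, because the two factors have to be controlled simultaneously along an arbitrary path. The key step is to use \(\widetilde{\gamma}^S\to0\) to get \(\beta_S/(1-\widetilde{\gamma})\to1\). The stated equivalence between \(\widetilde{\gamma}^S\to0\) and \(S(1-\widetilde{\gamma})\to\infty\) when \(\widetilde{\gamma}\uparrow1\) follows from \(\log\widetilde{\gamma}\sim -(1-\widetilde{\gamma})\). Once these are in place, the first-order expansions above finish the argument.
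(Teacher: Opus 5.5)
Your proposal is correct. The paper states this corollary without a written proof, treating it as an immediate consequence of the closed form $U_S(\widetilde{\gamma})=A_S(x_{\widetilde{\gamma}})$ from Theorem~\ref{cor:upper}. Your argument is exactly the direct computation left implicit there: you send $\beta_S\to 1-\widetilde{\gamma}$, drop the $\widetilde{\gamma}^S e^{-S\beta_S}$ term, and use $\beta_S/(1-\widetilde{\gamma})=1/(1-\widetilde{\gamma}^S)\to 1$ for the joint limit. Your monotonicity step via $g(u)=1/\bigl(1+u/(e^u-1)\bigr)$ is also valid, and it is a genuine addition, since the paper never verifies that $U_\infty$ is non-increasing.
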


\paragraph{Discontinuous breakdown at the stationarity boundary}
 Under $S\to \infty$, the regime $\widetilde{\gamma} \uparrow 1$ corresponds to the boundary between stationary and
decay reward sequences. 
Over short time scales, the reward process is nearly
indistinguishable from the i.i.d.\ case, while over long horizons the cumulative
effect of decay becomes significant. At this boundary, the algorithm faces the sharpest trade-off: waiting is tempting because over short horizons the process looks nearly i.i.d., but waiting is expensive because over long horizons the cumulative decay becomes substantial.
This trade-off is reflected in the worst-case bound, whose limiting
value approaches \(1/2\) as \(\widetilde{\gamma}\uparrow 1\) under \(S\to\infty\).
Thus the boundary regime exhibits a \emph{discontinuous breakdown} from the
stationary value \(1-1/e\) attained at \(\widetilde{\gamma}=1\) (see the case
(S=10000) in Figure~\ref{fig:upperbound_single_threshold}).

This observation raises two natural questions. First, is this upper bound tight within the class of single-threshold policies? Second, is the resulting \(1/2\) limit merely an artifact of the single-threshold restriction, or is it an intrinsic limitation of the discounted-reward model? The next section answers the first question by giving a matching effective-horizon threshold rule; we return to the second question later by analyzing the optimal dynamic-programming policy.



%% file: upper_bound_common_decay.tex
\section{Single-Threshold Algorithm for the Canonical Decay Model}
In this section, we present a single-quantile threshold algorithm that achieves tight competitive ratios for a common decay factor with $\widetilde{\gamma}$ and equal asymptotic
phase lengths with $n_s=n/S$.  Both the design of the
algorithm and its analysis are calibrated through the notion of an
\emph{effective horizon}:
\[
B := \sum_{s=1}^{S} \widetilde{\gamma}^{s-1}\, \frac{n}{S},
\]
which aggregates the total discounted mass of opportunities across all phases.

The central insight underlying our analysis is an effective-horizon principle: in terms of expected maxima, the prophet benchmark under decay is bounded above by that of a stationary i.i.d.\ sequence observed over an \emph{effective horizon} of length~$B$.  We begin by formalizing this intuition through a key comparison lemma,
 which  plays a crucial role in both the
calibration of the acceptance threshold and the ensuing competitive-ratio
analysis.

\subsection{Bounding the Expected Maximum}
We bound the prophet's expected maximum in terms of an effective horizon.
The proof is deferred to Appendix~\ref{app:dis}.

\begin{lemma}\label{lem:multistage} Let $F$ denote the cumulative distribution function of the base reward distribution $\mathcal{D}$. Consider a common decay factor and equal-length phases. 
 Then, we have
\begin{equation}\label{eq:multistage}
\int_0^\infty \Bigl(1-F(z)^{B}\Bigr)\,dz
\;\ge\;
\int_0^\infty \Bigl(1-\prod_{s=1}^{S} F\!\bigl(z/\widetilde\gamma^{\,s-1}\bigr)^{n/S}\Bigr)\,dz,
\end{equation}
where we adopt the convention $F(z/0):=\lim_{u\to\infty}F(u)=1$ for $z\ge0$.
\end{lemma}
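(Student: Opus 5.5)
The plan is to pass to a log-scale representation in which both sides of \eqref{eq:multistage} become integrals of one concave function of a nonincreasing ``hazard'' function. Set $m:=n/S$ and $q(z):=-\log F(z)\in[0,\infty]$. Then $q$ is nonincreasing in $z$, and $q(z/0)=0$ under the stated convention. Put $h(t):=1-e^{-t}$, which is increasing and concave on $[0,\infty]$ with $h(0)=0$. Then \eqref{eq:multistage} reads
\[
\int_0^\infty h\bigl(B\,q(z)\bigr)\,dz \;\ge\; \int_0^\infty h\Bigl(m\sum_{s=1}^S q\bigl(z/\widetilde\gamma^{\,s-1}\bigr)\Bigr)\,dz ,
\qquad B=m\sum_{s=1}^S\widetilde\gamma^{\,s-1}.
\]
The guiding one-phase fact is the following. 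The substitution $z=\widetilde\gamma^{\,s-1}u$, together with $h(\lambda t)\ge \lambda h(t)$ for $\lambda\in[0,1]$ (a consequence of concavity and $h(0)=0$), gives
\[
\int_0^\infty h\bigl(m\,q(z/\widetilde\gamma^{\,s-1})\bigr)\,dz=\widetilde\gamma^{\,s-1}\int_0^\infty h\bigl(m\,q(u)\bigr)\,du\le \int_0^\infty h\bigl(\widetilde\gamma^{\,s-1}m\,q(u)\bigr)\,du .
\]
Thus a single scaled phase of $m$ draws is dominated, in expected maximum, by $\widetilde\gamma^{\,s-1}m$ undiscounted draws. The lemma asserts that these effective counts add up to $B$ when all phases are combined.

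To combine the phases I would use a layer-cake (majorization) reduction. Both $g(z):=m\sum_s q(z/\widetilde\gamma^{\,s-1})$ and $Bq(z)$ are nonincreasing, and $h$ is concave with $h'$ decreasing. Writing $h(t)=\int_0^t h'(r)\,dr$ and using Fubini, I get $\int h(g)=\int_0^\infty h'(r)\,\lambda_g(r)\,dr$, where $\lambda_g(r)=|\{z:g(z)>r\}|$. Abel summation against the decreasing weight $h'$ then shows that it suffices to prove, for every $T>0$,
\[
\int_0^\infty \min\bigl(g(z),T\bigr)\,dz\;\le\;\int_0^\infty \min\bigl(Bq(z),T\bigr)\,dz=B\int_0^{T/B} r(u)\,du ,
\]
where $r(u):=\sup\{z:q(z)>u\}$ is the nonincreasing ``inverse'' of $q$. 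This reduces the general concave $h$ to the piecewise-linear concave functions $t\mapsto\min(t,T)$, for which everything can be written explicitly in terms of $r$.

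I would prove this reduced inequality by induction on the number of phases. The step replaces the last discounted phase, which contributes $m\,q(z/\widetilde\gamma^{\,s-1})$, by $\widetilde\gamma^{\,s-1}m\,q(z)$ while keeping the other phases fixed. Equivalently, it shows $\int\phi(a(z)+m\,q(z/\gamma))\,dz\le\int\phi(a(z)+\gamma m\,q(z))\,dz$ for $\phi=\min(\cdot,T)$ and any nonincreasing $a\ge0$ that is itself of the form produced by the earlier phases. Since $q(z/\gamma)\le q(z)$, the integrands can be compared level by level: on the region where $a(z)\ge T$ both sides equal $T$, and on its complement $[0,z_0)$ I would combine the rescaling $z=\gamma u$ with the one-phase estimate above, restricted to the interval. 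Iterating this over $s=2,\dots,S$ converts every phase into undiscounted mass, and the total undiscounted mass is exactly $B$.

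The main obstacle is this induction step. Concavity of $h$ works against naive pointwise Jensen or subadditivity arguments, which point the wrong way, so the gain must come from the interaction between the rescaling of $z$ and the monotonicity of $q$ and $a$. If the interval-restricted comparison fails, I would first check the two extreme cases, $\widetilde\gamma=0$ (where $B=m$ and only the first phase contributes) and $\widetilde\gamma=1$ (equality with $B=n$), to pin down exactly which monotonicity is needed. I would then try a continuous interpolation: treat $\gamma\mapsto\widetilde\gamma$ as a parameter and show that the derivative of the difference of the two sides has a sign.
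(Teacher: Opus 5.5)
\textbf{The gap: the phase-combining step is never proved.} Your rewriting in terms of $q=-\log F$ and $h(t)=1-e^{-t}$ is correct, and so is the one-phase rescaling estimate. The reduction to the truncations $\min(\cdot,T)$ is also correct; it is cleanest via $h(t)=\int_0^\infty \min(t,T)\,e^{-T}\,dT$. But the inequality $\int_0^\infty\min(g,T)\,dz\le\int_0^\infty\min(Bq,T)\,dz$, which is where the phases are combined, is never established. You call it the main obstacle yourself, and the fallbacks you list (checking $\widetilde\gamma\in\{0,1\}$, differentiating in $\gamma$) are directions to explore rather than arguments.

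The sketch you give for the induction step also would not go through as written. Comparing $a(z)+m\,q(z/\gamma)$ with $a(z)+\gamma m\,q(z)$ level by level cannot work. The relevant pointwise comparison is $q(z/\gamma)$ versus $\gamma q(z)$, not $q(z/\gamma)$ versus $q(z)$, and these are not ordered in general: when $\int_0^\infty q<\infty$ both have integral $\gamma\int_0^\infty q$, so they must cross unless they coincide. In addition, the set where $a<T$ is $[z_0,\infty)$, not $[0,z_0)$. On that set the substitution $z=\gamma u$ turns $a(z)$ into $a(\gamma u)$, so the one-phase estimate cannot simply be restricted to the interval.

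\textbf{The missing idea: compare tail integrals, not integrands.} For any $f\ge0$ and any $x\ge0$ one has $\int_0^\infty\min(f,T)\,dz\le Tx+\int_x^\infty f\,dz$. Equality holds when $f$ is nonincreasing and $x=|\{f>T\}|$. Apply the inequality to $g$ with $x^*:=|\{Bq>T\}|$, and use
\[
\int_{x}^\infty q\bigl(z/\widetilde\gamma^{\,s-1}\bigr)\,dz=\widetilde\gamma^{\,s-1}\int_{x/\widetilde\gamma^{\,s-1}}^\infty q(u)\,du\le \widetilde\gamma^{\,s-1}\int_{x}^\infty q(u)\,du .
\]
This holds because $q\ge0$ and $x/\widetilde\gamma^{\,s-1}\ge x$; when $\widetilde\gamma^{\,s-1}=0$ both sides vanish. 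Summing over $s$ gives $\int\min(g,T)\le Tx^*+B\int_{x^*}^\infty q=\int\min(Bq,T)$ in one shot, with no induction. The tail $\int_{x^*}^\infty q$ is finite, since $F\ge e^{-T/B}$ there and $\mathcal D$ has finite mean.

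\textbf{Comparison with the paper.} This majorization route is genuinely different from the paper's proof, which works in the weight vector $w\in[0,1]^S$. There, the left side $\int\bigl(1-F(z)^{\sum_s n_sw_s}\bigr)\,dz$ is concave in $w$. The right side equals $\mathbb E[\max_s w_sM_s]$ and is convex. The two coincide on the step vectors $(1,\dots,1,0,\dots,0)$, and the nonincreasing decay vector is a convex combination of step vectors, so Jensen's inequality sandwiches the result. Your completed route would instead hold for every concave nondecreasing $h$ with $h(0)=0$. The paper's convexity step uses the specific max-of-scaled-variables structure of $h(t)=1-e^{-t}$.
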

We now explain the implication of the preceding lemma.
For each $i\in\mathcal{T}_s$, the cumulative distribution function of $X_i$ is
$z\mapsto F\!\bigl(z/\widetilde{\gamma}^{s-1}\bigr)$.
Combining this with the standard survival-function formula for maxima yields
\[
\mathbb{E}\!\left[\max_{i \in [B]} Y_i\right]
=\int_0^\infty \bigl(1-F(z)^{B}\bigr)\,dz
\;\ge\;
\int_0^\infty \Bigl(1-\prod_{s=1}^{S} F\!\bigl(z/\Gamma_{s-1}\bigr)^{n/S}\Bigr)\,dz
=\mathbb{E}\!\left[\max_{i \in [n]} X_i\right],
\]
where, for notational simplicity, we assume $B\in\mathbb{N}$. This inequality identifies $B$ as an \emph{effective horizon} for the prophet
benchmark (expected maxima). We use it to calibrate the acceptance quantile and
to analyze the competitive ratio.


\subsection{Single-Threshold Policy}






    
    
We consider a single-quantile stopping rule whose acceptance threshold adapts to
the nonstationary scaling of the reward distribution.
Let $Y \sim \mathcal{D}$ denote a random variable drawn from the base reward distribution. For each phase $s \in [S]$, corresponding to time indices $i \in \mathcal{T}_s,$
the algorithm employs a phase-specific threshold $\alpha_s$ satisfying
\begin{align}
    \mathbb{P}(X_i > \alpha_s)=\mathbb{P}(\widetilde{\gamma}^{s-1} Y > \alpha_s)
= \frac{a}{n},\label{eq:threshold}
\end{align}
where the parameter $a$ determines the acceptance quantile and is set as the inverse of \emph{average effective decay weight} across all phases as follows:
\[
a =\frac{1}{(1/S)+\sum_{s=1}^{S-1}(1/S) \widetilde{\gamma}^s}.
\]
Importantly, the thresholds satisfy the simple recursion such that for $s\in[S-1]$ \[
\alpha_{s+1}=\widetilde{\gamma}\alpha_s.\]
Given the initial threshold $\alpha_1$, this recursion allows efficient
updates of the thresholds across phases.
The pseudo-code of the threshold policy is provided in
Algorithm~\ref{alg:phase-threshold}.
\begin{algorithm}[t]
\caption{Single-Quantile Threshold Strategy for Decaying Rewards}\label{alg:phase-threshold}
\KwIn{Base threshold $\alpha_1$, decay factors $(\gamma_s)_{s=1}^{S-1}$}
threshold $\alpha \leftarrow \alpha_1$\;

\For{$s=1$ \KwTo $S$}{
    \For{$i \in \mathcal{T}_s$}{
        observe $X_i$\;
        \If{$X_i > \alpha$}{
            accept $X_i$ and stop\;
        }
    }
    \If{$s < S$}{
        $\alpha \leftarrow \gamma_s \alpha$\;
    }
}
\Return $0$\;
\end{algorithm}
\begin{remark}[Interpretation of the threshold choice]\label{rm:interpret_threshold}
As shown in Lemma~\ref{lem:multistage}, the decaying reward sequence can be
interpreted, in terms of expected maxima, as a stationary i.i.d.\ sequence with effective horizon
length $B=(1+\sum_{s=1}^{S-1}\widetilde{\gamma}^{s})\frac{n}{S}$.
Accordingly, a natural acceptance quantile is $1/B$, mirroring the classical
calibration in stationary i.i.d.\ prophet problems where the acceptance
probability scales as $1/n$ \citep{correa2019recent}.
This leads precisely to the choice $a/n$ in \eqref{eq:threshold}.
\end{remark}

\begin{remark}[Atomic base distributions]\label{rm:atom_alg}
If the distribution of $X_i=\widetilde{\gamma}^{\,s-1}Y_i$ has atoms, the strict
threshold equation in~\eqref{eq:threshold} may fail to admit a solution due to
discontinuities of the tail distribution.
Following \citet{goldenshluger2024optimal,ehsani2018prophet}, one may use a generalized quantile $\alpha_s$ satisfying
\[
\mathbb{P}(X_i>\alpha_s)\le \frac{a}{n}\le \mathbb{P}(X_i\ge \alpha_s),
\]
and implement randomized tie-breaking at $\alpha_s$ so that the acceptance
probability is exactly $\mathbb{P}(A_i)=a/n$, where $A_i$ denotes the acceptance event at time $i$.
With this modification, the algorithm behaves as in the continuous case, and all
analytical arguments go through unchanged by replacing
$\mathbbm{1}\{X_i>\alpha_s\}$ with $\mathbbm{1}\{A_i\}$.
\end{remark}

Under this construction, we obtain the following lower bound on the competitive ratio, which is tight and matches the upper bound in Theorem~\ref{cor:upper}.

\begin{theorem}\label{thm:lower}
  Consider any decaying reward distributions with $S \ge 1$
nonstationary phases of equal length and a common decay factor
$\gamma_t \equiv \widetilde{\gamma}$ for all $t \in [S-1]$.
Then,  the stopping policy $\tau$ of Algorithm~\ref{alg:phase-threshold} with $\alpha_1$ from \eqref{eq:threshold} achieves a competitive ratio bound of
    \[\liminf_{n\to\infty}\mathrm{CR}_n(\tau)\ge \begin{cases}
\displaystyle
\frac{\big(1-\widetilde{\gamma}^{S}\exp\!\big(-S\tfrac{1-\widetilde{\gamma}}{1-\widetilde{\gamma}^{S}}\big)\big)
      \big(1-\exp\!\big(-\tfrac{1-\widetilde{\gamma}}{1-\widetilde{\gamma}^{S}}\big)\big)}
     {1-\widetilde{\gamma}\,\exp\!\big(-\tfrac{1-\widetilde{\gamma}}{1-\widetilde{\gamma}^{S}}\big)},
& \text{if }\; 0 \le \widetilde{\gamma} < 1,\\[12pt]
1 - \tfrac{1}{e}, & \text{if }\; \widetilde{\gamma} = 1.
\end{cases}\]
\end{theorem}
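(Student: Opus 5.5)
We lower bound the gambler's reward phase by phase and compare it with the effective-horizon bound of Lemma~\ref{lem:multistage}. With acceptance probability exactly $a/n$ per stage (after the randomized tie-breaking of Remark~\ref{rm:atom_alg}), the stopping time is independent of which phase we are in. Hence
\[
\mathbb E[X_\tau]=\sum_{s=1}^S\sum_{i\in\mathcal T_s}\Bigl(1-\tfrac an\Bigr)^{i-1}\mathbb E\bigl[X_i\mathbbm 1\{A_i\}\bigr],
\qquad
\mathbb E\bigl[X_i\mathbbm 1\{A_i\}\bigr]=\widetilde\gamma^{s-1}\,\mathbb E\bigl[Y\mathbbm 1\{A^{(1)}\}\bigr],
\]
where $A^{(1)}$ denotes the first-phase acceptance event. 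The second identity holds because the thresholds scale as $\alpha_s=\widetilde\gamma^{s-1}\alpha_1$, so acceptance in phase $s$ corresponds to the same event on $Y$. Summing the geometric factors over each phase and letting $n\to\infty$ with $a$ fixed gives
\[
\mathbb E[X_\tau]=\frac{n}{a}\,\mathbb E\bigl[Y\mathbbm 1\{A^{(1)}\}\bigr]\,A_S(a)\,(1+o(1)),
\]
with $A_S$ as in the proof sketch of Theorem~\ref{cor:upper}. Indeed, phase $s$ contributes $\widetilde\gamma^{s-1}e^{-(s-1)a/S}(1-e^{-a/S})$ times $\frac na\mathbb E[Y\mathbbm 1\{A^{(1)}\}]$.

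Next we bound the prophet. By Lemma~\ref{lem:multistage}, $\mathbb E[\max_i X_i]\le \mathbb E[\max_{j\le B}Y_j]$ with $B=n/a$, since $a=1/c_{\widetilde\gamma}$. If $B$ is not an integer, we interpolate using concavity of $m\mapsto \mathbb E\max_{j\le m}Y_j$, or simply round up, which costs only a $1+O(1/n)$ factor. Then we apply the classical i.i.d. argument. Let $q:=\alpha_1$-quantile so that $\mathbb P(Y>\alpha_1)=1/B$. The standard bound gives
\[
\mathbb E\Bigl[\max_{j\le B}Y_j\Bigr]\le \alpha_1+B\,\mathbb E\bigl[(Y-\alpha_1)^+\bigr].
\]
Since the acceptance probability is exactly $1/B$, we have $\mathbb E[Y\mathbbm 1\{A^{(1)}\}]=\alpha_1/B+\mathbb E[(Y-\alpha_1)^+]$. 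Hence $B\,\mathbb E[Y\mathbbm 1\{A^{(1)}\}]\ge \mathbb E[\max_{j\le B}Y_j]$, which is the well-known fact that expected reward conditional on exceeding the top-$1/B$ quantile dominates the expected maximum of $B$ draws. Consequently
\[
\mathrm{CR}_n(\tau)\ge A_S(a)\,(1+o(1)).
\]

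Finally, we evaluate $A_S$ at $a=x_{\widetilde\gamma}=S(1-\widetilde\gamma)/(1-\widetilde\gamma^S)$. This gives exactly the displayed expression, identical to Step~4 of Theorem~\ref{cor:upper}. In the case $\widetilde\gamma=1$, we get $a=1$ and $A_S(1)=1-1/e$.

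The main technical care lies in two places. The first is the limit interchange: the $o(1)$ terms must be uniform, which holds because $(1-a/n)^{n/S}\to e^{-a/S}$ independently of $\mathcal D$ and the ratio then cancels the $\mathcal D$-dependent factor $\mathbb E[Y\mathbbm 1\{A^{(1)}\}]$ exactly. The second is the non-integer $B$ in Lemma~\ref{lem:multistage}, which we handle via the continuous extension $\int(1-F^B)$; the inequality $B\,\mathbb E[Y\mathbbm 1\{A^{(1)}\}]\ge\int_0^\infty(1-F(z)^B)\,dz$ still holds by the same union-bound argument, $1-F^B\le \min(1,B(1-F))$.
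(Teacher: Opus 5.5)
Your proposal is correct and follows essentially the same route as the paper. It uses an exact geometric-series expansion of $\mathbb{E}[X_\tau]$ with the common tail factor $\mathbb{E}[Y\mathbbm{1}\{Y>\alpha_1\}]$ pulled out, then the effective-horizon comparison of Lemma~\ref{lem:multistage} and the truncation bound $B\,\mathbb{E}[Y\mathbbm{1}\{Y>\alpha_1\}]=\alpha_1+B\,\mathbb{E}[(Y-\alpha_1)_+]\ge\mathbb{E}[\max_{i\in[B]}Y_i]$. Your explicit treatment of non-integer $B$ via $1-F^B\le\min\{1,B(1-F)\}$ is slightly more careful than the paper, which assumes $B\in\mathbb{N}$ here and uses that integral bound only in the continuous-decay proof.
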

\begin{proof}[Proof Sketch] Here we provide a proof sketch, and the full version is provided in Appendix~\ref{app:lower}.  The proof has two main components: (i) compute the expected reward of the
phase-dependent threshold policy, and (ii) relate the truncated tail expectation
$B\,\mathbb{E}[Y\mathbbm{1}\{Y>\alpha_1\}]$ to the prophet benchmark
$\mathbb{E}[\max_{i\in[n]}X_i]$.

\paragraph{Step 1: Expand $\mathbb E[X_\tau]$ via the first acceptance time.}
Define $p:=\mathbb P(Y>\alpha_1)=a/n$.
Since $X_i>\alpha_s \iff Y_i>\alpha_1$, the acceptance events are identical across phases.
By independence,
\[
\mathbb E[X_\tau]
=\sum_{s=1}^S\sum_{i\in\mathcal T_s}\mathbb E\!\left[X_i\mathbf 1\{X_i>\alpha_s\}\right](1-p)^{i-1}.
\]

\paragraph{Step 2: Factor out the common tail moment.}
For $i\in\mathcal T_s$ we have
\[
\mathbb E\!\left[X_i\mathbf 1\{X_i>\alpha_s\}\right]
=\widetilde{\gamma}^{\,s-1}\mathbb E\!\left[Y\mathbf 1\{Y>\alpha_1\}\right].
\]
Substituting into the expansion yields
\[
\mathbb E[X_\tau]
=\mathbb E\!\left[Y\mathbf 1\{Y>\alpha_1\}\right]
\sum_{s=1}^S \widetilde{\gamma}^{\,s-1}\sum_{i\in\mathcal T_s}(1-p)^{i-1}.
\]




\paragraph{Step 3: Evaluate geometric sums.}
Using the above results and the geometric-series identity give the closed form
\begin{align}
\mathbb{E}[X_\tau]
=
\left(1-(1-a/n)^{n/S}\right)
\left(1+\sum_{s=2}^S (1-a/n)^{(s-1)n/S}\widetilde{\gamma}^{s-1}\right)
\,\frac{n}{a}\mathbb{E}[Y\mathbbm{1}\{Y>\alpha_1\}].\label{eq:exp_reward}    
\end{align}

\paragraph{Step 4: Lower bound $\mathbb{E}[Y\mathbbm{1}\{Y>\alpha_1\}]$ by the prophet.}
From the definition,  $B=\frac{n}{S}\bigl(1+\sum_{s=1}^{S-1}\widetilde{\gamma}^s\bigr)$.
By Lemma~\ref{lem:multistage} and the standard survival-function formula for maxima,
one has the comparison
\[
\mathbb{E}\Big[\max_{i\in[B]}Y_i\Big]\ \ge\ \mathbb{E}\Big[\max_{i\in[n]}X_i\Big].
\]
It remains to relate $\mathbb{E}[\max_{i\in[B]}Y_i]$ to the tail moment
$\mathbb{E}[Y\mathbbm{1}\{Y>\alpha_1\}]$ in \eqref{eq:exp_reward}. Using a standard truncation argument (details omitted) and $\mathbb{P}(Y>\alpha_1)=a/n=1/B$, we obtain
\begin{align}
     B\mathbb{E}[Y\mathbbm{1}(Y>\alpha_1)]=\alpha_1+B\mathbb{E}[(Y-\alpha_1)_+]\ge \mathbb{E}\left[\max_{i\in[B]}Y_i\right] \ge \mathbb{E}\left[\max_{i\in [n]}{X_i}\right]. \raisetag{4ex}\label{eq:lower_bd_bE}
\end{align}

\paragraph{Step 5: Plug in and take $n\to\infty$.}
From 
$n/a=B$,
 \eqref{eq:exp_reward} together with \eqref{eq:lower_bd_bE} gives
\begin{align}\label{eq:lower_X_tau}
    \mathbb{E}[X_\tau]
\ \ge\
\left(1-(1-a/n)^{n/S}\right)
\left(1+\sum_{s=1}^{S-1}(1-a/n)^{sn/S}\widetilde{\gamma}^s\right)
\,
\mathbb{E}\Big[\max_{i\in[n]}X_i\Big].
\end{align}
Letting $n\to\infty$ 
leads to the explicit lower bound on the competitive ratio with
\[
(1-a/n)^{s(n/S)}\;\longrightarrow \;e^{-sa/S}.
\]
For $0\le\widetilde{\gamma}<1$, with $a =\frac{S(1-\widetilde{\gamma})}{1-\widetilde{\gamma}^{S}}$, \eqref{eq:lower_X_tau} simplifies to
\[
\liminf_{n\to\infty}\mathrm{CR}_n(\tau)\ \ge\
\frac{\bigl(1-\widetilde{\gamma}^{S}e^{-S(1-\widetilde{\gamma})/(1-\widetilde{\gamma}^S)}\bigr)\bigl(1-e^{-(1-\widetilde{\gamma})/(1-\widetilde{\gamma}^S)}\bigr)}
{1-\widetilde{\gamma} e^{-(1-\widetilde{\gamma})/(1-\widetilde{\gamma}^S)}},
\]
and for $\widetilde{\gamma}=1$, with $B=n$, \eqref{eq:lower_X_tau} reduces to the classical $1-1/e$ bound.
\end{proof}

Combining Theorems~\ref{cor:upper} and~\ref{thm:lower}, we obtain a tight characterization of the optimal single-threshold guarantee in the canonical common-decay, equal-length setting. In particular, the effective-horizon threshold rule exactly matches the worst-case upper bound, and hence the deterioration toward \(1/2\) cannot be improved within the single-threshold class.

This naturally raises a more structural question. Is the \(1/2\) limit caused by the simplicity of single-threshold policies, or is it an intrinsic limitation of the discounted-reward model itself? We next show that the latter is the case: even the optimal dynamic-programming policy, and therefore any stopping rule, faces the same worst-case \(1/2\) barrier in the near-stationary long-phase regime.


%% file: DP.tex
\section{The \(1/2\) Barrier Beyond Single-Threshold Policies}

We now show that the \(1/2\) worst-case limit is not an artifact of restricting attention to single-threshold policies. The proof analyzes the optimal dynamic-programming (DP) policy in the common-decay, equal-length setting. Since the DP policy dominates every admissible stopping rule, an upper bound on its competitive ratio immediately yields a fundamental limitation for all stopping rules. The detailed proof is provided in Appendix~\ref{app:DP}.


\begin{theorem}[Dynamic Programming: Worst-Case Asymptotics]\label{prop:dp-worst}
Consider the setting with a common decay factor $\widetilde{\gamma}\in[0,1)$ and
equal-length phases.
As $n \to \infty$, there exists a family of decaying reward distributions such that
the optimal dynamic programming policy $\tau_{\mathrm{DP}}$ satisfies the
following upper bound on the competitive ratio: in a joint limit
$\widetilde{\gamma}\uparrow 1$, and $S \to \infty$ such that
$\widetilde{\gamma}^{S}\to 0$ (equivalently, $S(1-\widetilde{\gamma})\to\infty$),
\[
\mathrm{CR}_n(\tau_{\mathrm{DP}})
\;\longrightarrow\;
\frac{1}{2}.
\]
Consequently, since $\tau_{\mathrm{DP}}$ is optimal among all stopping rules, there exists a family of decaying reward distributions such that,
for any stopping policy $\tau$,
\[
\limsup_{n\to\infty}\mathrm{CR}_n(\tau)
\;\le\;
\limsup_{n\to\infty}\mathrm{CR}_n(\tau_{\mathrm{DP}})
\;\xrightarrow[\widetilde{\gamma}\uparrow 1,\; S\to\infty\ \text{s.t.}\ \widetilde{\gamma}^{S}\to 0]{}\;
\frac{1}{2}.
\]
\end{theorem}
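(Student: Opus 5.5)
The plan is to build a hard instance, solve the DP asymptotically on it, and show that in the joint limit the optimal value over the prophet value tends to \(1/2\). A matching lower bound of \(1/2\) is already given by Theorem~\ref{thm:lower} together with Corollary~\ref{cor:worst}, since \(U_\infty(\widetilde\gamma)\to 1/2\) and the DP policy dominates the single-threshold rule. So the real work is the upper bound \(\limsup \mathrm{CR}_n(\tau_{\mathrm{DP}})\le 1/2+o(1)\) along the joint limit.

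\textbf{Hard instance.} I would use the two-point distribution from the proof of Theorem~\ref{cor:upper}:
\[
Y_i=n \ \text{w.p. } 1/n^2, \qquad Y_i=y \ \text{otherwise},
\]
with \(y\) chosen later. As in Step 1 of that proof, the prophet gets \(y+c_{\widetilde\gamma}+o(1)\). Here \(c_{\widetilde\gamma}=\frac{1-\widetilde\gamma^S}{S(1-\widetilde\gamma)}\to 0\) when \(S(1-\widetilde\gamma)\to\infty\), so the prophet's large-value term is negligible in this regime. A single scale \(y\) is therefore too weak. Instead I will use the scaling \(y=\kappa\,c_{\widetilde\gamma}\) and normalize everything by \(c_{\widetilde\gamma}\).

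\textbf{DP in the limit.} Only two kinds of rewards occur.
\begin{itemize}
\item Large values \(\widetilde\gamma^{s-1}n\) should always be accepted.
\item The baseline \(y\widetilde\gamma^{s-1}\) is accepted at some stage iff it beats the continuation value.
\end{itemize}
With \(t=i/n\in[0,1]\), the rare large values form a Poisson process of rate \(1/n\) per step. Their contribution from time \(t\) onward is
\[
\int_t^1 \widetilde\gamma^{\lfloor uS\rfloor}\,du\approx \int_t^1 e^{-\lambda u}\,du,\qquad \lambda=S(1-\widetilde\gamma)\ (\text{to first order}).
\]
The optimal DP stops at the first stage where \(y\widetilde\gamma^{s-1}\ge V(t)\). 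Because the baseline payoff decays in time while the continuation value \(V(t)=\int_t^1 e^{-\lambda u}du\) decays as well, a monotonicity check is needed to confirm a single crossing. The DP then takes one of two forms:
\begin{itemize}
\item accept the baseline at \(t=0\), getting \(y\); or
\item wait until \(t^*\), collecting the large-value mass \(\int_0^{t^*}e^{-\lambda u}du\) and then \(y e^{-\lambda t^*}\).
\end{itemize}
Since \(y e^{-\lambda t}\) versus \(\int_t^1e^{-\lambda u}du\approx e^{-\lambda t}/\lambda\) (for \(\lambda\to\infty\)) has both sides scaling alike, the decision reduces to comparing \(y\) with \(1/\lambda\approx c_{\widetilde\gamma}\). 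The DP value is therefore \(\approx\max(y,c_{\widetilde\gamma})\), while the prophet gets \(y+c_{\widetilde\gamma}\).

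\textbf{Conclusion and main obstacle.} Choosing \(y=c_{\widetilde\gamma}\), i.e. \(\kappa=1\), gives ratio \(\to 1/2\). This is the classical Samuel-Cahn two-value instance, emerging in rescaled form. The main obstacle is making the DP computation rigorous, with three sources of error:
\begin{itemize}
\item the \(o(1)\) errors from \(n\to\infty\), which must be uniform before taking \(S\to\infty\) and \(\widetilde\gamma\uparrow1\);
\item the discrete phase structure, which makes \(V\) piecewise;
\item the need to show that no intermediate waiting time beats \(\max(y,c)\).
\end{itemize}
For the last point I would bound the DP value directly: any stopping rule's payoff is at most \(y\widetilde\gamma^{s-1}\mathbb P(\text{stop on baseline in phase } s)\) plus the expected large value collected before stopping. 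This is bounded by \(\max(y, \int_0^1 e^{-\lambda u}du)\), because waiting costs baseline value at exactly the rate \(e^{-\lambda u}\) that large-value mass is gained. I would try that envelope argument first, taking \(n\to\infty\) first (exact computation with the Poisson limit), then \(\lambda\to\infty\), checking errors of order \(e^{-\lambda}\) and \(1-\widetilde\gamma\).
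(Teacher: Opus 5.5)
Your proposal is correct, and it uses the same two-point instance as the paper, but the decisive step is different.

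The paper characterizes the DP exactly at finite $n$:
\begin{itemize}
\item it always accepts $Y=n$;
\item it never accepts $Y=y$ strictly inside a phase, because one more step in the same phase is strictly better, so it is a phase-end rule $\tau_K$;
\item the increment $\mathbb{E}[X_{\tau_{K+1}}]-\mathbb{E}[X_{\tau_K}]$ has a sign independent of $K$, so $K\in\{1,S\}$.
\end{itemize}
The equalizing choice $y^\star=\widetilde\gamma/(S(1-\widetilde\gamma))$ then gives the exact limit $1/(1+\widetilde\gamma-\widetilde\gamma^S)$ for each fixed $(S,\widetilde\gamma)$. A separate Snell-envelope argument then transfers the bound to all stopping rules.

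Your envelope bound controls every stopping rule at once. It therefore needs neither the structural DP analysis nor the Snell-envelope step. What you give up is the identification of the DP policy. With $y=c_{\widetilde\gamma}$ you also lose the closed form at fixed $(S,\widetilde\gamma)$ and get only the limit $1/2$, which is all the statement asks. Your lower-side remark via the single-threshold guarantee is fine.

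You do not need a Poisson limit or any uniformity in $n$: the limit $n\to\infty$ is taken at fixed $(S,\widetilde\gamma)$, and your argument is exact at finite $n$. Let $\rho_j$ be the probability of stopping on a baseline value at step $j$, and let $w_i=\widetilde\gamma^{s(i)-1}$. Since $Y_i$ is independent of $\{\tau\ge i\}$ and $\mathbb{P}(\tau\ge i)\le 1-\sum_{j<i}\rho_j$, you get $\mathbb{E}[X_\tau]\le\sum_j\rho_j\bigl(yw_j+\tfrac1n\sum_{i\le j}w_i\bigr)+\bigl(1-\sum_j\rho_j\bigr)c_{\widetilde\gamma}$. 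The bracket is largest at phase ends. There it equals $\tfrac{1}{S(1-\widetilde\gamma)}+\widetilde\gamma^{K-1}\bigl(y-\tfrac{\widetilde\gamma}{S(1-\widetilde\gamma)}\bigr)$, which is affine in $\widetilde\gamma^{K-1}$; this is the paper's Step~5 in disguise. Hence $\mathbb{E}[X_\tau]\le\max\{y+1/S,\ \widetilde\gamma^{S-1}y+c_{\widetilde\gamma}\}$ for every $n$.

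Your continuous bound should read $\max(y,1/\lambda)$ rather than $\max\bigl(y,\int_0^1e^{-\lambda u}du\bigr)$. Stopping on the baseline at $t=1$ yields $ye^{-\lambda}+(1-e^{-\lambda})/\lambda$, a convex combination of $y$ and $1/\lambda$ like every other stopping time. The discrepancy is $O(ye^{-\lambda})$ and vanishes in the joint limit. The same convex-combination form shows that the interior crossing time $t^*$ in your heuristic never strictly helps.
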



\paragraph{Structural implication.}
Theorem~\ref{prop:dp-worst} identifies a fundamental worst-case limitation of $1/2$
in the presence of decaying rewards.
Interestingly, this $1/2$ barrier coincides with the classical impossibility bound
for prophet inequalities under arbitrary (non-identical) reward distributions
\cite{samuel1984comparison}.
This suggests that, in the boundary regime between stationarity and nonstationarity
induced by decaying rewards ($S \to \infty$, $\gamma\to 1$), \emph{the stopping problem can be as hard as
the fully general non-identically distributed setting.}

Having established that the \(1/2\) barrier persists beyond single-threshold
rules in this symmetric phase setting, we now return to single-threshold policies
and show that the effective-horizon calibration extends beyond the symmetric
case to heterogeneous decay factors and unequal phase lengths.

%% file: upper_bound.tex
\section{Extension to General Decay Factors and Phase Lengths}

The common-decay, equal-length setting makes the effective-horizon principle
transparent: discounting reduces the total effective mass of opportunities, and
the appropriate acceptance quantile should be calibrated to this discounted
horizon rather than to the raw horizon. We now show that the same idea extends to heterogeneous
phase profiles.

Specifically, we allow the phase-wise discount factors to vary arbitrarily,
with \(\gamma_s\in[0,1]\) for \(s\in[S-1]\), and allow the phase lengths
\(n_s\) to vary across phases, assuming that each phase occupies a
nonvanishing asymptotic fraction of the horizon. Formally, there exist phase
proportions \(\lambda_1,\dots,\lambda_S\in(0,1]\), with
\(\sum_{s=1}^S\lambda_s=1\), such that, as \(n\to\infty\),
\[
\frac{n_s}{n}\to \lambda_s, \qquad s\in[S].
\]
In this setting, the contribution of each phase is determined by two quantities:
its relative length \(\lambda_s\), and its cumulative reward scale
\(\Gamma_{s-1}\). Accordingly, define
\[
\Lambda_{s-1}:=\sum_{t=1}^{s-1}\lambda_t \quad (\Lambda_0:=0),
\qquad
C_S:=\sum_{t=1}^S\Gamma_{t-1}\lambda_t.
\]
We also define the effective horizon as
\[
B_G:=\sum_{s=1}^{S}\Gamma_{s-1}n_s .
\]
The effective-horizon comparison for expected maxima in
Lemma~\ref{lem:multistage} extends to this generalized horizon $B_G$
(Appendix~\ref{app:dis}). Moreover, since \(B_G/n\to C_S\), the quantity
\(C_S\) can be viewed as the normalized effective horizon associated with the
heterogeneous phase profile.

\paragraph{Fundamental Limits.}

We first investigate the fundamental limitations of single-threshold stopping
policies under this generalized phase profile. The following theorem shows
that the upper-bound expression from the case of common decay and equal-length phases extends directly
once each phase is weighted by its length and cumulative discount. The proof is
deferred to Appendix~\ref{app:upper}.

\begin{theorem}\label{thm:upper}
Fix any \(S\ge 1\), decay factors
\((\gamma_s)_{s=1}^{S-1}\in[0,1]^{S-1}\), and phase proportions
\((\lambda_s)_{s=1}^S\in(0,1]^S\) with \(\sum_{s=1}^S\lambda_s=1\).
For any single-threshold stopping policy \(\tau\), whose acceptance quantile is independent of the base distribution, there exists a family of
decaying reward distributions such that
\[
\limsup_{n\to\infty}\mathrm{CR}_n(\tau) \le
\sum_{s=1}^S
\Gamma_{s-1}
\exp\left(-\frac{\Lambda_{s-1}}{C_S}\right)
\left(1 - \exp\left(-\frac{\lambda_s}{C_S}\right)\right).
\]
\end{theorem}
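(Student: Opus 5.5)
The plan is to reuse the two-point hard instance from the proof of Theorem~\ref{cor:upper}, with each phase weighted by its length and cumulative discount. A policy whose acceptance quantile does not depend on the base distribution is described by a per-stage acceptance probability $p=x/n$, with $x\in[0,\infty]$; after passing to a subsequence we may assume $x$ has a limit. The order of quantifiers in the statement lets the hard value depend on $x$. So I take
\[
Y_i=n \text{ w.p. } 1/n^2,\qquad Y_i=y \text{ w.p. } 1-1/n^2,\qquad X_i=\Gamma_{s-1}Y_i\ (i\in\mathcal T_s),
\]
and choose $y\ge 0$ later as a function of $x$.

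\textbf{Step 1 (prophet).} With probability $1-O(1/n)$ no stage draws the value $n$, and then the maximum equals $y$, attained in phase~1 where $\Gamma_0=1$. The event that some stage in phase $s$ draws $n$ has probability about $n_s/n^2$ and contributes about $\Gamma_{s-1}n$. Summing over phases,
\[
\mathbb E\bigl[\max_{i\in[n]}X_i\bigr]=y+\sum_s\Gamma_{s-1}\lambda_s+o(1)=y+C_S+o(1).
\]

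\textbf{Step 2 (policy).} Exactly as in Step~2 of Theorem~\ref{cor:upper}, including randomized tie-breaking at the atom $y$ as in Remark~\ref{rm:atom_alg}, the acceptance event in phase $s$ satisfies
\[
pR_s(p)\le \Gamma_{s-1}\bigl(1/n+yp\bigr).
\]
Summing $(1-p)^{i-1}$ over $i\in\mathcal T_s$ and using $n_s/n\to\lambda_s$ gives
\[
\mathbb E[X_\tau]\le \Bigl(\tfrac1x+y\Bigr)A(x)+o(1),\qquad
A(x):=\sum_s\Gamma_{s-1}e^{-x\Lambda_{s-1}}\bigl(1-e^{-x\lambda_s}\bigr).
\]
The endpoint cases $x\in\{0,\infty\}$ are handled separately: when $x\to 0$ the policy essentially never stops, and when $x\to\infty$ it stops immediately on $y$. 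Both are covered by the limits below. The target bound is exactly $A(1/C_S)$, so it remains to show that for every $x$ some $y$ gives
\[
\frac{(1/x+y)A(x)}{y+C_S}\le A(1/C_S).
\]

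\textbf{Step 3 (choosing $y$; main step).} The key structural fact is the integral representation
\[
A(x)=x\int_0^1 g(u)e^{-xu}\,du,\qquad g:=\Gamma_{s-1}\ \text{on }[\Lambda_{s-1},\Lambda_s),
\]
where $g$ is nonincreasing because $\gamma_t\le 1$. From this representation I derive two monotonicity properties.
\begin{itemize}[leftmargin=*]
\item $A(x)/x=\int_0^1 g e^{-xu}\,du$ is nonincreasing in $x$.
\item $A$ is nondecreasing. Indeed $A'(x)=\int_0^1 g(u)h(u)\,du$ with $h(u)=(1-xu)e^{-xu}$. Since $h$ changes sign once, at $u=1/x$, and $g$ is nonincreasing, $A'(x)\ge g(1/x)\int_0^1h=g(1/x)e^{-x}\ge 0$, where $g(1/x):=g(1)$ if $1/x>1$.
\end{itemize}

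Then split on $x$.
\begin{itemize}[leftmargin=*]
\item If $x\ge 1/C_S$, take $y=0$. The ratio becomes $A(x)/(xC_S)$, which by the first property is at most $A(1/C_S)/\bigl((1/C_S)C_S\bigr)=A(1/C_S)$.
\item If $x<1/C_S$, let $y\to\infty$ slowly, with $y=o(n)$ so that the $o(1)$ terms still vanish. The ratio tends to $A(x)$, which by the second property is at most $A(1/C_S)$.
\end{itemize}
A diagonal choice of $y_n\to\infty$ handles the second case rigorously, and together the cases yield the theorem.

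The main obstacle I expect is Step~2 in full generality. It requires care that the tie-breaking bound on $pR_s(p)$ holds uniformly in $p$, including $p$ not of order $1/n$, and that phases with $\Gamma_{s-1}=0$ cause no degeneracy. If the intended reading instead requires a single distribution independent of $\tau$, I would fall back on the paper's device from Theorem~\ref{cor:upper}: choose $y=y(1/C_S)$ so that the $x$-derivative of the ratio vanishes at $1/C_S$, and then use the same integral representation to verify unimodality.
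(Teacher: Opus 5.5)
Your argument is correct and is essentially the paper's proof. It uses the same two-point instance and the same bound $\mathbb{E}[X_\tau]\le(1/x+y)A(x)+o(1)$. It makes the same policy-dependent choice of $y$: large when $x<1/C_S$ (the paper's $\sqrt n$), negligible when $x\ge 1/C_S$ (the paper's $1/n$), with the endpoints $x\to0$ and $x\to\infty$ covered as in the paper's reduction lemma. It also rests on the same two facts, that $A$ is nondecreasing and $A(x)/x$ is nonincreasing. The only difference is that you derive these from $A(x)=x\int_0^1 g(u)e^{-xu}\,du$, whereas the paper reads them off the expansion $A(x)=\sum_j m_j e^{-x\Lambda_j}$ with $m_0=1$, $m_j\le 0$.

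Your closing fallback is not needed here, and it would actually fail in this generality. Take $S=2$, $\lambda_1=0.01$, $\gamma_1=0.005$. Then $y(1/C_S)\approx 0.0066$, and against that single instance the rare-only policy $p=1/n^2$ achieves ratio $C_S/(C_S+y(1/C_S))\approx 0.70>A(1/C_S)\approx 0.49$. So the policy-dependent choice of $y$ allowed by the statement's quantifier order is essential.
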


The expression in Theorem~\ref{thm:upper} has a direct phase-by-phase
interpretation. Under the effective-horizon calibration, the probability of
reaching phase \(s\) is asymptotically
\[
\exp\left(-\frac{\Lambda_{s-1}}{C_S}\right),
\]
while the probability of accepting within phase \(s\), conditional on reaching
it, is
\[
1-\exp\left(-\frac{\lambda_s}{C_S}\right).
\]
Multiplying these terms by the reward scale \(\Gamma_{s-1}\) and summing over
phases gives exactly the right-hand side of the theorem. Thus the upper bound
already suggests the right algorithmic calibration: the acceptance quantile
should be set according to the total discounted opportunity mass, rather than
the raw horizon length.

This formula recovers the earlier canonical setting as a special case. In
particular, under a common decay factor and equal-length phases, i.e.,
\(\Gamma_{s-1}=\widetilde{\gamma}^{s-1}\) and \(\lambda_s=1/S\), it reduces to
the bound derived for the common-decay, equal-length model in Theorem~\ref{cor:upper}.

\paragraph{Tightness via Effective-Horizon Calibration.}

The preceding interpretation leads to the matching threshold rule. Since
\(B_G=\sum_{s=1}^{S}\Gamma_{s-1}n_s\) is the effective number of discounted
opportunities, the natural analogue of the classical \(1/n\)-quantile rule is
a \(1/B_G\)-quantile rule. 
For each phase \(s\in[S]\), choose the threshold \(\alpha_s\) so that
\begin{align}
\mathbb{P}(\Gamma_{s-1}Y>\alpha_s)
=
\frac{\widetilde a}{n}, \text{ where } \widetilde{a}=\frac{1}{(n_1/n)+\sum_{t=1}^{S-1}(n_{t+1}/n)\Gamma_t}.
\label{eq:threshold_gen}
\end{align}
This calibration satisfies $
{\widetilde a}/{n}
=
{1}/{B_G}
$ and $\widetilde a\to 1/C_S$.
Thus the policy uses the same one-parameter quantile rule as in the symmetric
case, but expressed in the correct effective time scale.

\begin{theorem}\label{thm:lower2}
Consider the decaying-reward model with \(S\ge1\) phases, decay factors
\((\gamma_s)_{s=1}^{S-1}\in[0,1]^{S-1}\), and phase proportions
\((\lambda_s)_{s=1}^{S}\in(0,1]^S\) with \(\sum_{s=1}^S\lambda_s=1\).
Then Algorithm~\ref{alg:phase-threshold}, with thresholds defined by
\eqref{eq:threshold_gen}, satisfies
\[
\liminf_{n\to\infty}\mathrm{CR}_n(\tau)\ge
\sum_{s=1}^S
\Gamma_{s-1}
\exp\left(-\frac{\Lambda_{s-1}}{C_S}\right)
\left(1-\exp\left(-\frac{\lambda_s}{C_S}\right)\right),
\]
where \(\Lambda_{s-1}:=\sum_{t=1}^{s-1}\lambda_t\) with \(\Lambda_0:=0\).
\end{theorem}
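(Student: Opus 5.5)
The plan is to mirror the proof of Theorem~\ref{thm:lower} and replace the symmetric geometric sums with phase-weighted sums. Set $p:=\widetilde a/n=1/B_G$ and let $\alpha_1$ be the $(1-p)$-quantile of $Y$, i.e.\ $\mathbb P(Y>\alpha_1)=p$. For atoms we use the randomized tie-breaking of Remark~\ref{rm:atom_alg}. Since $\alpha_{s+1}=\gamma_s\alpha_s$, we get $\alpha_s=\Gamma_{s-1}\alpha_1$. Hence $X_i>\alpha_s\iff Y_i>\alpha_1$ for every $i\in\mathcal T_s$ with $\Gamma_{s-1}>0$. Phases with $\Gamma_{s-1}=0$ contribute zero reward, and they also carry weight zero in $B_G$ and in the bound, so they are harmless. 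The acceptance events are therefore i.i.d.\ with probability $p$. Because the events $\{\tau=i\}$ and $\{Y_i>\alpha_1\}$ split by independence, we obtain the exact expression
\[
\mathbb E[X_\tau]=\mathbb E[Y\mathbbm 1\{Y>\alpha_1\}]\sum_{s=1}^S\Gamma_{s-1}(1-p)^{N_{s-1}}\frac{1-(1-p)^{n_s}}{p},
\qquad N_{s-1}:=\sum_{t<s}n_t .
\]

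Next, the tail moment is compared to the prophet. Here I would use the generalized version of Lemma~\ref{lem:multistage} for $B_G$, as stated at the start of this section and proved in Appendix~\ref{app:dis}. It gives $\mathbb E[\max_{i\in[n]}X_i]\le\int_0^\infty(1-F(z)^{B_G})\,dz$. If $B_G$ is not an integer, the right side is still a valid quantity; it is enough to bound it directly rather than interpret it as an expected maximum.

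The truncation step then proceeds as follows. Since $1-F^{B}\le\min(1,B(1-F))$, we have
\[
\int_0^\infty(1-F(z)^{B_G})\,dz\le\alpha_1+B_G\int_{\alpha_1}^\infty(1-F(z))\,dz=\alpha_1+B_G\mathbb E[(Y-\alpha_1)_+].
\]
Using $B_G\,\mathbb P(Y>\alpha_1)=1$, the right side equals $B_G\mathbb E[Y\mathbbm 1\{Y>\alpha_1\}]$. If tie-breaking is used, the same identity holds with the randomized acceptance indicator. This step is where I expect the only care to be needed: atoms, non-integer $B_G$, and the case $F(\alpha_1)$ with a mass point. The inequality $1-F^B\le B(1-F)$ works for real $B\ge1$. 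If $B_G<1$, then $1-F^{B}\le 1$ is used on $[0,\alpha_1]$ and $1-F^B\le B(1-F)$ still holds for $B\in(0,1]$ by concavity, so no integrality is needed.

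Combining the two steps and using $1/p=B_G$ gives
\[
\mathrm{CR}_n(\tau)\ge\sum_{s=1}^S\Gamma_{s-1}(1-\widetilde a/n)^{N_{s-1}}\bigl(1-(1-\widetilde a/n)^{n_s}\bigr).
\]
Finally, let $n\to\infty$. We have $\widetilde a\to1/C_S$ (note $C_S\ge\lambda_1>0$), $N_{s-1}/n\to\Lambda_{s-1}$, and $n_s/n\to\lambda_s$. Hence $(1-\widetilde a/n)^{N_{s-1}}\to e^{-\Lambda_{s-1}/C_S}$ and $(1-\widetilde a/n)^{n_s}\to e^{-\lambda_s/C_S}$. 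The sum is finite with $S$ fixed, so the limit passes termwise and yields the claimed $\liminf$ bound. This holds uniformly over base distributions, because the ratio lower bound depends only on the phase profile.
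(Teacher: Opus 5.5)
Your proposal is correct and follows the paper's proof essentially step for step: the exact first-acceptance expansion, the generalized effective-horizon comparison (Lemma~\ref{lem:multistage_gen}), the truncation identity $B_G\,\mathbb E[Y\mathbbm 1\{Y>\alpha_1\}]=\alpha_1+B_G\,\mathbb E[(Y-\alpha_1)_+]$, and termwise limits; bounding $\int_0^\infty(1-F(z)^{B_G})\,dz$ directly is slightly cleaner than the paper's union bound over $\max_{i\in[B_G]}Y_i$, which tacitly assumes $B_G\in\mathbb N$. One small slip: for $B\in(0,1)$ the inequality $1-u^B\le B(1-u)$ actually reverses (take $u=0$), but that branch is vacuous since $B_G\ge n_1\ge 1$.
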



Combining Theorems~\ref{thm:upper} and~\ref{thm:lower2}, we obtain a characterization of the optimal single-threshold guarantee in the general phase-dependent setting. Heterogeneous decay factors and unequal phase lengths do not require a fundamentally different policy; they only modify the effective horizon and the phase-by-phase weights. We conclude the theoretical analysis by considering a complementary infinite-horizon benchmark, which illustrates the long-run effect of discounting beyond the finite-horizon single-threshold setting.

%% file: infinite.tex
\section{Infinite-Horizon Continuous Decay ($S=n=\infty$)}\label{sec:continuous_decay}

As a clean benchmark for the long-run effect of discounting, we turn to an infinite-horizon model with
continuous decay, corresponding to the regime $S=n=\infty$.
Whereas the preceding results characterize large-$n$ limits for finite
horizons, this benchmark provides a discounted setting in which the
same $1/2$ barrier persists. In the infinite-horizon setting, a stopping rule $\tau$ is a stopping time taking
values in $\mathbb{N}\cup\{\infty\}$, where $\tau=\infty$ denotes never stopping.

\begin{theorem}\label{prop:S=n} 
Consider the continuous decay regime with an infinite sequence of phases
($S=n=\infty$) and a common decay factor $\widetilde{\gamma}\in[0,1)$. Let $(Y_i)_{i\ge1}$ be i.i.d.\ base rewards drawn from $\mathcal{D}$, and the observed rewards are
\[
X_i=\widetilde{\gamma}^{\,i-1}Y_i,\qquad i=1,2,\dots .
\]
For every $\widetilde\gamma\in[0,1)$ and every $\varepsilon>0$, there exist decaying reward distributions such that
for any stopping rule $\tau$,
\[
\mathrm{CR}_\infty(\tau)\le \frac{1}{1+\widetilde\gamma}+\varepsilon.
\]
\end{theorem}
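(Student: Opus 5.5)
The plan is to construct an explicit two-point base distribution with a tiny-probability, huge-value atom. For this instance the prophet's expected maximum and the optimal stopping value can both be computed up to vanishing error. The optimal value follows from a self-similar dynamic-programming recursion that the geometric decay makes available. Fix $\widetilde\gamma\in(0,1)$; the case $\widetilde\gamma=0$ is trivial, since then only $X_1$ matters and we can take $Y$ deterministic, giving bound $1$. Let $M\to\infty$, set $p:=c/M$ with a constant $c>0$ to be tuned, and take
\[
Y=\begin{cases} M, & \text{with probability } p,\\ 1, & \text{with probability } 1-p.\end{cases}
\]

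\textbf{Step 1: prophet benchmark.} Since $\widetilde\gamma^{i-1}\cdot 1\le 1$ with equality at $i=1$, we have $\max_i X_i\ge 1$ surely. On the event that some $Y_i=M$, the maximum equals $\max\{M\widetilde\gamma^{i-1}: Y_i=M\}$, which is at least $1$ once $M$ is large, except at indices so late that $M\widetilde\gamma^{i-1}<1$. A union bound plus inclusion--exclusion gives
\[
\mathbb E\Big[\max_i X_i\Big]\ \ge\ 1+\sum_{i\ge1} pM\widetilde\gamma^{i-1}-o(1)=1+\frac{c}{1-\widetilde\gamma}-o(1)
\]
as $M\to\infty$. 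The correction terms come from indices where two large atoms occur simultaneously, or where $1$ beats a late discounted $M$, and both are $O(p)$ times bounded sums.

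\textbf{Step 2: optimal stopping value via self-similarity.} Let $V$ be the optimal expected reward from time $1$. By stationarity of the base sequence and the multiplicative structure $X_{i+1}=\widetilde\gamma^{i}Y_{i+1}$, the continuation value after rejecting $X_1$ equals exactly $\widetilde\gamma V$. The Bellman equation is therefore
\[
V=p\max(M,\widetilde\gamma V)+(1-p)\max(1,\widetilde\gamma V).
\]
It admits a unique bounded solution, since the map is a $\widetilde\gamma$-contraction. Choose $c:=(1-\widetilde\gamma)/\widetilde\gamma$ and guess the regime in which the gambler stops at time $1$ whatever is seen. The guess gives $V=pM+(1-p)=1+c-o(1)$, and it is consistent because $\widetilde\gamma V\le \widetilde\gamma(1+c)=1$ up to $O(p)$ and $\widetilde\gamma V\le M$. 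To be safe, I would verify directly that $V\le 1+c+O(p)$. Plugging the candidate $\bar V:=1+c+Kp$, for a suitable constant $K$, into the right-hand side returns at most $\bar V$, and monotonicity of the contraction then gives $V\le\bar V$.

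\textbf{Step 3: ratio.} Combining the two steps,
\[
\mathrm{CR}_\infty(\tau)\le \frac{1+c+o(1)}{1+c/(1-\widetilde\gamma)-o(1)}.
\]
With $c=(1-\widetilde\gamma)/\widetilde\gamma$ the limit is
\[
\frac{1/\widetilde\gamma}{1+1/\widetilde\gamma}=\frac{1}{1+\widetilde\gamma}.
\]
Taking $M$ large enough yields the $+\varepsilon$ slack.

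The main obstacle is the borderline choice of $c$. At $\widetilde\gamma(1+c)=1$ the gambler is indifferent between stopping at a value $1$ and continuing, so the $O(p)$ perturbations in Step 2 must be controlled with care. If this is delicate, I would take $c$ slightly smaller, say $c=(1-\widetilde\gamma)/\widetilde\gamma-\eta$. Then stopping at time $1$ is strictly optimal, and the ratio is continuous in $\eta$, so it can be absorbed into $\varepsilon$.
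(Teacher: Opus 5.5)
Your proposal is correct, and it evaluates the optimal stopping value by a genuinely different route from the paper. The hard instance is the same up to rescaling. The paper takes $Y\in\{N,y\}$ with $\mathbb P(Y=N)=1/N$ and worst-case $y^\star=\widetilde\gamma/(1-\widetilde\gamma)$; dividing by $y^\star$ gives exactly your instance with $c=1/y^\star=(1-\widetilde\gamma)/\widetilde\gamma$.

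The paper then argues structurally:
\begin{itemize}
\item the DP always accepts the jackpot, so it reduces to the cutoff rules $\tau_K$;
\item the increment $\mathbb E[X_{\tau_{K+1}}]-\mathbb E[X_{\tau_K}]$ has a sign independent of $K$, so the optimum is at $K=1$ or $K=\infty$;
\item $y$ is chosen to equalize the two limiting ratios, and a truncation/Snell-envelope argument handles the infinite horizon.
\end{itemize}

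You instead use the scale invariance $(X_2,X_3,\dots)=\widetilde\gamma\,(Y_2,\widetilde\gamma Y_3,\dots)$. Writing $\Phi(v):=\mathbb E[\max\{Y,\widetilde\gamma v\}]$, this collapses the DP to the scalar equation $V=\Phi(V)$, which you solve in closed form at every finite $M$.

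What your route buys is brevity and less machinery. First-step conditioning gives $V\le\Phi(V)$ for $V=\sup_\tau\mathbb E[X_\tau]$. Because $\Phi$ is monotone and $\widetilde\gamma$-Lipschitz, this alone forces $V\le V^\ast$ for the fixed point $V^\ast$, so you never need existence of an optimal stopping time. What the paper's route buys is the structure of the optimal policy (stop at once or wait for the jackpot) and an explanation of the worst-case parameter as an equalizer. It also parallels the finite-horizon proof of Theorem~\ref{prop:dp-worst}, where the finite horizon destroys the self-similarity your shortcut relies on.

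Two small corrections.

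First, the borderline case you flag as the main obstacle does not arise. With $c=(1-\widetilde\gamma)/\widetilde\gamma$ and $p=c/M$, the candidate $V^\ast=1+c-p$ satisfies $\widetilde\gamma V^\ast=1-\widetilde\gamma p<1$. Hence $\Phi(V^\ast)=pM+(1-p)=V^\ast$ exactly, and stopping at time $1$ is strictly optimal for every finite $M$. Neither the supersolution $\bar V$ nor the $\eta$-shift is needed.

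Second, in Step~1 the correction is $O(p\log M)$ rather than $O(p)$. Roughly $\log_{1/\widetilde\gamma}M$ indices have $M\widetilde\gamma^{i-1}\ge 1$. When the first jackpot lands at one of them, the prophet receives $M\widetilde\gamma^{i-1}$, not $1+M\widetilde\gamma^{i-1}$. This is still $o(1)$ (the paper carries the same $O(\log N/N)$ term), so your conclusion is unaffected.
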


By Theorem~\ref{prop:S=n}, for every $\varepsilon>0$ and every $\widetilde{\gamma}\in[0,1)$,
there exists an instance such that
$\mathrm{CR}_\infty(\tau)\le \frac{1}{1+\widetilde{\gamma}}+\varepsilon$.
Therefore, for any sequence $\widetilde{\gamma}_m\uparrow 1$ and any sequence
$\varepsilon_m\downarrow 0$, there exists a sequence of instances
$\{\mathcal I_m\}$ such that for any stopping rule $\tau$,
\[
\limsup_{m\to\infty}\mathrm{CR}_\infty(\tau)\le \frac{1}{2}.
\]

\paragraph{Discontinuous breakdown at the stationarity boundary}
 In the stationary infinite-horizon setting ($\widetilde{\gamma}=1$), when the
reward distribution admits a finite essential supremum\footnote{Otherwise, $\mathbb{E}[\sup_{i\ge 1}X_i]=\infty$, so the prophet benchmark is ill-defined.}, the prophet benchmark is
fully attainable and the optimal competitive ratio equals $1$, whereas for
generic non-identically distributed rewards, the optimal ratio is $1/2$; see Appendix~\ref{app:continuous_opt} for formal statements.
Our analysis shows that introducing arbitrarily weak reward decay
($\widetilde{\gamma}\to 1$) induces a discontinuous
transition.
Although decay vanishes locally in this regime, its cumulative effect in the continuous-decay setting renders the
stopping problem asymptotically most challenging, causing the competitive ratio
to collapse from $1$ to $1/2$.
Consistent with our finite-horizon results, the boundary regime between
stationarity and decay-induced nonstationarity is as hard as the fully general
non-identically distributed setting.

Theorem~\ref{prop:S=n} identifies the fundamental limit in the infinite-horizon
continuous decay regime as $1/(1+\widetilde{\gamma})$ and shows that this bound
applies to all stopping policies.
We now show that our single-threshold algorithm attains this bound in the continuous-decay setting.
The proof is deferred to Appendix~\ref{app:contiuous_lower}.


\begin{theorem}[Continuous decay]\label{prop:continuous_lower}
Consider the continuous decay regime with an infinite sequence of phases
($S=n=\infty$) and a common decay factor $\widetilde{\gamma}\in[0,1)$. Let $(Y_i)_{i\ge1}$ be i.i.d.\ base rewards drawn from an arbitrary distribution $\mathcal{D}$, and the observed rewards are
\[
X_i=\widetilde{\gamma}^{\,i-1}Y_i,\qquad i=1,2,\dots .
\]
In this regime, each phase has unit length and the associated effective horizon is
\[
B_\infty:=\sum_{i=1}^\infty \widetilde{\gamma}^{\,i-1}
=\frac{1}{1-\widetilde{\gamma}}
\qquad (\widetilde{\gamma}<1).
\]
Using the quantile-threshold calibration  with
$\mathbb{P}(X_i>\alpha_i)=1/B_\infty=1-\widetilde{\gamma}$ for $i\ge 1$, the stopping policy $\tau$ of
Algorithm~\ref{alg:phase-threshold} satisfies the competitive-ratio guarantee
\[
\mathrm{CR}_\infty(\tau)\;\ge\;\frac{1}{1+\widetilde{\gamma}},
\qquad 0\le \widetilde{\gamma}<1.
\]
\end{theorem}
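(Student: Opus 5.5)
The plan is a direct computation in the spirit of the proof of Theorem~\ref{thm:lower}: compute $\mathbb{E}[X_\tau]$ exactly, then bound the prophet by the standard truncation argument. The effective horizon $B_\infty=1/(1-\widetilde\gamma)$ plays the role of $B$ in \eqref{eq:lower_bd_bE}.

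\textbf{Step 1: the policy's reward.} Let $\alpha:=\alpha_1$ satisfy $\mathbb{P}(Y>\alpha)=p:=1-\widetilde\gamma$. If $Y$ has atoms, use the generalized quantile with randomized tie-breaking of Remark~\ref{rm:atom_alg}, so that the acceptance event $A_i$ has probability exactly $p$. The recursion $\alpha_{i+1}=\widetilde\gamma\alpha_i$ gives $\alpha_i=\widetilde\gamma^{i-1}\alpha$. Hence $X_i>\alpha_i$ if and only if $Y_i>\alpha$, so the events $A_i$ are i.i.d.\ with probability $p$. By independence,
\[
\mathbb{E}[X_\tau]=\sum_{i\ge1}\widetilde\gamma^{\,i-1}(1-p)^{i-1}\,\mathbb{E}[Y\mathbbm{1}\{A\}]
=\frac{\mathbb{E}[Y\mathbbm{1}\{A\}]}{1-\widetilde\gamma(1-p)}
=\frac{\mathbb{E}[Y\mathbbm{1}\{A\}]}{1-\widetilde\gamma^{2}},
\]
using $1-p=\widetilde\gamma$. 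Because acceptance happens only when $Y\ge\alpha$ and has probability $p$, we also have $\mathbb{E}[Y\mathbbm{1}\{A\}]=p\alpha+\mathbb{E}[(Y-\alpha)_+]$. This holds with ties as well, since on tied outcomes $Y=\alpha$.

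\textbf{Step 2: bounding the prophet.} For every $i$,
\[
\widetilde\gamma^{\,i-1}Y_i\le \widetilde\gamma^{\,i-1}\alpha+\widetilde\gamma^{\,i-1}(Y_i-\alpha)_+\le \alpha+\sum_{j\ge1}\widetilde\gamma^{\,j-1}(Y_j-\alpha)_+ .
\]
Taking the supremum over $i$ and then expectations (monotone convergence) gives
\[
\mathbb{E}\Big[\sup_{i\ge1}X_i\Big]\le \alpha+\frac{\mathbb{E}[(Y-\alpha)_+]}{1-\widetilde\gamma}
=\frac{p\alpha+\mathbb{E}[(Y-\alpha)_+]}{1-\widetilde\gamma}
=\frac{\mathbb{E}[Y\mathbbm{1}\{A\}]}{1-\widetilde\gamma}.
\]
This is exactly the infinite-horizon analogue of \eqref{eq:lower_bd_bE}: $B_\infty$ times the tail moment dominates the prophet. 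In particular the benchmark is finite whenever $\mathcal D$ has finite mean.

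\textbf{Step 3: conclusion.} Dividing the two displays gives
\[
\mathrm{CR}_\infty(\tau)\ge\frac{1-\widetilde\gamma}{1-\widetilde\gamma^2}=\frac{1}{1+\widetilde\gamma}.
\]
The degenerate case $\widetilde\gamma=0$ has $p=1$, so the policy accepts $X_1$, which equals the prophet's value; the bound $1$ is immediate. If $\mathbb{E}[Y\mathbbm{1}\{A\}]=0$, then both quantities vanish and the case is trivial.

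The only delicate point is Step 2. The naive bound $\sup_i\le\sum_i$ must be applied to the excesses $(Y_i-\alpha)_+$ only, not to the full rewards. The discount must be kept inside the excess sum so that it produces the factor $1/(1-\widetilde\gamma)=B_\infty$ rather than an infinite sum. The remaining work is to handle atoms carefully via the randomized tie-breaking.
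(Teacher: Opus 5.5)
Your argument is correct but takes a different route from the paper in Step~2. Step~1 is the same geometric-series computation as in the paper. You also make the atomic case explicit through $\mathbb{E}[Y\mathbbm{1}\{A\}]=p\alpha+\mathbb{E}[(Y-\alpha)_+]$, whereas the paper assumes $\mathcal D$ atomless and defers to Remark~\ref{rm:atom_proof}.

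The paper does not bound $\sup_i X_i$ pathwise. It inserts the effective-horizon quantity $\int_0^\infty(1-F(z)^{B_\infty})\,dz$ and bounds it on both sides. From above it gets $\alpha_1+B_\infty\,\mathbb{E}[(Y-\alpha_1)_+]$ via the fractional truncation inequality $1-u^b\le b(1-u)$. From below it gets $\mathbb{E}[\mathrm{OPT}_\infty]$ via the infinite-stage version of Lemma~\ref{lem:multistage}. That version relies on the concave--convex Jensen sandwich of Lemma~\ref{lem:multistage_gen} plus a monotone-convergence passage $S\to\infty$.

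You instead use the pathwise bound $\sup_i X_i\le\sup_i\alpha_i+\sum_i(X_i-\alpha_i)_+=\alpha_1+\sum_i\widetilde\gamma^{\,i-1}(Y_i-\alpha_1)_+$. This holds because the quantile thresholds $\alpha_i=\widetilde\gamma^{\,i-1}\alpha_1$ are nonincreasing. It reaches the same key inequality $B_\infty\,\mathbb{E}[Y\mathbbm{1}\{A\}]\ge\mathbb{E}[\mathrm{OPT}_\infty]$ in one line, with no non-integer horizon, infinite product, or limiting argument.

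The paper's route buys a more informative intermediate statement: the discounted prophet is dominated by the i.i.d.\ maximum over $B_\infty$ draws. This is the effective-horizon principle the paper uses to explain its calibration. Your route buys simplicity and self-containedness. The same pathwise bound, with $B_G=\sum_s\Gamma_{s-1}n_s$ in place of $B_\infty$, would reprove the prophet bound used in Theorems~\ref{thm:lower} and~\ref{thm:lower2} without invoking Lemma~\ref{lem:multistage} at all.
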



%% file: experiments.tex
\section{Experiments}

\begin{figure}[h]
  \centering
  \begin{minipage}[t]{0.49\linewidth}
    \centering
    \includegraphics[width=\linewidth]{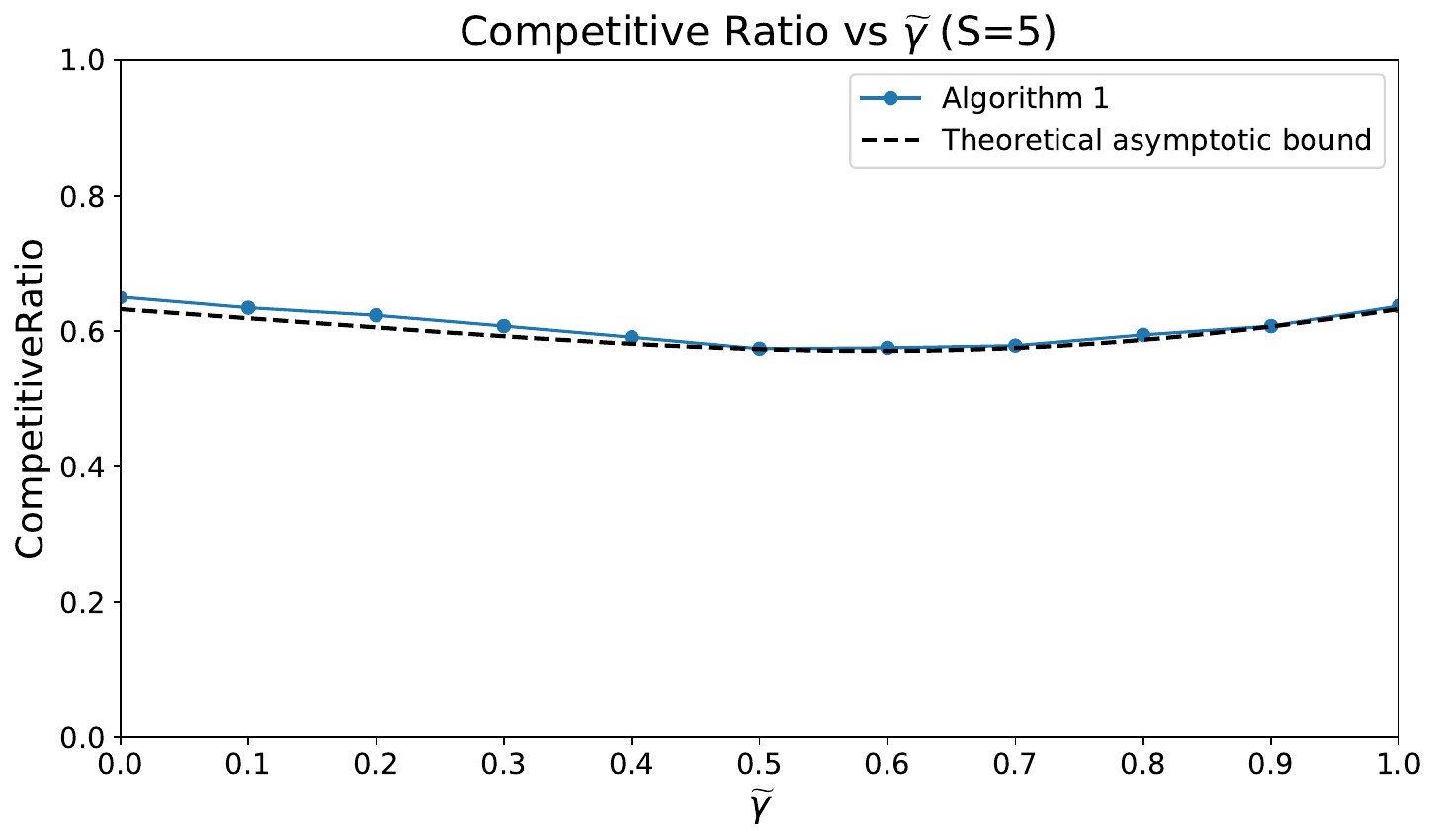}\\[-2mm]
    {\small (a) $S=5$ and $n=100000$}
  \end{minipage}\hfill
    \begin{minipage}[t]{0.49\linewidth}
    \centering
    \includegraphics[width=\linewidth]{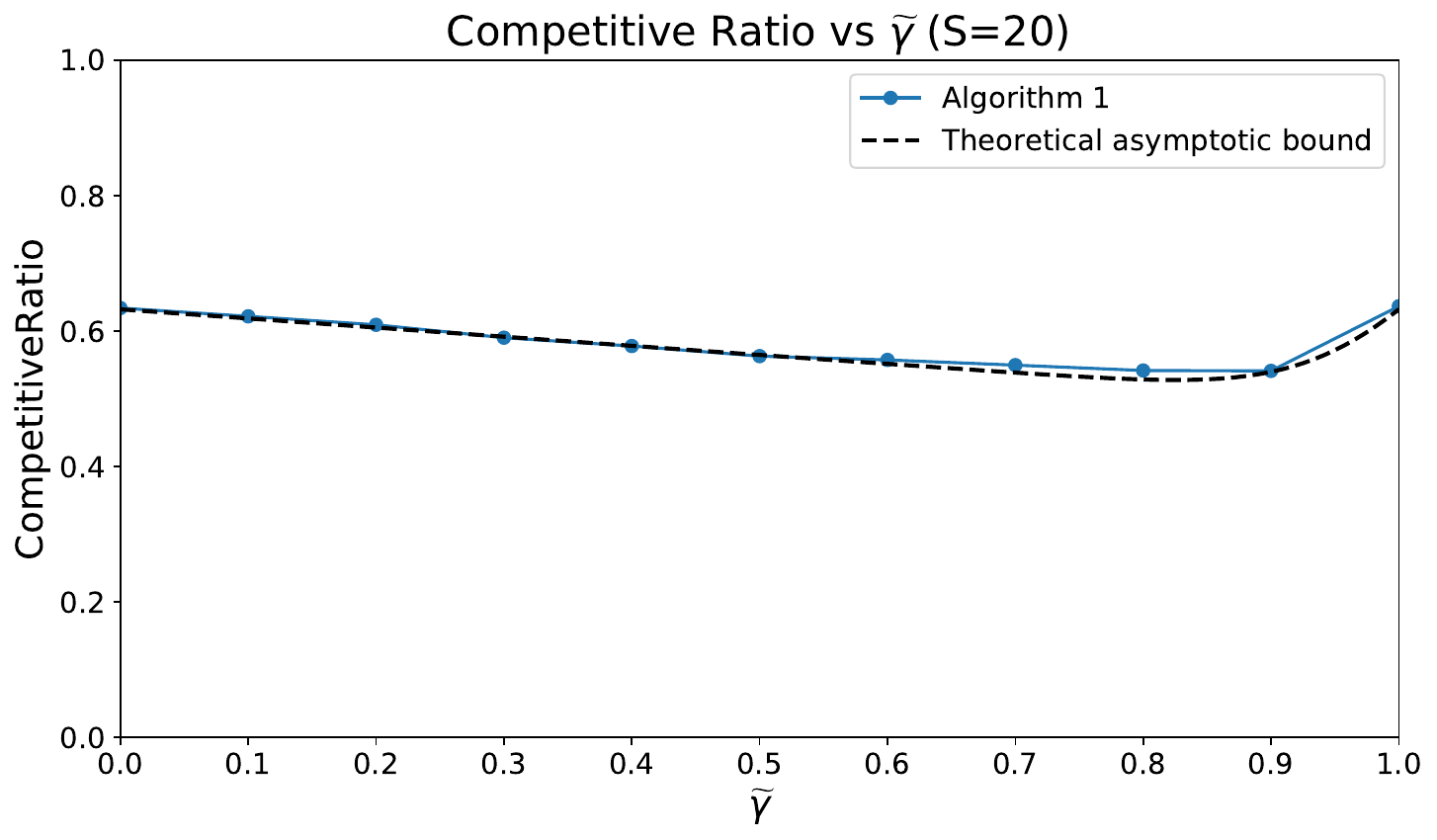}\\[-2mm]
    {\small (b) $S=20$ and $n=100000$}
  \end{minipage}
  \caption{Competitive ratio versus decay factor $\widetilde{\gamma}$.}
  \label{fig:ratio_vs_gamma}
\end{figure}

\begin{figure}[h]
  \centering
    \begin{minipage}[t]{0.49\linewidth}
    \centering
    \includegraphics[width=\linewidth]{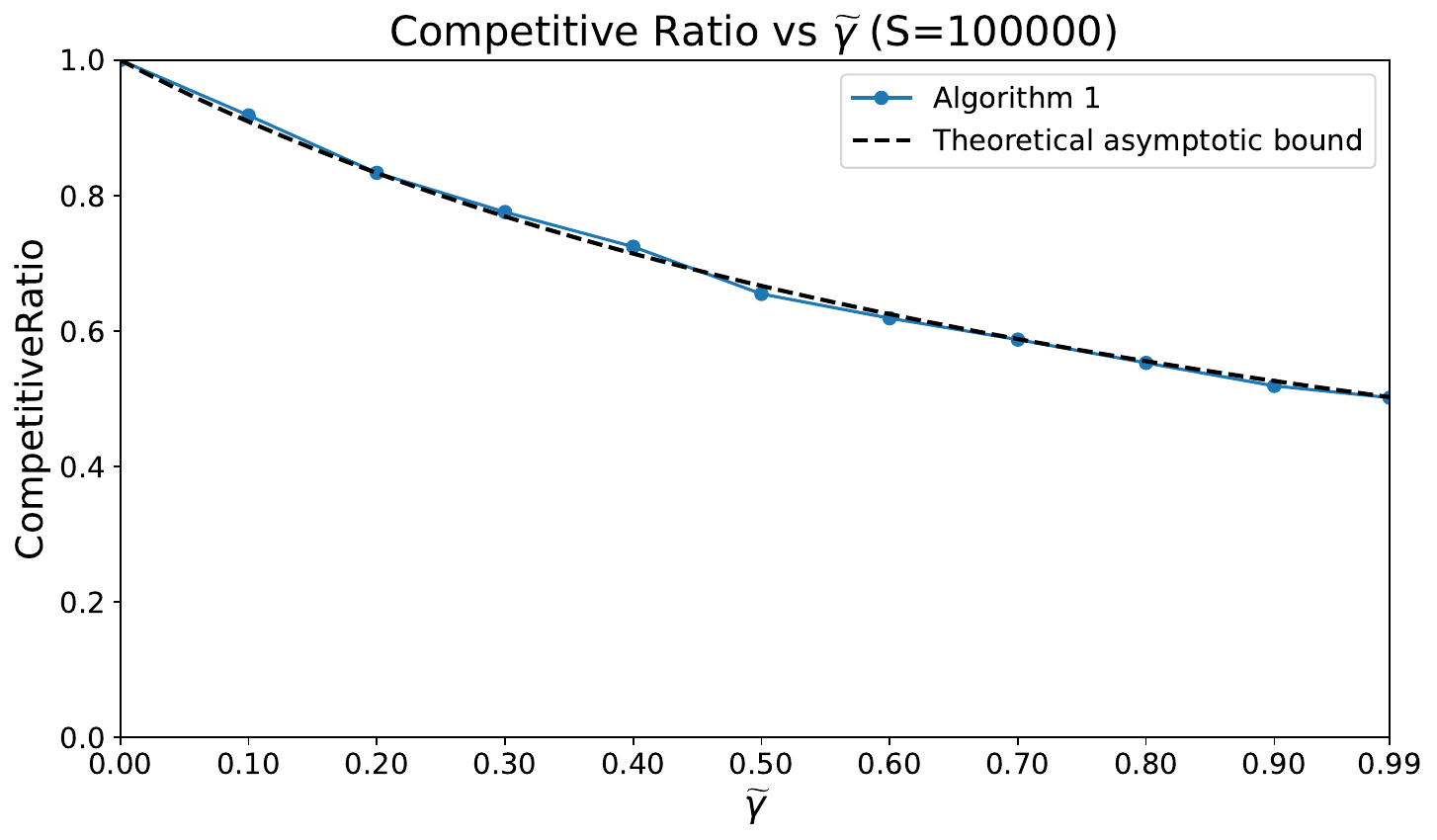}\\[-2mm]
    {\small (c) $S=n=100000$}
  \end{minipage}
  \caption{Competitive ratio versus the decay factor $\widetilde{\gamma}$ in the
  continuous-decay benchmark.}
  \label{fig:ratio_vs_gamma2}
\end{figure}

We present numerical experiments to validate the tightness of the
theoretical bounds and to illustrate their finite-sample behavior.
Motivated by the worst-case instance used to establish the upper bound in
Theorem~\ref{thm:upper}, we generate rewards as follows.
For each $i$, the base reward $Y_i$ takes value $n$ with probability $1/n^2$
and value $0.01$ otherwise.
The observed rewards are then defined as $X_i = \widetilde{\gamma}^{s-1} Y_i  \text{ for } i \in \mathcal{T}_s,$
with a common decay factor $\widetilde{\gamma}$ and equal-length phases $n_s = n/S$. We set $n=100000$.  Figure~\ref{fig:ratio_vs_gamma} reports the competitive ratio of
Algorithm~\ref{alg:phase-threshold} as a function of the decay factor~$\widetilde{\gamma}$
for $S \in \{5,20\}$. Each point is averaged over independent Monte Carlo runs.
We observe that the empirical performance closely matches the asymptotically
tight theoretical bound characterized by Theorem~\ref{cor:upper} and
Theorem~\ref{thm:lower}. 
We additionally conduct experiments for the continuous-decay model with
$n = S = 100000$.
Figure~\ref{fig:ratio_vs_gamma2} shows that the empirical performance again
closely matches the corresponding tight theoretical guarantees
established in Theorems~\ref{prop:S=n} and~\ref{prop:continuous_lower}.

%% file: conclusion.tex
\section{Conclusion}
We studied prophet inequalities under \emph{decaying} reward sequences, where i.i.d.
base rewards are observed through multiplicative discounts, and analyzed the
performance limits of single-threshold stopping rules.
Our results identify a sharp transition from the classical stationary i.i.d.
regime to a structured nonstationary regime induced by discounting.
While the stationary case recovers the optimal \(1-1/e\) guarantee, accumulated
discounting drives the worst-case competitive ratio down to a tight \(1/2\)
barrier, which constitutes the fundamental limit of the decaying-reward model in the canonical setting with a common decay factor and equal-length phases.

On the algorithmic side, we designed a single-threshold strategy that matches our tight upper bounds by calibrating acceptance decisions to an \emph{effective horizon} induced by discounting, and we extended this principle to general phase lengths and heterogeneous discount factors.
As a complementary benchmark, our infinite-horizon continuous decay analysis further highlights the loss of robustness: arbitrarily weak discounting suffices to collapse the attainable competitive ratio from the stationary value $1$ to $1/2$, and the same effective-horizon calibration remains optimal.

Overall, we show that discounting alone can erase the classical advantage of stationarity, fundamentally reshaping the guarantees achievable by single-threshold stopping rules under decaying rewards.

%% file: comparision_value_quantile.tex
\subsection{Strict Separation Between Value and Quantile Thresholds}\label{app:comparison}

Here we provide a proposition to show that the worst-case asymptotic competitive guarantee achievable by single-value threshold policies is strictly smaller than that achievable by single-quantile threshold policies.
\begin{proposition}[Single value-thresholds are dominated by the single-quantile bound]
\label{thm:static-value-dominated-by-quantile}
Fix an integer $S\ge 2$ and a common decay factor $\in(0,1)$.
Assume equal phase lengths,
\[
    n_s/n \to \frac1S,
    \qquad s=1,\ldots,S,
\]
and define
\[
    \Gamma_{s-1}:=\widetilde{\gamma}^{s-1},
    \qquad
    C_{\widetilde{\gamma},S}:=\frac1S\sum_{s=1}^S \widetilde{\gamma}^{s-1}
    =
    \frac{1-\widetilde{\gamma}^S}{S(1-\widetilde{\gamma})}.
\]
Let
\[
    r_{\widetilde{\gamma},S}
    :=
    \frac{1/S}{C_{\widetilde{\gamma},S}}
    =
    \frac{1-\widetilde{\gamma}}{1-\widetilde{\gamma}^S}.
\]
Let $R_q(\widetilde{\gamma},S)$ denote the single-quantile benchmark from Theorem~\ref{thm:lower}:
\[
    R_q(\widetilde{\gamma},S)
    :=
    \frac{
    \left(
    1-\widetilde{\gamma}^S
    \exp\!\left[-\frac{S(1-\widetilde{\gamma})}{1-\widetilde{\gamma}^S}\right]
    \right)
    \left(
    1-\exp\!\left[-\frac{1-\widetilde{\gamma}}{1-\widetilde{\gamma}^S}\right]
    \right)
    }{
    1-\widetilde{\gamma}
    \exp\!\left[-\frac{1-\widetilde{\gamma}}{1-\widetilde{\gamma}^S}\right]
    }.
\]
Then there exists a family of decaying reward distributions such that
\[
    \limsup_{n\to\infty}\sup_{\tau_v\in \Pi_v}\mathrm{CR}_n(\tau_v)
    <
    R_q(\widetilde{\gamma},S).
\]
\end{proposition}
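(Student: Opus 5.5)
The plan is to reuse the two-point hard instance from the proof sketch of Theorem~\ref{cor:upper}. Take $Y_i=n$ with probability $1/n^2$ and $Y_i=y_{\widetilde{\gamma}}$ otherwise, with $X_i=\widetilde{\gamma}^{s-1}Y_i$ on $\mathcal T_s$ and $y_{\widetilde{\gamma}}=y(x_{\widetilde{\gamma}})$ chosen as there. Two facts are already available for this instance. First, the prophet value is $y_{\widetilde{\gamma}}+C_{\widetilde{\gamma},S}+o(1)$. Second, every single-quantile policy with per-stage acceptance probability $p=x/n$ earns at most $f_S(x,y_{\widetilde{\gamma}})+o(1)$, where $\sup_x f_S(x,y_{\widetilde{\gamma}})=f_S(x_{\widetilde{\gamma}},y_{\widetilde{\gamma}})=A_S(x_{\widetilde{\gamma}})=R_q(\widetilde{\gamma},S)$. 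I would first check that $y_{\widetilde{\gamma}}>0$ whenever $S\ge2$ and $\widetilde{\gamma}\in(0,1)$, which I expect to follow from the sign of $A_S'$ and the explicit formula for $A_S$. The goal is then to show that fixed value thresholds on this instance are confined to a strictly smaller set of payoffs, bounded uniformly away from $R_q$.

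\textbf{Step 1: classify value thresholds.} Fix a value threshold $v$, with randomized tie-breaking allowed at atoms. In phase $s$ the small atom sits at $\widetilde{\gamma}^{s-1}y_{\widetilde{\gamma}}$, and these values are strictly decreasing in $s$. This gives three cases.
\begin{itemize}[label={}]
\item \emph{Case (a).} If $v<y_{\widetilde{\gamma}}$, the policy accepts at time $1$ with probability $1-o(1)$. Its ratio tends to $y_{\widetilde{\gamma}}/(y_{\widetilde{\gamma}}+C_{\widetilde{\gamma},S})=f_S(\infty,y_{\widetilde{\gamma}})$.
\item \emph{Case (b).} If $v=y_{\widetilde{\gamma}}$ with tie-breaking probability $q=x/n$ (or any $q$), the small atom is accepted only in phase $1$. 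In phases $s\ge2$ only the big atoms are accepted, with total probability $o(1)$.
\item \emph{Case (c).} If $v>y_{\widetilde{\gamma}}$, or if $v=\widetilde{\gamma}^{s-1}y_{\widetilde{\gamma}}$ for some $s\ge2$, then either the small atom in phase $1$ lies above $v$ and we are back in case (a), or no small atom is ever accepted.
\end{itemize}
Hence, up to $o(1)$, every value policy is either case (a) or the phase-$1$-only family indexed by $x\in[0,\infty]$.

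\textbf{Step 2: bound the phase-$1$-only family.} Repeating the computation of Step~2 of Theorem~\ref{cor:upper} for this family gives the payoff
\[
g(x):=\Bigl(\tfrac1x+y_{\widetilde{\gamma}}\Bigr)\bigl(1-e^{-x/S}\bigr)+e^{-x/S}\sum_{s=2}^S\frac{\widetilde{\gamma}^{s-1}}{S},
\]
to be divided by $y_{\widetilde{\gamma}}+C_{\widetilde{\gamma},S}$. The second term is the big-atom contribution of the later phases, where the survival probability stays $e^{-x/S}$ because the big atoms have probability $n\cdot n^{-2}\to0$. I would then compare $g(x)$ with the numerator $(1/x+y_{\widetilde{\gamma}})A_S(x)$ of $f_S$. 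The difference is
\[
\sum_{s\ge2}\widetilde{\gamma}^{s-1}e^{-x/S}\Bigl[\Bigl(\tfrac1x+y_{\widetilde{\gamma}}\Bigr)e^{-(s-2)x/S}\bigl(1-e^{-x/S}\bigr)-\tfrac1S\Bigr].
\]
This is not obviously signed, so I would avoid a pointwise comparison. Instead I would maximize $g$ directly over the compact set $[0,\infty]$ and show $\max g<(y_{\widetilde{\gamma}}+C_{\widetilde{\gamma},S})A_S(x_{\widetilde{\gamma}})$.

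\textbf{Step 3: conclude.} Case (a) is strictly worse than $R_q$ because $f_S(\cdot,y_{\widetilde{\gamma}})$ is strictly decreasing beyond $x_{\widetilde{\gamma}}$. The limsup over all value policies is the maximum of finitely many continuous functions on compact sets, so a strict pointwise inequality gives a uniform gap and hence the strict inequality in the proposition.

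\textbf{Main obstacle.} The hardest step is proving $\max_x g(x)<(y_{\widetilde{\gamma}}+C_{\widetilde{\gamma},S})R_q$. My first attempt is an interpolating argument. I would show that $g(x)$ is bounded by the payoff of a \emph{quantile} policy that uses the same $x$ in phase $1$ and acceptance probability $0^+$ afterwards. I would then argue that the map from such phase-wise acceptance profiles to payoff is strictly maximized only at the constant profile $x_{\widetilde{\gamma}}$, because $y_{\widetilde{\gamma}}>0$ makes accepting small atoms in phase $2$ strictly profitable near the optimum. If that argument does not go through, I would fall back on explicit calculus for $g$: it is a smooth one-variable function with limits at $0$ and $\infty$ that are easy to compute.
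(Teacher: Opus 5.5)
Your reduction is the same as the paper's: the same two-point instance and the same trichotomy for value thresholds. Your $g(x)$ is exactly the paper's bound for the tie case $\alpha=y$, with $d=x$. The gap is the step you flag as the main obstacle, namely $\sup_{x\in[0,\infty]}g(x)<(y_{\widetilde\gamma}+C_{\widetilde\gamma,S})R_q$, and that step is the entire quantitative content of the proposition. Your interpolation argument relies on a false claim.

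Theorem~\ref{cor:upper} optimizes only over \emph{constant} quantile profiles. It is not true that the constant profile $x_{\widetilde\gamma}$ maximizes the payoff over phase-wise profiles $(x_1,\dots,x_S)$. Take $S=2$ and $\widetilde\gamma=1/2$, so that $C_{\widetilde\gamma,S}=3/4$, $x_{\widetilde\gamma}=4/3$, $y_{\widetilde\gamma}\approx0.572$ and $R_q\approx0.611$. Consider the profile that accepts only the large atom in phase $1$ and keeps $x_{\widetilde\gamma}$ in phase $2$. Its limiting payoff is $\tfrac12+\widetilde\gamma\bigl(\tfrac1{x_{\widetilde\gamma}}+y_{\widetilde\gamma}\bigr)\bigl(1-e^{-x_{\widetilde\gamma}/2}\bigr)\approx0.822$, a ratio of about $0.621>R_q$. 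So bounding $g$ by a phase-wise profile and comparing with the constant profile cannot give the strict inequality. If you restrict to profiles of the form $(x,0^+,\dots,0^+)$, the claim is just the inequality you are trying to prove.

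The fallback (``explicit calculus'') is not a plan as it stands. The value $y_{\widetilde\gamma}$ is defined only implicitly, through $A_S/A_S'$ at $x_{\widetilde\gamma}$, and there is little room. For large $S$ and $\widetilde\gamma$ near $1$, the rare-only policy ($x=0$) alone has ratio $1/(1+z)$ with $z=y_{\widetilde\gamma}/C_{\widetilde\gamma,S}$. One can check that this agrees with $R_q$ to first order in $1-\widetilde\gamma$. So crude estimates will not do, and you need an exact inequality you have not supplied. The inequality may well hold for $y_{\widetilde\gamma}$, but you have given no argument for it.

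The missing idea in the paper is to stop reusing $y_{\widetilde\gamma}$ and instead choose $y$ for the value-threshold family. Set $z=y/C_{\widetilde\gamma,S}$, $x=d/S$ and $r=r_{\widetilde\gamma,S}$. Then every case is covered by
\[
F_r(z,x)=\frac{z(1-e^{-x})+r\frac{1-e^{-x}}{x}+(1-r)e^{-x}}{z+1},
\]
where $x=\infty$ gives immediate acceptance and $x=0$ gives rare-only. The $x$-derivative of the numerator is $e^{-x}\bigl(z-h_r(x)\bigr)$, with $h_r(x)=1-r+r\frac{e^x-x-1}{x^2}$ increasing. Choosing $z_0=h_r(r)$ therefore places the maximum over $x$ exactly at $x=r$. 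The bound $F_r(z_0,r)\le Q(r):=\frac{1-e^{-r}}{1-(1-r)e^{-r}}$ then reduces to $2\sinh(r/2)\ge r$. Finally, $Q(r)<R_q$ follows in three steps:
\begin{itemize}
\item write $R_q=(1-e^{-r})\frac{1-a^S}{1-a}$ with $a=\widetilde\gamma e^{-r}$;
\item use the identity $\widetilde\gamma-1+r=r\widetilde\gamma^S$ together with $rS>1$;
\item apply the chord bound $e^r<1+(e-1)r$.
\end{itemize}
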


\begin{proof}
We use the same two-point hard instance as in the single-quantile upper
bound. For each $n$, let
\[
    q_n:=\frac1{n^2},
\]
and let $Y_1,\ldots,Y_n$ be i.i.d. with
\[
    Y_i=
    \begin{cases}
    n, & \text{with probability }q_n,\\
    y, & \text{with probability }1-q_n,
    \end{cases}
\]
where $y>0$ is a parameter to be chosen later. For $i\in T_s$, define
\[
    X_i=\widetilde{\gamma}^{s-1}Y_i.
\]

We first compute the prophet benchmark. If no rare value appears, then
the prophet obtains $y$, since the largest common observed value occurs
in the first phase. A rare value in phase $s$ has observed value
$\widetilde{\gamma}^{s-1}n$. Since the probability that a rare value appears in
phase $s$ is
\[
    \frac{n_s}{n^2}+o(1/n)
    =
    \frac{1}{Sn}+o(1/n),
\]
the rare values contribute asymptotically
\[
    \sum_{s=1}^S \widetilde{\gamma}^{s-1}n
    \left(\frac{1}{Sn}+o(1/n)\right)
    =
    C_{\widetilde{\gamma},S}+o(1).
\]
Therefore,
\[
    \mathbb{E}[\mathrm{OPT}]
    =
    y+C_{\widetilde{\gamma},S}+o(1).
\]

Now consider an arbitrary static observed-value threshold policy. Such a
policy fixes a single value threshold $\alpha$ and accepts the first
observation above $\alpha$, with arbitrary tie-breaking at atoms.

If $\alpha<y$, then the policy stops immediately at time $1$, because
$X_1=Y_1\ge y>\alpha$. Hence
\[
    \mathbb{E}[X_{\tau_v}]
    =
    y+o(1).
\]
If $\alpha>y$, then no common value can ever be accepted, because all
common observed values are at most $y$. Thus the policy can only collect
rare values, and its expected reward is at most the value of accepting
every rare value when it first appears:
\[
    \mathbb{E}[X_{\tau_v}]
    \le
    C_{\widetilde{\gamma},S}+o(1).
\]

It remains to consider the boundary case $\alpha=y$. Let $\rho_n\in[0,1]$
be the tie-breaking probability at the value $y$. The value $y$ occurs as
a common observed value only in phase $1$, since $\widetilde{\gamma}<1$ implies
$\widetilde{\gamma}^{s-1}y<y$ for all $s\ge2$.

Take any subsequence along which
\[
    n\rho_n\to d\in[0,\infty].
\]
For finite $d$, the probability of accepting a common value $y$ during
phase $1$ converges to
\[
 1-(1-(1-1/n^2)\rho_n)^{n_1} \to  1-e^{-d/S}.
\]
Thus the common-value contribution is
\[
    y(1-e^{-d/S}).
\]
Rare values in phase $1$ contribute
\[
    \frac{1}{n}\sum_{i=1}^{n/S}e^{-di/n}\to\int_0^{1/S} e^{-dt}\,dt
    =
    \frac{1-e^{-d/S}}{d},
\]
with the convention that this equals $1/S$ at $d=0$. Conditional on
surviving all common-value ties in phase $1$, which has limiting
probability $e^{-d/S}$, the expected rare-value contribution from phases
$2,\ldots,S$ is
\[
    C_{\widetilde{\gamma},S}-\frac1S.
\]
Therefore, for $\alpha=y$,
\[
    \limsup_{n\to\infty}
    \mathbb{E}[X_{\tau_v}]
    \le
    y(1-e^{-d/S})
    +
    \frac{1-e^{-d/S}}{d}
    +
    e^{-d/S}\left(C_{\widetilde{\gamma},S}-\frac1S\right).
\]
The endpoint $d=\infty$ gives the immediate-accept value $y$, and
$d=0$ gives the rare-only value $C_{\widetilde{\gamma},S}$. Hence all threshold
regimes are covered by
\[
    \limsup_{n\to\infty}
    \sup_{\tau_v}
    \mathbb{E}[X_{\tau_v}]
    \le
    \sup_{d\ge0}
    \left\{
    y(1-e^{-d/S})
    +
    \frac{1-e^{-d/S}}{d}
    +
    e^{-d/S}\left(C_{\widetilde{\gamma},S}-\frac1S\right)
    \right\}.
\]

Set
\[
    C:=C_{\widetilde{\gamma},S},
    \qquad
    r:=r_{\widetilde{\gamma},S}=\frac{1/S}{C},
    \qquad
    z:=\frac{y}{C},
    \qquad
    x:=\frac{d}{S}.
\]
After dividing by the prophet benchmark $y+C+o(1)$, the static-value
competitive ratio is bounded by
\[
    \sup_{x\ge0} F_r(z,x),
\]
where
\[
    F_r(z,x)
    :=
    \frac{
    z(1-e^{-x})
    +
    r\frac{1-e^{-x}}{x}
    +
    (1-r)e^{-x}
    }{
    z+1
    },
\]
with the convention $(1-e^{-x})/x=1$ at $x=0$. Optimizing over the hard
instance parameter $y$, equivalently over $z>0$, gives
\[
    \overline R_v^{\mathrm{stat}}(\widetilde{\gamma},S)
    :=
    \inf_{z>0}\sup_{x\ge0}F_r(z,x).
\]

We now compute this optimized upper bound. Define
\[
    a(x):=1-e^{-x},
\]
and
\[
    b_r(x):=
    r\frac{1-e^{-x}}{x}
    +
    (1-r)e^{-x}.
\]
Then
\[
    F_r(z,x)=\frac{za(x)+b_r(x)}{z+1}.
\]
The function $a$ is strictly increasing, while $b_r$ is strictly
decreasing. Moreover,
\[
    a(0)=0,\qquad b_r(0)=1,
\]
and
\[
    \lim_{x\to\infty}a(x)=1,
    \qquad
    \lim_{x\to\infty}b_r(x)=0.
\]
Hence there is a unique $x_r\in(0,\infty)$ satisfying
\[
    a(x_r)=b_r(x_r).
\]
This equation is exactly
\[
    r\left(
        \frac{1-e^{-x_r}}{x_r}-e^{-x_r}
    \right)
    =
    1-2e^{-x_r}.
\]
Since the left-hand side minus the right-hand side is positive at
$x=\ln2$ and nonpositive at $x=1$, we have
\[
    x_r\in[\ln2,1].
\]

For every $z>0$,
\[
    F_r(z,x_r)
    =
    \frac{za(x_r)+b_r(x_r)}{z+1}
    =
    a(x_r).
\]
Therefore,
\[
    \sup_{x\ge0}F_r(z,x)\ge a(x_r)
    \qquad
    \text{for every }z>0.
\]
Thus
\[
    \inf_{z>0}\sup_{x\ge0}F_r(z,x)\ge a(x_r).
\]

To prove the reverse inequality, define
\[
    h_r(x):=
    1-r
    +
    r\frac{e^x-x-1}{x^2}.
\]
A direct calculation gives
\[
    \frac{\partial}{\partial x}
    \bigl(za(x)+b_r(x)\bigr)
    =
    e^{-x}\bigl(z-h_r(x)\bigr).
\]
Moreover, $h_r$ is increasing on $(0,\infty)$ because
\[
    h_r'(x)
    =
    r\frac{e^x(x-2)+x+2}{x^3}
    \ge0,
\]
where the numerator is nonnegative since the function
\[
    \phi(x):=e^x(x-2)+x+2
\]
satisfies $\phi(0)=0$, $\phi'(0)=0$, and
\[
    \phi''(x)=xe^x\ge0.
\]
Now choose
\[
    z^\star:=h_r(x_r).
\]
Then $z^\star a(x)+b_r(x)$ is increasing on $(0,x_r)$ and decreasing on
$(x_r,\infty)$. Hence
\[
    \sup_{x\ge0}F_r(z^\star,x)
    =
    F_r(z^\star,x_r)
    =
    a(x_r).
\]
Therefore,
\[
    \overline R_v^{\mathrm{stat}}(\widetilde{\gamma},S)
    =
    a(x_r)
    =
    1-e^{-x_r}.
\]

It remains to compare this static-value upper bound with the
single-quantile benchmark. We first prove a slightly stronger
intermediate bound. Define
\[
    Q(r):=
    \frac{1-e^{-r}}{1-(1-r)e^{-r}}.
\]
We claim that
\[
    \overline R_v^{\mathrm{stat}}(\widetilde{\gamma},S)
    \le
    Q(r).
\]
Since
\[
    \overline R_v^{\mathrm{stat}}(\widetilde{\gamma},S)
    =
    \inf_{z>0}\sup_{x\ge0}F_r(z,x),
\]
it suffices to exhibit one value of $z$ whose supremum is at most
$Q(r)$. Let
\[
    z_0:=h_r(r)
    =
    1-r+r\frac{e^r-r-1}{r^2}
    =
    \frac{e^r-1-r^2}{r}.
\]
Since $e^r\ge1+r$, we have $z_0\ge0$. As above, because $h_r$ is
increasing, the function $z_0a(x)+b_r(x)$ is maximized at $x=r$.
Therefore,
\[
    \sup_{x\ge0}F_r(z_0,x)
    =
    F_r(z_0,r).
\]
A direct simplification yields
\[
    Q(r)-F_r(z_0,r)
    =
    \frac{
    (1-r)e^{-r}\bigl((e^r-1)^2-r^2e^r\bigr)
    }{
    (e^r-1+r)(e^r-1+r-r^2)
    }.
\]
The denominator is positive. The numerator is nonnegative because
\[
    (e^r-1)^2\ge r^2e^r
\]
is equivalent to
\[
    e^{r/2}-e^{-r/2}\ge r,
\]
which follows from $\sinh u\ge u$ with $u=r/2$. Hence
\[
    F_r(z_0,r)\le Q(r),
\]
and therefore
\[
    \overline R_v^{\mathrm{stat}}(\widetilde{\gamma},S)
    \le Q(r).
\]

Finally, we show that
\[
    Q(r)< R_q(\widetilde{\gamma},S).
\]
Using
\[
    r=\frac{1-\widetilde{\gamma}}{1-\widetilde{\gamma}^S},
\]
the single-quantile bound can be rewritten as
\[
    R_q(\widetilde{\gamma},S)
    =
    (1-e^{-r})
    \frac{1-(\widetilde{\gamma} e^{-r})^S}{1-\widetilde{\gamma} e^{-r}}.
\]
Let
\[
    a:=\widetilde{\gamma} e^{-r},
    \qquad
    b:=(1-r)e^{-r}.
\]
Then
\[
    R_q(\widetilde{\gamma},S)
    =
    (1-e^{-r})\frac{1-a^S}{1-a},
    \qquad
    Q(r)
    =
    \frac{1-e^{-r}}{1-b}.
\]
Thus it suffices to prove
\[
    \frac{1-a^S}{1-a}
    \ge
    \frac1{1-b}.
\]
This is equivalent to
\[
    a-b\ge (1-b)a^S.
\]
Since
\[
    r(1-\widetilde{\gamma}^S)=1-\widetilde{\gamma},
\]
we have
\[
    \widetilde{\gamma}-1+r=r\widetilde{\gamma}^S.
\]
Therefore,
\[
    a-b
    =
    (\widetilde{\gamma}-1+r)e^{-r}
    =
    r\widetilde{\gamma}^S e^{-r}.
\]
Also,
\[
    a^S=\widetilde{\gamma}^S e^{-rS}.
\]
Hence it remains to show
\[
    r e^{-r}
    >
    \left(1-(1-r)e^{-r}\right)e^{-rS}.
\]
Because
\[
    r=\frac{1}{1+\widetilde{\gamma}+\cdots+\widetilde{\gamma}^{S-1}},
\]
we have $r>1/S$, and therefore $rS>1$. Thus $e^{-rS}< e^{-1}$.
It is enough to prove
\[
    r e^{-r}
    >
    \frac1e\left(1-(1-r)e^{-r}\right).
\]
Multiplying by $e^r$ and then by $e$, this is equivalent to
\[
    er> e^r-1+r.
\]
But this follows from the convexity chord bound
\[
    e^r< (1-r)e^0+re^1=1+(e-1)r,
    \qquad r\in(0,1).
\]
Therefore,
\[
    Q(r)< R_q(\widetilde{\gamma},S).
\]

Combining the inequalities, we obtain
\[
    \overline R_v^{\mathrm{stat}}(\widetilde{\gamma},S)
    =
    1-e^{-x_{\widetilde{\gamma},S}}
    \le
    Q(r_{\widetilde{\gamma},S})
    <
    R_q(\widetilde{\gamma},S).
\]
This completes the proof.
\end{proof}

%% file: proof_upper.tex
\subsection{Proof of Theorem~\ref{cor:upper}: Upper bound for common decay and equal length phases}

\label{app:upper_adaptive}



Consider a horizon of $n$ stages and partition it into $S$ consecutive phases of equal length
(assume $S \mid n$ for simplicity; rounding effects do not affect the asymptotic bounds):
\[
\mathcal{T}_s
:=\Bigl\{(s-1)\tfrac{n}{S}+1,\dots, s\tfrac{n}{S}\Bigr\},
\qquad s\in[S].
\]
Let $Y_1,\dots,Y_n$ be i.i.d.\ base rewards and define the observed rewards by a phase-dependent decay:
\[
X_i:=\widetilde{\gamma}^{\,s-1}Y_i,
\qquad i\in\mathcal{T}_s,
\]
where $\widetilde{\gamma}\in[0,1]$. (In particular, the first phase is unscaled since $\widetilde{\gamma}^{0}=1$.)

Fix $0\le \widetilde{\gamma}\le 1$ and choose a fixed parameter $y_{\widetilde{\gamma}}\in[0,n)$ (specified later). Let
\[
Y_i \sim 
\begin{cases}
n, & \text{with probability } \tfrac{1}{n^2},\\[3pt]
y_{\widetilde{\gamma}}, & \text{with probability } 1-\tfrac{1}{n^2},
\end{cases}
\qquad i=1,\dots,n.
\]
For fixed $S\ge1$, $\widetilde{\gamma}\in [0,1]$ and $y_{\widetilde{\gamma}}\ge0$ (will be specified later), for large enough $n$, we have  \begin{equation}\label{eq:sep-condition}
\widetilde{\gamma}^{S-1}n \ \ge\ y_{\widetilde{\gamma}},
\end{equation}
so that every rare value $\widetilde{\gamma}^{s-1}n$ dominates the common value $y_{\widetilde{\gamma}}$ (and therefore also dominates
all scaled common values $\widetilde{\gamma}^{s-1}y_{\widetilde{\gamma}}$).
Let
\[
M:=\max_{1\le i\le n}X_i
\]
denote the offline optimum.

\medskip\noindent\textbf{Expected optimal reward.}
Let
\[
q_n:=\frac{1}{n^2},
\qquad
m:=\frac{n}{S},
\qquad
a_n:=(1-q_n)^m.
\]
For each \(s\in[S]\), let \(E_s\) denote the event that the first phase
containing a rare draw is phase \(s\); that is, no rare draw occurs in
phases \(1,\ldots,s-1\), and at least one rare draw occurs in phase \(s\).
Then
\[
\mathbb P(E_s)
=
a_n^{s-1}(1-a_n).
\]
Together with the event \(E_0\) that no rare draw occurs over the entire
horizon, where \(\mathbb P(E_0)=a_n^S\), the events
\(E_0,E_1,\ldots,E_S\) form a partition.

On \(E_s\), the earliest rare draw has the largest discount among all
rare draws, and hence
\[
M=\max\!\left\{y_{\widetilde{\gamma}},
\widetilde{\gamma}^{s-1}n\right\}.
\]
On \(E_0\), we have \(M=y_{\widetilde{\gamma}}\). Therefore, for every
\(0\le \widetilde{\gamma}\le 1\),
\begin{equation}\label{eq:opt-finite-n}
\mathbb E[M]
=
(1-a_n)\sum_{s=1}^{S}
a_n^{s-1}
\max\!\left\{y_{\widetilde{\gamma}},
\widetilde{\gamma}^{s-1}n\right\}
+
y_{\widetilde{\gamma}}a_n^S .
\end{equation}

Suppose first that \(\widetilde{\gamma}>0\). Since
\(y_{\widetilde{\gamma}}\) is fixed, for all sufficiently large \(n\),
\[
\widetilde{\gamma}^{S-1}n\ge y_{\widetilde{\gamma}}.
\]
Consequently, \eqref{eq:opt-finite-n} becomes
\[
\mathbb E[M]
=
n(1-a_n)\sum_{s=1}^{S}
\widetilde{\gamma}^{s-1}a_n^{s-1}
+
y_{\widetilde{\gamma}}a_n^S .
\]

When \(\widetilde{\gamma}=0\), only a rare draw in the first phase can
exceed \(y_{\widetilde{\gamma}}\). Hence
\[
\mathbb E[M]
=
n(1-a_n)
+
y_{\widetilde{\gamma}}
\left(
(1-a_n)\sum_{s=2}^{S}a_n^{s-1}+a_n^S
\right)
=
n(1-a_n)+y_{\widetilde{\gamma}}a_n.
\]

Finally,
\[
a_n
=
\left(1-\frac1{n^2}\right)^{n/S}
=
1-\frac{1}{Sn}+O\!\left(\frac1{n^2}\right),
\]
and therefore
\[
n(1-a_n)=\frac1S+O\!\left(\frac1n\right),
\qquad
a_n^k=1+O\!\left(\frac1n\right)
\quad\text{for each fixed }k.
\]
It follows in both cases that
\begin{equation}\label{eq:opt-asympt}
\mathbb E[M]
=
y_{\widetilde{\gamma}}
+
\frac1S\sum_{s=1}^{S}\widetilde{\gamma}^{s-1}
+
o(1)
=
y_{\widetilde{\gamma}}+c_{\widetilde{\gamma}}+o(1),
\end{equation}
where
\[
c_{\widetilde{\gamma}}
:=
\frac1S\sum_{s=1}^{S}\widetilde{\gamma}^{s-1}
=
\begin{cases}
\dfrac{1-\widetilde{\gamma}^{S}}
      {S(1-\widetilde{\gamma})},
& 0\le \widetilde{\gamma}<1,\\[1.2ex]
1,
& \widetilde{\gamma}=1.
\end{cases}
\]
\medskip\textbf{Expected performance of a single-threshold algorithm.}
We consider a single-threshold policy with $0 \le p \le 1$ that accepts $X_i$ if $\mathbb{P}(X_i \ge \alpha_s) = p$ for some $\alpha_s \ge 0$ for $s\in[S]$. Let $\tau$ be the (random) acceptance time and write $R_s(p):=\mathbb{E}[X_i\mid\text{accept at }i\in\mathcal{T}_s]$.
Because $Y_i$ is two-point, the conditional reward has an explicit form. If $0\le p\le 1/n^2$,
the algorithm can only accept the rare value, so
\[
R_s(p)=\widetilde{\gamma}^{s-1}n.
\]
If $p>1/n^2$, the algorithm accepts the rare value whenever it occurs and accepts the common value
with additional probability to reach total acceptance probability $p$, giving
\[
R_s(p)
=\frac{1}{p}\Bigl(\frac{\widetilde{\gamma}^{s-1}n}{n^2}+\widetilde{\gamma}^{s-1}y_{\widetilde{\gamma}}\bigl(p-\tfrac1{n^2}\bigr)\Bigr).
\]
In either case we have the uniform upper bound
\begin{equation}\label{eq:pRp-bound}
p\,R_s(p)\ \le\ \frac{\widetilde{\gamma}^{s-1}}{n}+\widetilde{\gamma}^{s-1}y_{\widetilde{\gamma}}\,p.
\end{equation}
Since the per-stage acceptance probability is $p$, we have $\mathbb{P}(\tau=i)=(1-p)^{i-1}p$ and hence
\begin{align}
\mathbb{E}[X_\tau]
&=\sum_{i=1}^n p\,R_{s}(p)\,(1-p)^{i-1}\nonumber\\
&=\sum_{s=1}^S\sum_{i\in\mathcal{T}_s} p\,R_s(p)\,(1-p)^{i-1}\nonumber\\
&\le \sum_{s=1}^S\Bigl(\frac{\widetilde{\gamma}^{s-1}}{n}+\widetilde{\gamma}^{s-1}y_{\widetilde{\gamma}}\,p\Bigr)
(1-p)^{(s-1)n/S}\sum_{j=1}^{n/S}(1-p)^{j-1}\nonumber\\
&=\sum_{s=1}^S\Bigl(\frac{\widetilde{\gamma}^{s-1}}{n}+\widetilde{\gamma}^{s-1}y_{\widetilde{\gamma}}\,p\Bigr)
(1-p)^{(s-1)n/S}\,\frac{1-(1-p)^{n/S}}{p}.\label{eq:alg-finite-n}
\end{align}
As shown in Lemma~\ref{lem:reduce_to_px_over_n}, $p=\Theta(1/n)$ for the best single threshold policy.
We set $p=x/n$ with fixed $x>0$, which may depend on the reward distributions.
Letting $n\to\infty$ in \eqref{eq:alg-finite-n} yields
\begin{align}
\mathbb{E}[X_\tau]
&\le \Bigl(\frac1x+y_{\widetilde{\gamma}}\Bigr)\sum_{s=1}^S \widetilde{\gamma}^{s-1}
\exp\!\Bigl(-\frac{(s-1)x}{S}\Bigr)\Bigl(1-e^{-x/S}\Bigr)\nonumber\\
&=\Bigl(\frac1x+y_{\widetilde{\gamma}}\Bigr)\,
\frac{1-\widetilde{\gamma}^S e^{-x}}{1-\widetilde{\gamma} e^{-x/S}}\Bigl(1-e^{-x/S}\Bigr).\label{eq:alg-limit}
\end{align}
Dividing \eqref{eq:alg-limit} by \eqref{eq:opt-asympt} (and ignoring the $o(1)$ terms) gives the
asymptotic competitive-ratio bound
\[
\mathrm{CR}\ \le\ 
\frac{\bigl(\frac1x+y_{\widetilde{\gamma}}\bigr)\,A_S(x)}{y_{\widetilde{\gamma}}+c_{\widetilde{\gamma}} }
=:f_S(x,y_{\widetilde{\gamma}}),
\]
where $A_S$ and $c_S$ are defined in \eqref{eq:def-AS-equal}--\eqref{eq:def-cS} below.
We now compute the minimax value of $\inf_{y\ge0}\sup_{x>0}f_S(x,y)$
under equal phase lengths.

\medskip\textbf{Minimax upper bound for $\inf_{y\ge0}\sup_{x>0} f_S(x,y)$.}
Recall that
\begin{equation}\label{eq:def-fS-minimax}
f_S(x,y)
:=
\frac{\bigl(\frac1x+y\bigr)\,A_S(x)}{y+c_{\widetilde{\gamma}} },
\qquad x>0,\;y\ge 0,
\end{equation}
where
\begin{align}\label{eq:def-AS-equal}
A_S(x)
&:=
\sum_{s=1}^S \widetilde{\gamma}^{s-1}\exp\!\Bigl(-\frac{(s-1)x}{S}\Bigr)\Bigl(1-e^{-x/S}\Bigr),
\qquad x>0,
\end{align}
and
\begin{equation}\label{eq:def-cS}
c_{\widetilde{\gamma}} 
:=
\frac{1}{S}\sum_{s=1}^S \widetilde{\gamma}^{s-1}
=
\begin{cases}
\displaystyle \frac{1-\widetilde{\gamma}^S}{S(1-\widetilde{\gamma})}, & 0\le\widetilde{\gamma}<1,\\[6pt]
1,&\widetilde{\gamma}=1.
\end{cases}
\end{equation}
We also define the special point
\begin{equation}\label{eq:def-xgamma-equal}
x_{\widetilde{\gamma}}
:=\frac{1}{c_{\widetilde{\gamma}} }
=
\begin{cases}
\displaystyle \frac{S(1-\widetilde{\gamma})}{1-\widetilde{\gamma}^S}, & 0\le\widetilde{\gamma}<1,\\[6pt]
1,&\widetilde{\gamma}=1.
\end{cases}
\end{equation}

\medskip
\noindent\textit{Step 1: Construct $y_{\widetilde{\gamma}}$ such that $x_{\widetilde{\gamma}}$ is a global maximizer.}
Define
\begin{equation}\label{eq:def-yx-generalS}
y(x)
:=
\frac{A_S(x)}{x^2A_S'(x)}-\frac{1}{x},
\qquad x>0,
\end{equation}
and set
\begin{equation}\label{eq:def-ygamma-generalS}
y_{\widetilde{\gamma}}:=y(x_{\widetilde{\gamma}}).
\end{equation}
We will prove:
\begin{enumerate}
\item[(i)] $y_{\widetilde{\gamma}}\ge 0$;
\item[(ii)] $x\mapsto f_S(x,y_{\widetilde{\gamma}})$ attains its global maximum at $x=x_{\widetilde{\gamma}}$.
\end{enumerate}
Granting these facts,
\[
\sup_{x>0} f_S(x,y_{\widetilde{\gamma}})=f_S(x_{\widetilde{\gamma}},y_{\widetilde{\gamma}})=A_S(x_{\widetilde{\gamma}}),
\]
and hence
\begin{equation}\label{eq:upper-minimax}
\inf_{y\ge0}\sup_{x>0} f_S(x,y)
\ \le\
\sup_{x>0} f_S(x,y_{\widetilde{\gamma}})
=
A_S(x_{\widetilde{\gamma}}).
\end{equation}
It remains to justify (i)--(ii).

\medskip
\noindent\textit{Step 2: $y_{\widetilde{\gamma}}\ge0$ (concavity argument).}
From \eqref{eq:def-AS-equal},
\[
A_S(x)
=\sum_{s=1}^S \widetilde{\gamma}^{s-1}e^{-(s-1)x/S}-\sum_{s=1}^S \widetilde{\gamma}^{s-1}e^{-sx/S}
=\sum_{j=0}^S m_j e^{-xj/S},
\]
where $m_0=1$ and $m_j\le 0$ for all $j\ge 1$. Differentiating twice gives $A_S''(x)\le 0$ for $x>0$,
and also $A_S(0)=0$. Hence $A_S$ is concave on $(0,\infty)$, and the standard concavity inequality yields
\[
A_S(x)\ \ge\ xA_S'(x)\qquad (x>0).
\]
Therefore,
\[
y(x)=\frac{A_S(x)-xA_S'(x)}{x^2A_S'(x)}\ \ge\ 0,
\]
since $A_S'(x)>0$ for $x>0$. In particular, $y_{\widetilde{\gamma}}=y(x_{\widetilde{\gamma}})\ge0$.

\medskip
\noindent\textit{Step 3: Derivative factorization and reduction to monotonicity of $y(x)$.}
Differentiate \eqref{eq:def-fS-minimax} in $x$ (for fixed $y$):
\[
\frac{\partial}{\partial x}f_S(x,y)
=
\frac{1}{y+c_{\widetilde{\gamma}} }
\Big(
A_S'(x)\big(\tfrac1x+y\big)-\tfrac{A_S(x)}{x^2}
\Big).
\]
Using \eqref{eq:def-yx-generalS}, the bracket equals $A_S'(x)\,(y-y(x))$, so
\begin{equation}\label{eq:sign-derivative-factor}
\operatorname{sign}\!\Big(\frac{\partial}{\partial x}f_S(x,y)\Big)
=
\operatorname{sign}\big(y-y(x)\big),
\qquad x>0.
\end{equation}
Therefore, if $y(\cdot)$ is non-decreasing, then for $y=y_{\widetilde{\gamma}}=y(x_{\widetilde{\gamma}})$ we have
\[
x<x_{\widetilde{\gamma}} \implies y(x)\le y_{\widetilde{\gamma}} \implies \partial_x f_S(x,y_{\widetilde{\gamma}})\ge0,
\qquad
x>x_{\widetilde{\gamma}} \implies y(x)\ge y_{\widetilde{\gamma}} \implies \partial_x f_S(x,y_{\widetilde{\gamma}})\le0,
\]
so $f_S(\cdot,y_{\widetilde{\gamma}})$ is maximized at $x=x_{\widetilde{\gamma}}$, proving (ii).
Thus it suffices to prove the following lemma.

\begin{lemma}[Monotonicity of $y(x)$]\label{lem:y-monotone-equalS}
 Fix $\widetilde{\gamma}\in[0,1]$.
Then the function $y:(0,\infty)\to\mathbb R$ defined in \eqref{eq:def-yx-generalS} is non-decreasing:
\[
y'(x)\ \ge\ 0 \qquad\text{for all }x>0.
\]
\end{lemma}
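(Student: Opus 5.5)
The plan is to reduce the claim $y'\ge 0$ to a monotonicity statement about $1/A_S$, and then prove that statement using the fact that $A_S'$ is a positive mixture of exponentials.

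\emph{Step 1: reduce to a sign condition.} Write $A:=A_S$. Differentiating $y(x)=A/(x^2A')-1/x$ gives
\[
y'(x)=\frac{2}{x^2}-\frac{2A}{x^3A'}-\frac{AA''}{x^2A'^2}
=\frac{2xA'^2-2AA'-xAA''}{x^3A'^2}.
\]
Since $A'>0$ on $(0,\infty)$ (Step 2 of the proof of Theorem~\ref{cor:upper}), it suffices to prove
\[
\Phi(x):=2xA'(x)^2-2A(x)A'(x)-xA(x)A''(x)\ \ge\ 0 .
\]
With $h:=1/A$ one has $h'=-A'/A^2$ and $h''=(2A'^2-AA'')/A^3$. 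So $\Phi\ge0$ is equivalent to $xh''+2h'\ge 0$, i.e.\ $(x^2h')'\ge0$. In other words, the claim is that
\[
g(x):=\frac{x^2A'(x)}{A(x)^2}
\]
is non-increasing on $(0,\infty)$. I will prove it in this form.

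\emph{Step 2: exponential-mixture representation.} From $A(x)=\sum_{j=0}^S m_je^{-jx/S}$ with $m_0=1$ and $m_j\le 0$ for $j\ge1$, we get
\[
A'(x)=\sum_{j\ge1}w_je^{-\lambda_jx},\qquad w_j:=-m_j\,j/S\ge0,\quad \lambda_j:=j/S.
\]
Since $A(0)=0$, this gives $A(x)=\sum_j w_j\,(1-e^{-\lambda_jx})/\lambda_j$. So both $A$ and $A'$ are nonnegative combinations of the elementary profiles $e^{-\lambda x}$ and $(1-e^{-\lambda x})/\lambda$.

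For a single exponential, $g(x)=(\lambda x)^2e^{-\lambda x}/(1-e^{-\lambda x})^2=\bigl((z/2)/\sinh(z/2)\bigr)^2$ with $z=\lambda x$, which is decreasing. The general case should follow by a symmetrization argument. Substituting the mixture into $\Phi$ writes it as a double sum
\[
\Phi(x)=\sum_{j,k}w_jw_k\,K(\lambda_j,\lambda_k;x).
\]
Here
\[
K(\lambda,\mu;x)=2xe^{-(\lambda+\mu)x}-2\frac{1-e^{-\lambda x}}{\lambda}e^{-\mu x}+x\mu\frac{1-e^{-\lambda x}}{\lambda}e^{-\mu x}.
\]
Replacing $K$ by its symmetrization $\tfrac12\bigl(K(\lambda,\mu)+K(\mu,\lambda)\bigr)$ does not change the double sum. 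I will then show this symmetric kernel is nonnegative for all $\lambda,\mu>0$ and $x>0$.

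\emph{Step 3: the main obstacle.} Proving the pointwise inequality for the symmetrized kernel is the hardest step. Scaling $a=\lambda x$, $b=\mu x$ turns it into a two-variable inequality in $(a,b)\in(0,\infty)^2$. On the diagonal $a=b$ it reduces to the single-exponential computation above, which holds via $\sinh u\ge u$. Off the diagonal I would first try to factor the expression, or to write it as a convex combination of diagonal-type terms plus a manifestly nonnegative remainder, using $e^{u}-1-u\ge0$ and $\sinh u\ge u$ as the elementary tools.

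If the pointwise kernel bound fails for some pairs, I would fall back on the specific structure available here. Concretely, with $q=e^{-x/S}$ we have the closed form $A=(1-q)(1-(\widetilde\gamma q)^S)/(1-\widetilde\gamma q)$, so $\log g$ splits into three summands, one per factor. I would then try to show each of $x^2\,\tfrac{d}{dx}$ applied to the factors contributes monotonically, handling the boundary cases $\widetilde\gamma=0$ and $\widetilde\gamma=1$ separately; both reduce to the single-exponential case, with $A=1-e^{-x/S}$ and $A=1-e^{-x}$ respectively.
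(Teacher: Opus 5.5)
Your Step~1 is correct: $y'\ge 0$ on $(0,\infty)$ is equivalent to $g=x^2A'/A^2$ being non-increasing. In the paper's notation $A=xL_0$, one has $g=(1-xm)/L_0$ and $g'=-x\bigl(m^2-\operatorname{Var}_x(U)\bigr)/L_0$, so your reduction is the paper's target $\operatorname{Var}_x(U)\le m(x)^2$.

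The gap is in Steps~2--3: the symmetrized kernel is not nonnegative. With $a=\lambda x$, $b=\mu x$ and $\phi(u)=(1-e^{-u})/u$,
\[
\frac{1}{2x}\bigl(K(\lambda,\mu;x)+K(\mu,\lambda;x)\bigr)=2e^{-a-b}+\tfrac12\bigl[(b-2)\phi(a)e^{-b}+(a-2)\phi(b)e^{-a}\bigr].
\]
For fixed $a<2$ and $b\to\infty$, the last term behaves like $-(2-a)e^{-a}/(2b)$, while the others are $O(be^{-b})$. At $a=1$, $b=10$ the value is about $-0.018$.

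More fundamentally, a weight-independent kernel bound would prove the claim for every nonnegative mixture $A'=\sum_j w_je^{-\lambda_j x}$ with $A(0)=0$. The claim is false in that generality. For $A'(x)=e^{-x}+10e^{-100x}$, i.e.\ $A(x)=1-e^{-x}+0.1(1-e^{-100x})$, one finds $g(0.05)\approx 0.116<g(0.2)\approx 0.414$; equivalently $y(0.05)\approx 38.2>y(0.2)\approx 3.6$. So the specific weights of $A_S$, geometric in $j$ on the equally spaced rates $j/S$, must enter the argument. No version of the symmetrization that ignores them can succeed.

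Your fallback does not supply this structure. We have $\log g=2\log x+\log A'-2\log A$, and only $\log A$ splits over the three factors of the closed form. Writing $g=x^2(\log A)'/A$ shows that the factors are coupled, so there are no per-factor monotone pieces to add up.

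The missing idea, which is the paper's route, is to read $A_S(x)=x\int_0^1\widetilde\gamma^{\lfloor Su\rfloor}e^{-xu}\,du$ probabilistically. The condition becomes $\operatorname{Var}_x(U)\le(\mathbb E_x U)^2$ for $U$ with density proportional to $\widetilde\gamma^{\lfloor Su\rfloor}e^{-xu}$ on $[0,1)$. Because the weights are geometric on equal-length steps, this density factorizes, giving $U=(K+V)/S$. Here $K$ is a truncated geometric variable with ratio $r=\widetilde\gamma e^{-x/S}$, and it is independent of $V$, a truncated exponential on $[0,1)$.

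Set $\Delta(Z):=(\mathbb E Z)^2-\operatorname{Var}(Z)$. The bounds $\Delta(K)\ge-\mathbb E[K]$, $\Delta(V)=(1-2\mathbb E[V])/(e^{x/S}-1)$ and $\mathbb E[K]\le r/(1-r)\le 1/(e^{x/S}-1)$ then give $\Delta(K+V)\ge 0$. The last step uses $r\le e^{-x/S}$, which ties the decay of the geometric weights to the within-step exponential tilt. Together with the factorization, this is what rules out tilted laws with coefficient of variation above $1$, like the one behind the counterexample above.
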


\begin{proof}
The proof consists of 4 steps (A-D) as follows.

\paragraph{(A) An integral representation and the $m$-parametrization.}
Define the piecewise-constant weight function on $[0,1)$,
\[
w(u):=\widetilde{\gamma}^{\lfloor Su\rfloor}.
\]
For $k=0,1,2$ define
\[
L_k(x):=\int_0^1 w(u)\,u^k\,e^{-xu}\,du.
\]
From \eqref{eq:def-AS-equal},
\[
A_S(x)
=\sum_{s=1}^S \widetilde{\gamma}^{s-1}e^{-(s-1)x/S}-\sum_{s=1}^S \widetilde{\gamma}^{s-1}e^{-sx/S}\] Then partitioning $[0,1)$ into the $S$ equal subintervals yields
\begin{equation}\label{eq:A-xL0}
A_S(x)=\sum_{s=1}^S \widetilde{\gamma}^{s-1}\int_{(s-1)/S}^{s/S} x e^{-xu}\,du
= x \int_0^1 w(u)e^{-xu}\,du
= xL_0(x).
\end{equation}
Differentiating under the integral sign gives
\[
L_0'(x)=-L_1(x),\qquad L_1'(x)=-L_2(x),
\]
and therefore
\begin{equation}\label{eq:Aprime-L0L1}
A_S'(x)=L_0(x)-xL_1(x).
\end{equation}
Define the probability measure $\mathbb P_x$ on $[0,1]$ with density proportional to $w(u)e^{-xu}$:
\[
\mathbb P_x(du)=\frac{w(u)e^{-xu}}{L_0(x)}\,du,
\qquad L_0(x)>0.
\]
Let $U$ be the identity random variable under $\mathbb P_x$. Then
\[
m(x):=\mathbb E_x[U]=\frac{L_1(x)}{L_0(x)}.
\]
Using \eqref{eq:A-xL0}--\eqref{eq:Aprime-L0L1},
\[
y(x)=\frac{A_S(x)}{x^2A_S'(x)}-\frac1x
=\frac{xL_0(x)}{x^2(L_0(x)-xL_1(x))}-\frac1x
=\frac{L_1(x)}{L_0(x)-xL_1(x)}
=\frac{m(x)}{1-xm(x)}.
\]
Differentiate $m(x)=L_1(x)/L_0(x)$:
\begin{align*}
    m'(x)&=\frac{L_1'(x)L_0(x)-L_1(x)L_0'(x)}{L_0(x)^2}
=\frac{-L_2(x)L_0(x)+L_1(x)^2}{L_0(x)^2}
= -\Big(\frac{L_2(x)}{L_0(x)}-\Big(\frac{L_1(x)}{L_0(x)}\Big)^2\Big)\cr &=-\Big(\mathbb{E}_x[U^2]-(\mathbb{E}_x[U])^2\Big)=-\operatorname{Var}_x(U).
\end{align*}
Finally, differentiating $y(x)=m(x)/(1-xm(x))$ gives
\begin{equation}\label{eq:yprime-moment}
y'(x)=\frac{m'(x)+m(x)^2}{(1-xm(x))^2}
=\frac{m(x)^2-\operatorname{Var}_x(U)}{(1-xm(x))^2}.
\end{equation}
Since $1-xm(x)>0$ (equivalently $A_S'(x)>0$), it suffices to prove
\begin{equation}\label{eq:var-bound-target}
\operatorname{Var}_x(U)\ \le\ m(x)^2 \qquad (x>0).
\end{equation}

\paragraph{(B) Decomposition $U=(K+V)/S$ and independence.}
Let $K:=\lfloor SU\rfloor\in\{0,1,\dots,S-1\}$ and $V:=SU-K\in[0,1)$, so that $U=(K+V)/S$.
Fix $x>0$ and set
\[
a:=\frac{x}{S},\qquad t:=e^{-a}=e^{-x/S},\qquad r:=\widetilde{\gamma} t\in[0,1).
\]
For $k\in\{0,\dots,S-1\}$ and $v\in[0,1)$, write $u=(k+v)/S$, so $du=dv/S$ and $w(u)=\widetilde{\gamma}^k$. Then
\[
w(u)e^{-xu}\,du
=\widetilde{\gamma}^k e^{-x(k+v)/S}\frac{dv}{S}
=(\gamma e^{-a})^k e^{-av}\frac{dv}{S}
=r^k e^{-av}\frac{dv}{S}.
\]
Hence the joint law factorizes as
\[
\mathbb P_x(K=k,\ V\in dv)\propto r^k\cdot e^{-av}\,dv,
\]
so $K$ and $V$ are independent, with
\[
\mathbb P_x(K=k)=\frac{r^k}{\sum_{j=0}^{S-1}r^j},\qquad k=0,\dots,S-1,
\qquad
f_V(v)=\frac{a e^{-av}}{1-e^{-a}},\qquad v\in[0,1).
\]
Consequently,
\[
m(x)=\mathbb E_x[U]=\frac{\mathbb E[K]+\mathbb E[V]}{S},
\qquad
\operatorname{Var}_x(U)=\frac{\operatorname{Var}(K)+\operatorname{Var}(V)}{S^2},
\]
and \eqref{eq:var-bound-target} is equivalent to
\begin{equation}\label{eq:KV-ineq}
\operatorname{Var}(K)+\operatorname{Var}(V)\ \le\ (\mathbb E[K]+\mathbb E[V])^2.
\end{equation}

\paragraph{(C) Reduce \eqref{eq:KV-ineq} to one-dimensional bounds.}
For any square-integrable $Z$ define
\[
\Delta(Z):=(\mathbb E[Z])^2-\operatorname{Var}(Z)=2(\mathbb E[Z])^2-\mathbb E[Z^2].
\]
Then \eqref{eq:KV-ineq} is equivalent to $\Delta(K+V)\ge 0$, and by independence
\[
\Delta(K+V)=\Delta(K)+\Delta(V)+2\,\mathbb E[K]\mathbb E[V].
\]

\medskip
\noindent\textit{(i) The $V$-terms.}
A direct integration yields
\begin{equation}\label{eq:EV}
\mathbb E[V]
=\int_0^1 v\,\frac{a e^{-av}}{1-e^{-a}}\,dv
=\frac{1}{a}-\frac{1}{e^a-1},
\end{equation}
and
\begin{equation}\label{eq:DeltaV}
\Delta(V)=\frac{a e^a+a-2e^a+2}{a(e^a-1)^2}
=\frac{1-2\mathbb E[V]}{e^a-1}.
\end{equation}
Since $f_V$ is strictly decreasing on $[0,1)$ for $a>0$, we have $\mathbb E[V]<1/2$ and thus
$1-2\mathbb E[V]>0$.

\medskip
\noindent\textit{(ii) The $K$-terms.}
We use two elementary lemmas.

\begin{lemma}\label{lem:EK-upper}
Let $K$ have truncated geometric law $\mathbb P(K=k)\propto r^k$ on $\{0,\dots,S-1\}$ with $r\in[0,1)$.
Then
\[
\mathbb E[K]\ \le\ \frac{r}{1-r}.
\]
\end{lemma}
\begin{proof}
Let $G$ be geometric on $\{0,1,2,\dots\}$ with $\mathbb P(G=k)=(1-r)r^k$. Then
$K\stackrel{d}{=}G\mid(G\le S-1)$, so conditioning can only decrease the mean:
$\mathbb E[K]\le \mathbb E[G]=r/(1-r)$.
\end{proof}

\begin{lemma}\label{lem:DeltaK-lower}
Let $K$ have truncated geometric law $\mathbb P(K=k)\propto r^k$ on $\{0,\dots,S-1\}$ with $r\in[0,1)$.
Then
\[
\Delta(K)\ \ge\ -\mathbb E[K],
\qquad\text{equivalently}\qquad
\mathbb E[K^2]\ \le\ 2(\mathbb E[K])^2+\mathbb E[K].
\]
\end{lemma}
\begin{proof}
Write $G_0:=\sum_{k=0}^{S-1}r^k$, $G_1:=\sum_{k=0}^{S-1}k r^k$, $G_2:=\sum_{k=0}^{S-1}k^2 r^k$.
Then $\mathbb E[K]=G_1/G_0$ and $\mathbb E[K^2]=G_2/G_0$.
The desired inequality is equivalent to
\begin{equation}\label{eq:H-def}
H(r):=2G_1^2+G_0G_1-G_0G_2\ \ge\ 0.
\end{equation}
Using $G_0=(1-r^S)/(1-r)$ and the identities $G_1=rG_0'(r)$ and $G_2=rG_1'(r)$,
a direct computation gives
\[
H(r)= -\,\frac{S r^S}{(1-r)^3}\,B(r),
\]
where
\[
B(r):=(S+1)r-(S-1)+r^S\bigl((S-1)r-(S+1)\bigr).
\]
Since $S r^S/(1-r)^3>0$ for $r\in(0,1)$, it suffices to show $B(r)\le 0$ on $(0,1)$.
Differentiate:
\[
B'(r)=(S+1)\Bigl(1-r^{S-1}\bigl(S-(S-1)r\bigr)\Bigr).
\]
Apply AM--GM to the $S$ positive numbers $\underbrace{r,\dots,r}_{S-1\text{ times}}$ and $S-(S-1)r$,
whose arithmetic mean equals $1$; hence their geometric mean is at most $1$:
\[
r^{S-1}\bigl(S-(S-1)r\bigr)\le 1,
\]
so $B'(r)\ge 0$ on $(0,1)$. Since $B(1)=0$ and $B$ is increasing, $B(r)\le 0$ on $(0,1)$, proving
$H(r)\ge 0$ and the lemma.
\end{proof}

\medskip
\noindent\textit{Step 4: Complete the variance bound.}
Using $\Delta(K+V)=\Delta(K)+\Delta(V)+2\mathbb E[K]\mathbb E[V]$ and Lemma~\ref{lem:DeltaK-lower},
\[
\Delta(K+V)\ge -\mathbb E[K]+\Delta(V)+2\mathbb E[K]\mathbb E[V]
=\Delta(V)-\mathbb E[K]\bigl(1-2\mathbb E[V]\bigr).
\]
By \eqref{eq:DeltaV},
\[
\Delta(K+V)\ge \bigl(1-2\mathbb E[V]\bigr)\Bigl(\frac{1}{e^a-1}-\mathbb E[K]\Bigr).
\]
We already have $1-2\mathbb E[V]>0$. It remains to show $\mathbb E[K]\le 1/(e^a-1)$.
By Lemma~\ref{lem:EK-upper} and monotonicity of $u\mapsto u/(1-u)$ on $(0,1)$,
\[
\mathbb E[K]\le \frac{r}{1-r}
\le \frac{t}{1-t}
=\frac{e^{-a}}{1-e^{-a}}
=\frac{1}{e^a-1},
\]
because $r=\widetilde{\gamma} t\le t$. Hence $\Delta(K+V)\ge 0$, proving \eqref{eq:KV-ineq}, hence
\eqref{eq:var-bound-target}. Finally, \eqref{eq:yprime-moment} implies $y'(x)\ge 0$ for all $x>0$.
\end{proof}
We complete the proof of the theorem.

%% file: proof_dis.tex
\subsection{Proof of Lemma~\ref{lem:multistage}: Effective-horizon comparison for expected maxima}\label{app:dis}

Rather than treating the common-decay, equal-length case, we prove the
comparison directly in the general setting with arbitrary phase-wise decay
factors and phase lengths. In this setting, the effective horizon is defined as
\[
    B_G := \sum_{s=1}^{S} \Gamma_{s-1} n_s .
\]

\begin{lemma}\label{lem:multistage_gen} Let $F$ denote the cumulative distribution function of the base reward distribution $\mathcal{D}$. Fix an integer \(S\ge 1\), phase
lengths \(n_1,\dots,n_S\in\mathbb N\), and decay factors
\((\gamma_t)_{t=1}^{S-1}\in[0,1]^{S-1}\).
 Then, we have
\begin{equation}\label{eq:multistage_gen}
\int_0^\infty \Bigl(1-F(z)^{B_G}\Bigr)\,dz
\;\ge\;
\int_0^\infty \Bigl(1-\prod_{s=1}^{S} F\!\bigl(z/\Gamma_{\,s-1}\bigr)^{n_s}\Bigr)\,dz,
\end{equation}
where we adopt the convention $F(z/0):=\lim_{u\to\infty}F(u)=1$ for $z\ge0$.
\end{lemma}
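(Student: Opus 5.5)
The plan is to convert the phases one at a time into unscaled "effective" copies of the base distribution, each time using a one-phase comparison that survives multiplication by the CDF of all the remaining phases. It suffices to prove the general Lemma~\ref{lem:multistage_gen}, since Lemma~\ref{lem:multistage} is the special case \(\Gamma_{s-1}=\widetilde\gamma^{\,s-1}\), \(n_s=n/S\). Throughout we allow real exponents: for \(m\ge0\), \(F(z)^m\) is a nondecreasing function of \(z\) with values in \([0,1]\).

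\textbf{Step 1: concavity in the horizon.} For fixed \(z\) with \(F(z)\in(0,1]\), the map \(m\mapsto 1-F(z)^m=1-e^{m\log F(z)}\) is concave and vanishes at \(m=0\). When \(F(z)=0\) it equals \(1\) for every \(m>0\), and this does not break the argument. Hence, for every \(c\ge0\), the truncated tail
\[
\phi_c(m):=\int_c^\infty\bigl(1-F(z)^m\bigr)\,dz
\]
is concave in \(m\) with \(\phi_c(0)=0\). It follows that \(\phi_c(\Gamma m)\ge \Gamma\,\phi_c(m)\) for \(\Gamma\in[0,1]\).

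\textbf{Step 2: one-phase tail comparison.} Fix \(\Gamma\in(0,1]\) and \(n\). Substituting \(z=\Gamma u\) and using monotonicity in the lower limit (\(c/\Gamma\ge c\)), we get
\[
\int_c^\infty\bigl(1-F(z/\Gamma)^{n}\bigr)dz=\Gamma\,\phi_{c/\Gamma}(n)\le\Gamma\,\phi_c(n)\le\phi_c(\Gamma n).
\]
In words, \(n\) copies of \(\Gamma Y\) have a smaller truncated tail integral than \(\Gamma n\) copies of \(Y\), at every truncation level \(c\). The case \(\Gamma=0\) is trivial, since the factor \(F(z/0)=1\) and \(F^{0}=1\) agree.

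\textbf{Step 3: phase-by-phase replacement.} Write
\[
P_k(z):=\prod_{s\le k}F(z)^{\Gamma_{s-1}n_s}\prod_{s>k}F(z/\Gamma_{s-1})^{n_s}.
\]
Then \(P_0\) is the right-hand product in \eqref{eq:multistage_gen} and \(P_S=F^{B_G}\). It suffices to show \(\int(1-P_{k})\ge\int(1-P_{k-1})\) for each \(k\). Both products share the factor \(Q(z)\) made of all other phases, and \(Q\) is nondecreasing with values in \([0,1]\). The needed inequality is therefore
\[
\int_0^\infty Q(z)\bigl(D_1(z)-D_2(z)\bigr)\,dz\ge0,\qquad D_1=F(z/\Gamma)^n,\quad D_2=F(z)^{\Gamma n},
\]
with \(\Gamma=\Gamma_{k-1}\) and \(n=n_k\). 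By the layer-cake representation \(Q(z)=\int_0^1\mathbbm 1\{Q(z)>t\}\,dt\), each set \(\{z:Q(z)>t\}\) is a half-line \([c_t,\infty)\) or \((c_t,\infty)\). Its boundary point has measure zero, so by Fubini the integral is a mixture over \(t\) of the quantities \(\int_{c_t}^\infty(D_1-D_2)\,dz\). Each of these equals \(\int_{c_t}^\infty(1-D_2)-\int_{c_t}^\infty(1-D_1)\ge0\) by Step~2.

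The main obstacle is the non-pointwise nature of the comparison: \(F(z/\Gamma)^n\) and \(F(z)^{\Gamma n}\) are not ordered for each \(z\), so one cannot simply multiply by \(Q\). The resolution is exactly the observation above: a nondecreasing weight \(Q\) only requires the comparison on upper half-lines, and that is supplied by the concavity of \(\phi_c\) uniformly in \(c\).

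Some bookkeeping remains. Integrability holds because every integral is bounded by \(\sum_s\Gamma_{s-1}n_s\,\mathbb E[Y]<\infty\), so Fubini applies and no \(\infty-\infty\) cancellation arises in the differences. Non-integer \(B_G\) causes no issue, since the argument never uses integrality of the exponents.
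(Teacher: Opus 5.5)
Your proof is correct, but it takes a different route from the paper's. The paper argues globally over the weight vector. It sets $G_1(w)=\int_0^\infty(1-F(z)^{\sum_s n_s w_s})\,dz$, which is concave in $w$. It also sets $G_2(w)=\int_0^\infty(1-\prod_s F(z/w_s)^{n_s})\,dz=\mathbb{E}[\max_s w_s M_s]$, where $M_s$ is the maximum of $n_s$ base draws; this is convex as the expectation of a maximum of linear functions. The two functions agree on the step vectors $(1,\dots,1,0,\dots,0)$, and $(\Gamma_0,\dots,\Gamma_{S-1})$ is a mixture of step vectors with weights $(1-\gamma_k)\Gamma_{k-1}$ and $\Gamma_{S-1}$. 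A single Jensen sandwich then finishes.

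You instead run a hybrid argument that swaps one phase at a time. Your key one-dimensional fact is tail dominance on every upper half-line, which follows from scaling plus concavity of $m\mapsto\phi_c(m)$. The layer-cake decomposition of the nondecreasing common factor $Q$ then gets around the lack of a pointwise order between $F(z/\Gamma)^n$ and $F(z)^{\Gamma n}$.

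What each route buys:
\begin{itemize}
\item The paper's route is shorter and needs no Fubini or layer-cake bookkeeping, since convexity on the prophet side comes for free from the max-of-linear representation.
\item Your route is more local: it never uses that $\Gamma_{s-1}$ is nonincreasing in $s$, because each phase only needs its own scale in $[0,1]$. The paper's step-vector mixture does use this monotonicity, although its sandwich would also extend by using all vertices of $\{0,1\}^S$.
\item Your argument yields, at no extra cost, the comparison of $\int_c^\infty$ for every $c\ge 0$: replace $Q$ by $Q\,\mathbbm{1}\{z\ge c\}$. This is a stop-loss dominance rather than only a comparison of means.
\item You treat the degenerate case $F(z)=0$ in the concavity step explicitly, which the paper's second-derivative computation passes over.
\end{itemize}
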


\begin{proof}
For $w=(w_1,\dots,w_S)\in[0,1]^S$ define
\[
G_1(w)
:=\int_0^\infty\Bigl(1-F(z)^{\sum_{s=1}^S n_s w_s}\Bigr)\,dz,
\qquad
G_2(w)
:=\int_0^\infty\Bigl(1-\prod_{s=1}^S F(z/w_s)^{n_s}\Bigr)\,dz,
\]
with the convention $F(z/0)=1$ for $z>0$.
Note that \eqref{eq:multistage} is exactly $G_1(v)\ge G_2(v)$
for the geometric weight vector $v=(1,\Gamma_1,\Gamma_2,\dots,\Gamma_{S-1})$.

\medskip
\noindent\textbf{Step 1: $G_1$ is concave in $w$.}
Fix $z\ge 0$ and set $a:=F(z)\in[0,1]$.
The function $u\mapsto 1-a^u$ is concave on $[0,\infty)$ because
\[
\frac{d^2}{du^2}\bigl(1-a^u\bigr)=-a^u(\log a)^2\le 0.
\]
Since $u=\sum_{s=1}^S n_s w_s$ is affine in $w$, the map
$w\mapsto 1-F(z)^{\sum_s n_s w_s}$ is concave.
Integrating over $z$ preserves concavity (all integrals are finite by the
assumption $\int_0^\infty(1-F(z)^N)dz<\infty$), hence $G_1$ is concave on $[0,1]^S$.

\medskip
\noindent\textbf{Step 2: $G_2$ is convex in $w$.}
Let $\{Y_{i}\}_{i\in \mathcal{T}_s,\; 1\le s\le S}$ be independent i.i.d. copies of $Y\sim \mathcal{D}$
and define $M_s:=\max_{i\in\mathcal{T}_s} Y_{i}$.
For $w_s>0$,
\[
\mathbb P\!\left(\max_{1\le s\le S} w_s M_s \le z\right)
=\prod_{s=1}^S \mathbb P(M_s\le z/w_s)
=\prod_{s=1}^S F(z/w_s)^{n_s}.
\]
Thus, using $\mathbb E[X]=\int_0^\infty \mathbb P(X>z)dz$ for $X\ge 0$,
\[
G_2(w)=\mathbb E\!\left[\max_{1\le s\le S} w_s M_s\right],
\]
and the same holds for $w_s=0$ by the convention $0\cdot M_s=0$ and $F(z/0)=1$.
For each fixed realization $m=(m_1,\dots,m_S)$,
the function $w\mapsto \max_s w_s m_s$ is a pointwise maximum of linear functions,
hence convex. Taking expectation preserves convexity, so $G_2$ is convex on $[0,1]^S$.

\medskip
\noindent\textbf{Step 3: Equality on step vectors.}
For $k\in\{1,\dots,S\}$ define the step vector
\[
e^{(k)}:=(\underbrace{1,\dots,1}_{k\text{ times}},\underbrace{0,\dots,0}_{S-k\text{ times}}).
\]
Then
\[
G_1(e^{(k)})
=\int_0^\infty\Bigl(1-F(z)^{\sum_{s=1}^k n_s}\Bigr)\,dz,
\]
and, since $F(z/1)=F(z)$ and $F(z/0)=1$,
\[
G_2(e^{(k)})
=\int_0^\infty\Bigl(1-\prod_{s=1}^k F(z)^{n_s}\Bigr)\,dz
=\int_0^\infty\Bigl(1-F(z)^{\sum_{s=1}^k n_s}\Bigr)\,dz
=G_1(e^{(k)}).
\]

\medskip
\noindent\textbf{Step 4: Represent $v$ as a convex combination of step vectors.}
Recall $\Gamma_{s}=\prod_{t\in[s]}\gamma_t$ and $\Gamma_0:=1$. Define weights
\[
p_k:=(1-\gamma_k)\Gamma_{k-1}\quad(k=1,\dots,S-1),\qquad p_S:=\Gamma_{S-1}.
\]
Then $p_k\ge 0$ and $\sum_{k=1}^S p_k=1$.
Moreover, for each $s\in\{1,\dots,S\}$,
\[
\sum_{k=s}^S p_k
=\sum_{k=s}^{S-1}(1-\gamma_k)\Gamma_{k-1}+\Gamma_{S-1}
=\Gamma_{s-1}.
\]
Hence
\[
v=(\Gamma_0,\Gamma_1,\dots,\Gamma_{S-1})=\sum_{k=1}^S p_k\, e^{(k)}.
\]

\medskip
\noindent\textbf{Step 5: Jensen concave--convex sandwich.}
Using concavity of $G_1$, convexity of $G_2$, and $G_1(e^{(k)})=G_2(e^{(k)})$,
\[
G_1(v)
=G_1\Bigl(\sum_{k=1}^S p_k e^{(k)}\Bigr)
\;\ge\;\sum_{k=1}^S p_k G_1(e^{(k)})
=\sum_{k=1}^S p_k G_2(e^{(k)})
\;\ge\;G_2\Bigl(\sum_{k=1}^S p_k e^{(k)}\Bigr)
=G_2(v).
\]
This is exactly \eqref{eq:multistage}.\end{proof}

%% file: proof_lower.tex
\subsection{Proof of Theorem~\ref{thm:lower}: Lower bound under common decay and equal-length phases}\label{app:lower}
For simplicity, we present the argument for atomless $\mathcal D$; when $\mathcal D$ has atoms,
the same bounds follow by the randomized tie-breaking (see Remark~\ref{rm:atom_proof}). We first provide an upper bound on the expected reward of the phase-dependent
threshold policy. We can write
\begin{align}
&\mathbb{E}[X_\tau]
\cr &= \mathbb{E}\Bigg[
\sum_{s=1}^{S} \sum_{i \in \mathcal{T}_s}
X_i \,\mathbbm{1}\{X_i > \alpha_s\}\,
\mathbbm{1}\Big\{
X_j \le \alpha_t \ \forall t < s,\ \forall j \in \mathcal{T}_t,
\ \text{and}\
X_j \le \alpha_s \ \forall j \in \mathcal{T}_s \cap \{1,\dots,i-1\}
\Big\}
\Bigg].\cr
\end{align}
By independence of the $X_i$’s, this yields
\begin{align}
\mathbb{E}[X_\tau]
&= \sum_{i\in \mathcal{T}_1}F_1(\alpha_1)^{i-1}\mathbb{E}[X_i\mathbbm{1}\{X_i\ge \alpha_1\}]\cr& \qquad+ \sum_{s=2}^{S} \sum_{i \in \mathcal{T}_s}
\Bigg(\prod_{t=1}^{s-1} F_t(\alpha_t)^{n_t}\Bigg)
F_s(\alpha_s)^{\,i-\sum_{t=1}^{s-1}n_t-1}\,
\mathbb{E}\big[X_i \mathbbm{1}\{X_i > \alpha_s\}\big],
\end{align}
where $F_s$ is the CDF of $X_i$ in phase $s$. In our setting, $X_i = \widetilde{\gamma}^{s-1} Y_i$ with $Y_i \sim \mathcal{D}$ i.i.d.,
and the thresholds are chosen so that, from $Y\sim\mathcal{D}$,
\[
\mathbb{P}(Y > \alpha_1) = \frac{a}{n}\quad \text{and}\quad\mathbb{P}(\widetilde{\gamma}^{s-1}Y > \alpha_s) = \frac{a}{n}\qquad \text{for } s \in [2,S],
\]
which implies $F_s(\alpha_s) = 1 - a/n$ for all $s$ and
\[
\mathbb{E}[X_i \mathbbm{1}\{X_i > \alpha_s\}]
= \widetilde{\gamma}^{s-1}\mathbb{E}[Y \mathbbm{1}\{Y > \alpha_1\}],
\]
by construction of the thresholds. Therefore with $F_1(\alpha_1)=F_s(\alpha_s)$, we have
\begin{align*} 
\mathbb{E}[X_\tau]
&= \left(
\sum_{i \in \mathcal{T}_1} F_1(\alpha_1)^{\,i-1}
+\sum_{s=2}^{S}
\sum_{i \in \mathcal{T}_s}
F_1(\alpha_1)^{\sum_{t=1}^{s-1}n_t} F_1(\alpha_1)^{\,i-\sum_{t=1}^{s-1}n_t-1}
\,\widetilde{\gamma}^{s-1}\right)\,\mathbb{E}[Y \mathbbm{1}\{Y > \alpha_1\}] \nonumber\\
&= \left( \sum_{i \in \mathcal{T}_1} F_1(\alpha_1)^{\,i-1}+\sum_{s=2}^{S}
F_1(\alpha_1)^{\sum_{t=1}^{s-1}n_t} 
\left(\sum_{i \in \mathcal{T}_s} F_1(\alpha_1)^{\,i-\sum_{t=1}^{s-1}n_t-1}\right)
\widetilde{\gamma}^{s-1}\right)\mathbb{E}[Y \mathbbm{1}\{Y > \alpha_1\}].
\end{align*}
Using the geometric-series identity, $F_1(\alpha_1) = 1 - a/n$, and $n_s=\frac{n}{S}+O(1)$, we obtain
\begin{align}\label{eq:exp_reward_alg}
\mathbb{E}[X_\tau]
&= \frac{1 - F_1(\alpha_1)^{(n/S)+O(1)}}{a}
\left(1+\sum_{s=2}^{S} \big(F_1(\alpha_1)\big)^{(s-1) (n/S)+O(1)} \widetilde{\gamma}^{s-1}\,
\right)n\,\mathbb{E}[Y \mathbbm{1}\{Y > \alpha_1\}].
\end{align}

Now we provide a lower bound for \eqref{eq:exp_reward_alg}. Recall that $B=\frac{n}{S}(1+\sum_{s=1}^{S-1}\widetilde{\gamma}^s)+O(1)$ with equal length phases. By the standard expression for the expectation of a maximum in terms of its survival function, the Lemma~\ref{lem:multistage} is equivalent to showing that    
\begin{align}\label{eq:max}
    \mathbb{E}\left[\max_{i\in[B]}Y_i\right]\ge \mathbb{E}\left[\max_{i\in[n]}X_i\right].
\end{align}
Recall  $Y \sim \mathcal{D}$. In what follows, we show that  $
     B\mathbb{E}[Y\mathbbm{1}(Y>\alpha_1)]\ge \mathbb{E}[\max_{i\in [n]}{X_i}].$
     First, we have
     \[Y\mathbbm{1}(Y>\alpha_1)=(Y-\alpha_1)_++\alpha_1 \mathbbm{1}(Y>\alpha_1).\] From the above, with $\mathbb{P}(Y>\alpha_1)=a/n=1/B$, we have 
\begin{align}\label{eq:B=B+alpha}
    B\mathbb{E}[Y\mathbbm{1}(Y>\alpha_1)]=B\mathbb{E}[(Y-\alpha_1)_+]+\alpha_1.
\end{align}
From $\max_{i\in[B]}Y_i\le \alpha_1+(\max_{i\in[B]}Y_i-\alpha_1)_+$, we also have
\begin{align}\label{eq:Y_max_bd_alpha}
\mathbb{E}\left[\max_{i\in[B]}Y_i\right]\le \alpha_1+\mathbb{E}\left[\left(\max_{i\in[B]}Y_i-\alpha_1\right)_+\right]\le \alpha_1+\sum_{i\in[B]}\mathbb{E}[(Y_i-\alpha_1)_+]=\alpha_1+B\mathbb{E}[(Y-\alpha_1)_+].
\end{align}
From \eqref{eq:B=B+alpha} and \eqref{eq:Y_max_bd_alpha}, we can show that 
\begin{align}
\label{eq:upper_prophet}
B\mathbb{E}[Y\mathbbm{1}(Y>\alpha_1)]\ge \mathbb{E}\left[\max_{i\in[B]}Y_i\right]\ge \mathbb{E}\left[\max_{i\in[n]}X_i\right],\end{align}
where the last inequality is obtained from \eqref{eq:max}.

Therefore, using \eqref{eq:exp_reward_alg} and 
$
F_1(\alpha_1)=1-\frac{a}{n},$ we have 
\begin{align}
\mathbb{E}[X_\tau]
&=\frac{1-F_1(\alpha_1)^{(n/S)+O(1)}}{a}
\Bigl(1+\sum_{s=2}^{S}F_1(\alpha_1)^{(s-1)(n/S)+O(1)}\,\widetilde{\gamma}^{\,s-1}\Bigr)
\, n\,\mathbb{E}\big[Y\mathbbm{1}\{Y>\alpha_1\}\big]\notag\\
&=\frac{1-F_1(\alpha_1)^{(n/S)+O(1)}}{a}
\Bigl(1+\sum_{s=1}^{S-1}F_1(\alpha_1)^{s(n/S)+O(1)}\,\widetilde{\gamma}^{\,s}\Bigr)
\, n\,\mathbb{E}\big[Y\mathbbm{1}\{Y>\alpha_1\}\big]\notag\\
&=\Bigl(1-F_1(\alpha_1)^{(n/S)+O(1)}\Bigr)
\Bigl(1+\sum_{s=1}^{S-1}F_1(\alpha_1)^{s(n/S)+O(1)}\,\widetilde{\gamma}^{\,s}\Bigr)
\, B\,\mathbb{E}\big[Y\mathbbm{1}\{Y>\alpha_1\}\big]\notag\\
&\ge
\Bigl(1-F_1(\alpha_1)^{n/S}\Bigr)
\Bigl(1+\sum_{s=1}^{S-1}F_1(\alpha_1)^{sn/S}\,\widetilde{\gamma}^{\,s}\Bigr)
\,\mathbb{E}\Big[\max_{i\in[n]}X_i\Big]\notag\\
&=
\Bigl(1-(1-a/n)^{(n/S)+O(1)}\Bigr)
\Bigl(1+\sum_{s=1}^{S-1}\widetilde{\gamma}^{\,s}(1-a/n)^{s(n/S)+O(1)}\Bigr)
\,\mathbb{E}\Big[\max_{i\in[n]}X_i\Big],
\label{eq:X_tau_lower_bd_prelimit}
\end{align}
where the inequality follows from \eqref{eq:upper_prophet}, namely
$B\,\mathbb{E}[Y\mathbbm{1}\{Y>\alpha_1\}]\ge \mathbb{E}[\max_{i\in[n]}X_i]$.

For each fixed $s\in\{1,\dots,S-1\}$,
\[
(1-a/n)^{s(n/S)+O(1)}\to e^{-sa/S}
\qquad\text{and}\qquad
(1-a/n)^{(n/S)+O(1)}\to e^{-a/S}
\quad\text{as }n\to\infty.
\]
Therefore,
\begin{align}
\liminf_{n\to\infty}
\frac{\mathbb{E}[X_\tau]}{\mathbb{E}\big[\max_{i\in[n]}X_i\big]}
\ge
\Bigl(1-e^{-a/S}\Bigr)
\Bigl(1+\sum_{s=1}^{S-1}\widetilde{\gamma}^{\,s}e^{-sa/S}\Bigr).
\label{eq:X_tau_lower_bd}
\end{align}
In particular, from $a/S=\frac{1}{1+\sum_{s=1}^{S-1}\widetilde{\gamma}^{\,s}}$, we have
\[
\liminf_{n\to\infty}
\frac{\mathbb{E}[X_\tau]}{\mathbb{E}\big[\max_{i\in[n]}X_i\big]}
\ge
\Bigl(1-\exp\Bigl(-\frac{1}{1+\sum_{s=1}^{S-1}\widetilde{\gamma}^{\,s}}\Bigr)\Bigr)
\Bigl(1+\sum_{s=1}^{S-1}\widetilde{\gamma}^{\,s}
\exp\Bigl(-\frac{s}{1+\sum_{s=1}^{S-1}\widetilde{\gamma}^{\,s}}\Bigr)\Bigr).
\]
From the above, if $0\le\widetilde{\gamma}<1$, we have
\begin{align*}\liminf_{n\to\infty}\mathrm{CR}_n(\tau)&\ge  
    (1-\exp(-{(1-\widetilde{\gamma})/(1-\widetilde{\gamma}^{S}))})\left(1+\sum_{s=1}^{S-1}\widetilde{\gamma}^s\exp(-s(1-\widetilde{\gamma})/(1-\widetilde{\gamma}^{S}))\right) \cr &=\frac{\big(1-\widetilde{\gamma}^{S}\exp\!\big(-S\tfrac{1-\widetilde{\gamma}}{1-\widetilde{\gamma}^{S}}\big)\big)
      \big(1-\exp\!\big(-\tfrac{1-\widetilde{\gamma}}{1-\widetilde{\gamma}^{S}}\big)\big)}
     {1-\widetilde{\gamma}\,\exp\!\big(-\tfrac{1-\widetilde{\gamma}}{1-\widetilde{\gamma}^{S}}\big)},
\end{align*}
and if $\widetilde{\gamma}=1$, by telescoping with \eqref{eq:X_tau_lower_bd}, we have \[\liminf_{n\to\infty}\mathrm{CR}_n(\tau)\ge 1-1/e,\]  which concludes the proof.
\begin{remark}[Extension to atomic distributions]\label{rm:atom_proof}
To extend the analysis to base distributions with atoms, we replace the event
$\{X_i > \alpha_s\} (=\{\widetilde{\gamma}^{s-1}Y_i>\alpha_s\})$ by a randomized acceptance event
\[
A_i := \{X_i > \alpha_s\} \cup \{X_i = \alpha_s,\; U_i \le \theta\},
\]
where $U_i \sim \mathrm{Unif}[0,1]$ is independent of $Y_i$, and
\[
\theta := \frac{a/n - \mathbb{P}(X_i > \alpha_s)}{\mathbb{P}(X_i = \alpha_s)} \; \text{ if \;$\mathbb{P}(X_i=\alpha_s)>0$}.
\]
This construction ensures that $\mathbb{P}(A_i) = a/n$, and all
analytical arguments go through unchanged by replacing
$\mathbbm{1}\{X_i>\alpha_s\}$ with $\mathbbm{1}\{A_i\}$.
\end{remark}

%% file: proof_upper_final.tex
\subsection{Proof of Theorem~\ref{thm:upper}: Upper bound under general decay factors and phase lengths}\label{app:upper}

We consider a time horizon of $n$ stages, partitioned into $S$ phases
\[
    \mathcal{T}_s \subset [n],
    \qquad |\mathcal{T}_s| = n_s>0,
    \qquad \sum_{s=1}^S n_s = n.
\]
We define
\[
    N_{s-1} := \sum_{t=1}^{s-1} n_t,
    \qquad N_0 := 0,
\]
and we have
\[
    \mathcal{T}_s = \{N_{s-1}+1,\dots,N_{s-1}+n_s\},\qquad s\in[S].
\]
Recall $\Gamma_{s-1}:=\prod_{t\in[s-1]}\gamma_t$. The reward sequence is defined by
\[
    X_i =
    \begin{cases}
        Y_i, & i\in\mathcal{T}_1,\\[4pt]
        \Gamma_{s-1} Y_i, & i\in\mathcal{T}_s,\; s \in [2,S],
    \end{cases}
\]
where $0 \le  \gamma_t \le  1$  for $t\in[s-1]$ represent the decay factors across phases.

Fix $0 \le \gamma_t < 1$ for $t\in[S-1]$ and a parameter
$1/n\le y \le \sqrt{n}$ (to be specified later).  
Let $Y_1, \dots, Y_n$ be i.i.d.\ random variables drawn from the two-point distribution
\[
    Y_i \sim 
    \begin{cases}
        n, & \text{with probability } \tfrac{1}{n^2},\\[3pt]
        y, & \text{with probability } 1 - \tfrac{1}{n^2}.
    \end{cases}
\]
The observed rewards $\{X_i\}_{i=1}^n$ are scaled according to the phase-dependent factors
$\Gamma_{s-1}$  defined above.
We will work in an asymptotic regime $n\to\infty$ where the relative phase lengths converge:
\[
    \lambda_s := \frac{n_s}{n} \in (0,1],
    \qquad
    \sum_{s=1}^S \lambda_s = 1,
\]
and hence
\[
    \frac{N_{s-1}}{n} = \sum_{r=1}^{s-1}\frac{n_r}{n}
    \longrightarrow
    \Lambda_{s-1} := \sum_{r=1}^{s-1}\lambda_r,
    \qquad s\in[S],
\]
where $N_{0}:=0$ and $\Lambda_{0}:=0$.

\vspace{3mm}
\textbf{Expected optimal reward.} We first consider $\Gamma_{s-1}>0$ for all $s$. 
As $n \to \infty$, the maximum reward is either $\Gamma_{s-1}n$ for some $s$, if a value
$Y_i=n$ appears first in phase $s$, or $y$ if no such value appears.
For each phase $s\in[S]$, the probability that the first occurrence of $Y_i=n$ falls in
$\mathcal{T}_s$ is
\[
    \mathbb{P}\left(\max_{i\in[n]}X_i = \Gamma_{s-1} n\right)
    =
    \big(1 - (1 - 1/n^2)^{n_s}\big)\,(1 - 1/n^2)^{N_{s-1}}.
\]
Using $n_s = \lambda_s n$ and $N_{s-1} = O(n)$, a Taylor expansion gives
\[
    (1 - 1/n^2)^{n_s}
    = 1 - \frac{n_s}{n^2} + o\bigl(1/n\bigr)
    = 1 - \frac{\lambda_s}{n} + o\bigl(1/n\bigr),
\]
and
\[
    (1 - 1/n^2)^{N_{s-1}}
    = 1 -\frac{N_{s-1}}{n^2}+ o\bigl(1/n\bigr).
\]
Hence
\[
    \mathbb{P}\left(\max_{i\in[n]}X_i = \Gamma_{s-1}n\right)
    =  \frac{\lambda_s}{n} + o\bigl(1/n\bigr).
\]
Similarly,
\[
    \mathbb{P}\left(\max_{i\in[n]}X_i = y\right)
    = (1 - 1/n^2)^n
    = 1 - \frac{1}{n} + o\bigl(1/n\bigr).
\]
Therefore,
\begin{align*}
    \mathbb{E}\left[\max_{i\in[n]} X_i\right]
    &= \sum_{s=1}^{S} \Gamma_{s-1} n \,\mathbb{P}\left(\max_{i\in[n]}X_i  = \Gamma_{s-1} n\right)
       + y \,\mathbb{P}\left(\max_{i\in[n]}X_i  = y\right) \\
    &\xrightarrow[n\to \infty]{}
       y + \sum_{s=1}^{S} \lambda_s \Gamma_{s-1}.
\end{align*}
Recall
\[
    C_S 
    := \sum_{s=1}^{S}\lambda_s \Gamma_{s-1},
\]
so that, asymptotically,
\[
    \mathbb{E}\left[\max_{i\in[n]} X_i\right] \;\to\; y + C_S .
\]
The same result can be obtained for the case when $\Gamma_{s-1}=0$ for some $s$ with $\Gamma_0:=1$.

\vspace{3mm}
\textbf{Expected performance of a single-threshold algorithm.}
We consider a single-threshold policy with $0 \le p \le 1$ that accepts $X_i$ if
$\mathbb{P}(X_i > \alpha) = p$ for some $\alpha \ge 0$.
For $i \in \mathcal{T}_s$, let $R_i(p)$ denote the expected reward conditional on
acceptance at stage $i$. If $p \le 1/n^2$, then the threshold is so high that acceptance can only occur when
$Y_i=n$, hence $R_i(p) = \Gamma_{s-1} n$ and
\[
    pR_i(p) = p\,\Gamma_{s-1} n \le \frac{\Gamma_{s-1}}{n}.
\]
If $p>1/n^2$, then both values $n$ and $y$ are allowed to be accepted, and a direct computation yields
\[
    R_i(p)
    = \frac{1}{p}\left(\frac{\Gamma_{s-1} n}{n^2} + \Gamma_{s-1} y \bigl(p - 1/n^2\bigr)\right),
\]
so that
\[
    pR_i(p)
    = \frac{\Gamma_{s-1}}{n} + \Gamma_{s-1}y\Bigl(p - \frac{1}{n^2}\Bigr)
    \le \frac{\Gamma_{s-1}}{n} + \Gamma_{s-1}y p.
\]
Combining both regimes, we obtain the unified bound
\begin{equation}\label{eq:bound-pRi-lambda}
    pR_i(p) \;\le\; \frac{\Gamma_{s-1}}{n} + \Gamma_{s-1}y p
    \qquad\text{for all } p\in[0,1],\; i\in\mathcal{T}_s.
\end{equation}
Hence the expected reward of the single-threshold algorithm satisfies
\begin{align}
    \mathbb{E}[X_\tau]
    &= \sum_{s=1}^{S}\sum_{i\in \mathcal{T}_{s}}
       pR_{i}(p) (1-p)^{i-1} \nonumber\\
    &\le \sum_{s=1}^{S}\sum_{i\in \mathcal{T}_s}
       \Bigl(\frac{\Gamma_{s-1}}{n} + \Gamma_{s-1} y p\Bigr)(1-p)^{i-1}. \label{eq:EXtau-lambda-bound1}
\end{align}
For each phase $s$, we have $\mathcal{T}_s = \{N_{s-1}+1,\dots,N_{s-1}+n_s\}$ and therefore
\begin{align*}
    \sum_{i\in \mathcal{T}_s}(1-p)^{i-1}
    &= (1-p)^{N_{s-1}}\sum_{j=0}^{n_s-1}(1-p)^j \\
    &= (1-p)^{N_{s-1}}\,\frac{1-(1-p)^{n_s}}{1-(1-p)} \\
    &= (1-p)^{N_{s-1}}\,\frac{1-(1-p)^{n_s}}{p}.
\end{align*}
Substituting into \eqref{eq:EXtau-lambda-bound1} yields
\begin{equation}\label{eq:EXtau-lambda-bound2}
    \mathbb{E}[X_\tau]
    \;\le\;
    \sum_{s=1}^{S}\Bigl(\frac{\Gamma_{s-1}}{n} + \Gamma_{s-1} y p \Bigr)
    (1-p)^{N_{s-1}}\,\frac{1-(1-p)^{n_s}}{p},
\end{equation}
valid for all $p\in(0,1]$.

\vspace{3mm}
\textbf{Asymptotic parametrization $p=x/n$.} In what follows, we first show that any sequence of single-threshold policies with non-vanishing asymptotic worst-case
competitive ratio on this hard family must satisfy $x = \Theta(1)$,  which justifies restricting attention to the parametrization $p=x/n$ with fixed $x>0$.

\begin{lemma}[Reduction to the scale $p=\Theta(1/n)$]\label{lem:reduce_to_px_over_n}
Let $\mathrm{CR}_n(p,y)$ denote the competitive ratio of a single-threshold policy whose
(per-stage) acceptance probability is $p=p_n\in(0,1]$ on this instance, i.e.,
\[
\mathrm{CR}_n(p,y):=\frac{\mathbb{E}[X_\tau]}{\mathbb{E}[\max_{i\in[n]}X_i]}.
\]
Define $x_n := n p_n$. Then:
\begin{enumerate}
\item If $x_n\to 0$, then
\[
\inf_{1/n\le y\le \sqrt{n}}\mathrm{CR}_n(p_n,y)\longrightarrow 0.
\]
In particular, the choice $p_n=1/n^2$ satisfies $x_n=1/n\to 0$ and hence has vanishing worst-case
competitive ratio on this hard family.

\item If $x_n\to\infty$, then
\[
\inf_{1/n\le y\le \sqrt{n}}\mathrm{CR}_n(p_n,y)\longrightarrow 0.
\]
\end{enumerate}
\end{lemma}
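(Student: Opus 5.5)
The plan is to combine the finite-$n$ upper bound \eqref{eq:EXtau-lambda-bound2} with the exact asymptotics of the prophet benchmark. For each regime of $x_n$ we then exhibit a choice of $y\in[1/n,\sqrt n]$ that drives the ratio to zero. Throughout, the prophet satisfies $\mathbb{E}[\max_i X_i]\ge y(1-1/n^2)^n + \Gamma_{0}\,n\cdot\mathbb{P}(\text{rare in phase }1)\ge \max\{y(1-o(1)),\,\lambda_1+o(1)\}$. So the denominator is at least of order $\max\{y,\lambda_1\}$, uniformly in $y$ in the stated range. Since $\Gamma_{s-1}\le1$, the sum over phases in \eqref{eq:EXtau-lambda-bound2} is bounded by the single-phase case $S=1$:
\[
\mathbb{E}[X_\tau]\le\Bigl(\frac1n+yp\Bigr)\sum_{i=1}^n(1-p)^{i-1}=\Bigl(\frac1n+yp\Bigr)\frac{1-(1-p)^n}{p}.
\]

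\textbf{Case $x_n\to0$.} Here $1-(1-p)^n\le np=x_n$, so the bound becomes $\mathbb{E}[X_\tau]\le (1/n+yp)\,n=1+yx_n$. Since this term does not vanish, one more step is needed. Choose $y=y_n\to\infty$ with $y_n\le\sqrt n$ and $y_nx_n\to0$, for instance $y_n=\min\{\sqrt n,\,x_n^{-1/2}\}$. Then $\mathbb{E}[X_\tau]\le 1+o(1)$ while the prophet is at least $y_n(1-o(1))\to\infty$, so the ratio tends to $0$. Intuitively, a policy that almost never stops forgoes the common value, which the adversary makes large. For $p_n=1/n^2$ we have $x_n=1/n$, and the choice $y_n=\sqrt n$ works, which gives the ``in particular'' clause.

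\textbf{Case $x_n\to\infty$.} Now $(1-(1-p)^n)/p\le 1/p=n/x_n$, so the bound becomes $\mathbb{E}[X_\tau]\le (1/n+yp)\,n/x_n=1/x_n+y$. Take $y=1/n$, the smallest admissible value. Then $\mathbb{E}[X_\tau]\le 1/x_n+1/n\to0$, while the prophet is at least $\lambda_1+o(1)>0$, since $\lambda_1>0$ and the rare value in phase $1$ is undiscounted. The ratio therefore tends to $0$. Intuitively, a policy that stops too eagerly almost surely takes a tiny common value before any rare value arrives.

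The main care point is uniformity of the prophet lower bound. In case 1 we need $\mathbb{E}[\max X_i]\ge y(1-1/n^2)^n$, which is immediate because every $X_i$ in phase $1$ is at least $y$ on the no-rare event. In case 2 we need the rare-value contribution $n\bigl(1-(1-1/n^2)^{n_1}\bigr)\to\lambda_1$. The rest of the argument is the elementary geometric-sum estimate above, so I expect no real obstacle beyond choosing $y_n$ correctly in case 1 so that $y_nx_n\to0$.
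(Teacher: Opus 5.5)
Your proof is correct and follows essentially the same route as the paper: bound $\mathbb{E}[X_\tau]$ through $pR_i(p)\le \Gamma_{s-1}/n+\Gamma_{s-1}yp$ and the geometric sums, then push $y$ to the top of its range when $x_n\to0$ and take $y=1/n$ when $x_n\to\infty$. The differences are cosmetic: you collapse the phases using $\Gamma_{s-1}\le 1$, and you use crude lower bounds on the prophet instead of $y+C_S+o(1)$. Your extra choice $y_n=\min\{\sqrt n,x_n^{-1/2}\}$ is also unnecessary, since $y=\sqrt n$ already gives a ratio of at most $(1/\sqrt n+x_n)(1+o(1))\to0$, which is exactly what the paper does.
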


\begin{proof}
Recall the bound \eqref{eq:EXtau-lambda-bound2}: for every $p\in(0,1]$,
\[
\mathbb{E}[X_\tau]
\;\le\;
\sum_{s=1}^{S}\Bigl(\frac{\Gamma_{s-1}}{n} + \Gamma_{s-1} y p \Bigr)
(1-p)^{N_{s-1}}\,\frac{1-(1-p)^{n_s}}{p}.
\]
Also, from the earlier computation of the offline optimum,
\[
\mathbb{E}\!\left[\max_{i\in[n]}X_i\right] = y + C_S + o(1),
\qquad
C_S:=\sum_{s=1}^S \lambda_s\Gamma_{s-1} \in (0,1].
\]

\smallskip
\noindent\textit{Case 1: $x_n=n p_n\to 0$.}
Choose $y=\sqrt{n}$. Using $(1-p)^{N_{s-1}}\le 1$ and $1-(1-p)^{n_s}\le n_s p$, we get
\begin{align*}
\mathbb{E}[X_\tau]
&\le
\sum_{s=1}^S\Bigl(\frac{\Gamma_{s-1}}{n} + \Gamma_{s-1} y p \Bigr)\, n_s \\
&=
\sum_{s=1}^S \Gamma_{s-1}\frac{n_s}{n}
\;+\;
y p\sum_{s=1}^S \Gamma_{s-1}n_s \\
&=
C_S+o(1) \;+\; y\,(n p)\, (C_S+o(1))
\;=\;
C_S+o(1) \;+\; \sqrt{n}\,x_n\,(C_S+o(1)).
\end{align*}
Since $\mathbb{E}[\max_i X_i]=\sqrt{n}+C_S+o(1)$, we obtain
\[
\mathrm{CR}_n(p_n,\sqrt{n})
\;\le\;
\frac{C_S+o(1)+\sqrt{n}\,x_n\,(C_S+o(1))}{\sqrt{n}+C_S+o(1)}
\;\le\;
x_n C_S \;+\; \frac{C_S}{\sqrt{n}} \;+\; o(1)
\;\longrightarrow\; 0.
\]
Hence $\inf_{y}\mathrm{CR}_n(p_n,y)\to 0$.

\smallskip
\noindent\textit{Case 2: $x_n=n p_n\to\infty$.}
Choose $y=1/n$. Then
\[
\frac{\Gamma_{s-1}}{n}+\Gamma_{s-1} y p
=\frac{\Gamma_{s-1}}{n}\Bigl(1+p\Bigr)\le \frac{2\Gamma_{s-1}}{n}.
\]
Plugging into \eqref{eq:EXtau-lambda-bound2} and using $1-(1-p)^{n_s}\le 1$ yields
\[
\mathbb{E}[X_\tau]
\le
\frac{2}{n p}\sum_{s=1}^S \Gamma_{s-1}(1-p)^{N_{s-1}}.
\]
For $s=1$, $N_0=0$ so the term equals $\Gamma_0=1$.
For $s\ge 2$, we have $N_{s-1}\ge n_1$, hence
\[
(1-p)^{N_{s-1}} \le (1-p)^{n_1}\le \exp(-n_1 p)=\exp(-(\lambda_1+o(1)) x_n)\to 0.
\]
Therefore $\sum_{s=1}^S \Gamma_{s-1}(1-p)^{N_{s-1}} = 1+o(1)$ and thus
\[
\mathbb{E}[X_\tau] \le \frac{2+o(1)}{n p} = \frac{2+o(1)}{x_n}\longrightarrow 0.
\]
Meanwhile $\mathbb{E}[\max_i X_i]= y+C_S+o(1)=C_S+o(1)$ for $y=1/n$, so
\[
\mathrm{CR}_n(p_n,1/n)
=
\frac{\mathbb{E}[X_\tau]}{\mathbb{E}[\max_i X_i]}
\le
\frac{(2+o(1))/x_n}{C_S+o(1)}
\longrightarrow 0.
\]
Hence $\inf_{y}\mathrm{CR}_n(p_n,y)\to 0$.

\smallskip
Combining the two cases proves the claim and justifies restricting to $p=x/n$ with fixed $x\in(0,\infty)$.
\end{proof}

From the above lemma, we set $p = x/n$ with fixed  $x>0$ and let $n\to\infty$.
Using $n_s=\lambda_s n$ and $N_{s-1} = (\sum_{r<s}\lambda_r)n$, we have, for each fixed $x>0$ and $s\in[S]$,
\[
    (1-p)^{N_{s-1}}
    = \Bigl(1-\frac{x}{n}\Bigr)^{N_{s-1}}
    \longrightarrow \exp\bigl(-x\Lambda_{s-1}\bigr),
\]
\[
    1-(1-p)^{n_s}
    = 1-\Bigl(1-\frac{x}{n}\Bigr)^{n_s}
    \longrightarrow 1-e^{-x\lambda_s},
\]
and
\[
    \frac{\Gamma_{s-1}}{n} + \Gamma_{s-1} y p
    = \frac{\Gamma_{s-1}}{n} + \Gamma_{s-1}y\frac{x}{n}
    = \frac{\Gamma_{s-1}}{n}(1+yx).
\]
The factor $1/p = n/x$ cancels the $1/n$, so from \eqref{eq:EXtau-lambda-bound2} we obtain, as $n\to\infty$,
\[
    \mathbb{E}[X_\tau]
    \;\le\;
    \frac{1+yx}{x}
    \sum_{s=1}^S \Gamma_{s-1}
    \exp\bigl(-x\Lambda_{s-1}\bigr)
    \bigl(1 - e^{-x\lambda_s}\bigr)+o(1).
\]
Define the phase-weighted function
\begin{equation}\label{eq:def-A-lambda}
    A_S(x)
    := \sum_{s=1}^S \Gamma_{s-1}
       \exp\bigl(-x\Lambda_{s-1}\bigr)
       \bigl(1 - e^{-x\lambda_s}\bigr),
    \qquad x>0.
\end{equation}
Then, asymptotically,
\[
    \mathbb{E}[X_\tau]
    \;\le\;
    \Bigl(\frac{1}{x} + y\Bigr)A_S(x)+o(1),
    \qquad x>0.
\]
Dividing by the optimal expected reward $y + C_S $ gives the competitive ratio
\[
    \mathrm{CR}_n(\tau)
    \;\le\;
    \frac{\bigl(\frac{1}{x}+y\bigr)A_S(x)}{y + C_S }+o(1),
    \qquad x>0,\; y>0.
\]
and define $f_S(x,y)
    := \frac{\bigl(\frac{1}{x}+y\bigr)A_S(x)}{y + C_S }$.

\vspace{3mm}
\textbf{Step 1: Worst-case in $y$ for fixed $x$.} Fix decay factors $\gamma=(\gamma_1,\dots,\gamma_{S-1})$, a phase profile $\lambda = (\lambda_1,\dots,\lambda_S)$ and $x>0$.
Define
\[
    g_x(y) := \frac{\frac{1}{x}+y}{y+C_S }, \qquad y>0.
\]
Then $f_S(x,y) = A_S(x)\,g_x(y)$, and a direct differentiation gives
\[
    g_x'(y)
    = \frac{C_S  - 1/x}{(y+C_S )^2}.
\]
Note that $1/n\le y\le \sqrt{n}$ in the hard instance. Thus:
\begin{itemize}
    \item If $x < 1/C_S $, then $1/x>C_S $ and $g_x'(y)<0$, so
    $g_x$ is strictly decreasing in $y$, with
    \[
        g_x(1/n) = \frac{1}{C_S  x}+o(1) > 1,
        \qquad
        g_x(\sqrt{n}) = 1+o(1).
    \]
    Hence $\inf_{1/n\le y\le \sqrt{n}}g_x(y) = 1+o(1)$.

    \item If $x > 1/C_S $, then $1/x<C_S $ and $g_x'(y)>0$, so
    $g_x$ is strictly increasing in $y$, with
    \[
        g_x(1/n) = \frac{1}{C_S  x}+o(1) < 1,
        \qquad
        g_x(\sqrt{n}) = 1+o(1).
    \]
    Hence $\inf_{1/n\le y\le\sqrt{n}}g_x(y) = 1/(C_S x)+o(1)$.

    \item If $x = 1/C_S $, then $g_x'(y)=0$ for all $y$, so $g_x(y)\equiv 1$.
\end{itemize}
Therefore, for each fixed $x>0$,
\[
    \inf_{1/n\le y\le\sqrt{n}} f_S(x,y)
    =A_S(x)\,\inf_{1/n\le y\le\sqrt{n}}g_x(y)
    =
    \begin{cases}
        A_S(x)+o(1), & 0<x\le x_S,\\[4pt]
        \dfrac{A_S(x)}{C_S  x}+o(1), & x\ge x_S,
    \end{cases}
\]
where
\[
    x_S := \frac{1}{C_S }.
\]
Define the worst-case competitive ratio for threshold $x$ as
\[
    H_S(x)
    :=
    \begin{cases}
        A_S(x), & 0<x\le x_S,\\[4pt]
        \dfrac{A_S(x)}{C_S  x}, & x\ge x_S.
    \end{cases}
\]
so that for any fixed $x>0$, $ H_S(x)=\lim_{n\to\infty}\inf_{1/n\le y\le\sqrt{n}} f_S(x,y)$.
Then, for any single-threshold rule (i.e., any choice of $x$), there exists a corresponding
instance parameter $y\ge0$ such that $\mathrm{CR}\le H_S(x)$.
The best competitive ratio achievable by single-threshold rules on this hard family is
therefore bounded by
\[
    \sup_{x>0} H_S(x).
\]

\vspace{3mm}
\textbf{Step 2: Properties of $A_S(x)$.} We now show that $A_S(x)$ is strictly increasing and strictly concave.

From \eqref{eq:def-A-lambda},
\[
    A_S(x)
    = \sum_{s=1}^S \Gamma_{s-1}\bigl(e^{-x\Lambda_{s-1}} - e^{-x\Lambda_s}\bigr).
\]
Introduce the exponents $\Lambda_0=0,\Lambda_1,\dots,\Lambda_S$ and collect terms with
the same exponent:
\begin{itemize}
    \item For $j=0$, exponent $\Lambda_0=0$ appears only in $s=1$, with coefficient
    $m_0 = \gamma^0 = 1$.

    \item For $1\le j\le S-1$, exponent $\Lambda_j$ appears
    in the term $s=j$ with coefficient $-\Gamma_{j-1}$ and
    in the term $s=j+1$ with coefficient $+\Gamma_{j}$, so the net coefficient is
    \[
        m_j = \Gamma_{j} - \Gamma_{j-1} = \Gamma_{j-1}(\gamma_j-1).
    \]

    \item For $j=S$, exponent $\Lambda_S$ appears only as $-\Gamma_{S-1}e^{-x\Lambda_S}$,
    so $m_S = -\Gamma_{S-1}$.
\end{itemize}
Thus we can write
\[
    A_S(x)
    = \sum_{j=0}^S m_j e^{-x\Lambda_j},
\]
with
\[
    m_0 = 1>0,\qquad
    m_j = \Gamma_{j-1}(\gamma_j-1)\le0\ \ (1\le j\le S-1),\qquad
    m_S = -\Gamma_{S-1}\le0.
\]
Differentiating termwise,
\[
    A_S'(x)
    = \sum_{j=0}^S m_j(-\Lambda_j)e^{-x\Lambda_j}
    = -\sum_{j=1}^S m_j\Lambda_j e^{-x\Lambda_j},
\]
since $\Lambda_0=0$.
For each $j\ge1$, we have $\Lambda_j>0$, $e^{-x\Lambda_j}>0$, and $m_j\le0$, hence
\[
    -m_j\Lambda_j e^{-x\Lambda_j} \ge 0.
\]
Summing, we obtain
\[
    A_S'(x) \ge 0
    \qquad\text{for all }x>0.
\]
Similarly,
\[
    A_S''(x)
    = \sum_{j=1}^S m_j(-\Lambda_j)^2 e^{-x\Lambda_j}
    = \sum_{j=1}^S m_j\Lambda_j^2 e^{-x\Lambda_j}.
\]
For each $j\ge1$, $m_j\le0$, $\Lambda_j^2>0$, and $e^{-x\Lambda_j}>0$, so
\[
    m_j\Lambda_j^2 e^{-x\Lambda_j} \le 0,
\]
and therefore
\[
    A_S''(x) \le 0
    \qquad\text{for all } x>0.
\]
Finally, from \eqref{eq:def-A-lambda}, $A_S(0)=0$.  
Thus $A_S$ is strictly increasing and strictly concave on $(0,\infty)$, with
$A_S(0)=0$.
Consider the average slope
\[
    B_S(x) := \frac{A_S(x)}{x}, \qquad x>0.
\]
A standard property of concave functions with $A_S(0)=0$ is that
$B_S(x)$ is non-increasing in $x$: for any $0<a<b$,
\[
    A_S(a)
    \ge \frac{a}{b}A_S(b) + \Bigl(1-\frac{a}{b}\Bigr)A_S(0)
    = \frac{a}{b}A_S(b),
\]
so $A_S(a)/a \ge A_S(b)/b$, i.e., $B_S(a)\ge B_S(b)$.

\vspace{3mm}
\textbf{Step 3: Maximizing $H_S(x)$.} We now bound $H_S(x)$.

\smallskip\noindent
(i) For $0<x\le x_S$, we have
\[
    H_S(x) = A_S(x) \le A_S\bigl(x_S\bigr),
\]
since $A_S$ is strictly increasing.

\smallskip\noindent
(ii) For $x\ge x_S$,
\[
    H_S(x)
    = \frac{A_S(x)}{C_S  x}
    = \frac{B_S(x)}{C_S }.
\]
Since $B_S(x)$ is non-increasing and $x\ge x_S$,
\[
    B_S(x) \le B_S\bigl(x_S\bigr)
    = \frac{A_S\bigl(x_S\bigr)}{x_S}.
\]
Thus
\[
    H_S(x)
    = \frac{B_S(x)}{C_S }
    \le \frac{B_S\bigl(x_S\bigr)}{C_S }
    = \frac{A_S\bigl(x_S\bigr)/x_S}{C_S }.
\]
By definition $x_S = 1/C_S $, so
\[
    \frac{A_S\bigl(x_S\bigr)}{x_S}
    = C_S \,A_S\bigl(x_S\bigr),
\]
and hence
\[
    H_S(x)
    \le \frac{C_S \,A_S\bigl(x_S\bigr)}{C_S }
    = A_S\bigl(x_S\bigr).
\]
Combining (i) and (ii), we obtain
\[
    H_S(x) \le A_S\bigl(x_S\bigr)
    \qquad\text{for all } x>0,
\]
with equality at $x=x_S$. Therefore
\[
   \lim_{n\to\infty} \sup_{x>0}\inf_{1/n\le y\le \sqrt{n}} f_S(x,y)
    = \sup_{x>0} H_S(x)
    = H_S\bigl(x_S\bigr)
    = A_S\bigl(x_S\bigr).
\]
In particular, the worst-case competitive ratio of any single-threshold algorithm with $x>0$ on this
hard instance family is at most
$A_S\bigl(x_S\bigr)$, where
\[
    A_S(x)
    = \sum_{s=1}^S \Gamma_{s-1}
       \exp\bigl(-x\Lambda_{s-1}\bigr)
       \bigl(1 - e^{-x\lambda_s}\bigr),
    \qquad
    x_S = \frac{1}{C_S }
    = \frac{1}{
        \sum_{s=1}^{S}\lambda_s \Gamma_{s-1}
      }.
\]

%% file: proof_DP.tex
\subsection{Proof of Theorem~\ref{prop:dp-worst}: Worst-case analysis for dynamic programming (DP)}\label{app:DP}
We assume equal phase lengths $n_s=n/S$ for simplicity; rounding effects are $O(1)$ and do not affect the asymptotic bound.   We first show the upper bound of the competitive ratio for the policy of dynamic programming $\tau_{DP}$. Then we finalize the proof with $\limsup_{n\to\infty}\mathrm{CR}_n(\tau)\le \limsup_{n\to\infty}\mathrm{CR}_n(\tau_{DP}).$ 

\vspace{3mm}
\textbf{Setup (common decay, equal-length phases).}
Fix an integer $S\ge 1$ and assume $S\mid n$. Let $m:=n/S$ and partition the horizon into $S$
consecutive phases
\[
\mathcal T_s:=\{(s-1)m+1,\ldots,sm\},\qquad s=1,\ldots,S.
\]
Assume a common decay factor $\gamma_t=\widetilde\gamma\in[0,1)$ for $t\in[S-1]$ so that for $i\in\mathcal T_s$,
\[
X_i=\widetilde\gamma^{\,s-1}Y_i.
\]
Consider the two-point ``hard'' base distribution:
\[
Y_i=
\begin{cases}
n & \text{with probability } q:=1/n^2,\\
y & \text{with probability } 1-q,
\end{cases}
\qquad i=1,\ldots,n,
\]
where $y\in[0,n)$ is a parameter.

\vspace{3mm}
\textbf{Dynamic programming value recursion.}
Let $V_i$ denote the optimal expected reward \emph{before} observing $X_i$, given we have not
stopped yet. Then the Bellman recursion is
\[
V_i=\mathbb E\big[\max\{X_i,V_{i+1}\}\big],\qquad V_{n+1}=0.
\]

\vspace{3mm}
\textbf{Step 1: DP always accepts upon seeing $Y=n$.}
Fix $i\in\mathcal T_s$. If $Y_i=n$, then $X_i=\widetilde\gamma^{\,s-1}n$. Since future phases have
coefficients at most $\widetilde\gamma^{\,s-1}$ and always $Y_j\le n$, we have
$\max_{j\ge i+1} X_j\le \widetilde\gamma^{\,s-1}n$ almost surely, and hence $V_{i+1}\le
\widetilde\gamma^{\,s-1}n$. Therefore $\max\{X_i,V_{i+1}\}=X_i$ and it is optimal to stop.

\vspace{3mm}
\textbf{Step 2: DP never accepts $Y=y$ strictly inside a phase.}
Fix $i\in\mathcal T_s$ with $i<sm$ (i.e.\ not the last time in phase $s$). If $Y_i=y$, then
$X_i=\widetilde\gamma^{\,s-1}y$. Consider continuing to time $i+1$. Inside the same phase, the
coefficient remains $\widetilde\gamma^{\,s-1}$; at time $i+1$ we see $Y_{i+1}=n$ with probability
$q$ and $Y_{i+1}=y$ with probability $1-q$. Hence a lower bound on the continuation value is
\[
V_{i+1}\ \ge\ q\cdot \widetilde\gamma^{\,s-1}n + (1-q)\cdot \widetilde\gamma^{\,s-1}y
= \widetilde\gamma^{\,s-1}y + q\,\widetilde\gamma^{\,s-1}(n-y)
>\widetilde\gamma^{\,s-1}y = X_i,
\]
since $q>0$ and $n>y$. Thus $\max\{X_i,V_{i+1}\}=V_{i+1}$ and it is strictly better to continue.
Therefore the DP policy can only ever accept a $y$-realization at \emph{phase boundaries} $i=sm$.

\vspace{3mm}
\textbf{Step 3: DP reduces to a phase-threshold rule.}
By Steps 1--2, any optimal policy is of the following form: accept immediately upon the first
occurrence of $Y=n$; otherwise, if no $n$ has appeared yet, accept $Y=y$ at the end of some phase.
Hence it suffices to consider the one-parameter family of stopping times $\{\tau_K\}_{K=1}^S$:
\[
\tau_K := \min\Big\{\,\min\{i\le Km : Y_i=n\},\ Km \Big\},\qquad K=1,\ldots,S,
\]
i.e.\ stop at the first $n$ observed up to the end of phase $K$, and if none occurs, stop at time
$Km$ and take $\widetilde\gamma^{K-1}y$. The DP optimal stopping time is
$\tau^\star\in\arg\max_{1\le K\le S}\mathbb E[X_{\tau_K}]$.

\vspace{3mm}
\textbf{Step 4: Exact expected reward of $\tau_K$.}
Let
\[
r := \Pr(\text{no $n$ occurs in a phase})=(1-q)^m,\qquad p:=1-r.
\]
Then $\Pr(\text{no $n$ occurs in phases }1,\ldots,K)=r^K$, and
$\Pr(\text{the first phase containing an $n$ is }s)=r^{s-1}p$.
Therefore
\begin{align}
\mathbb E[X_{\tau_K}]
&= r^K\cdot \widetilde\gamma^{K-1}y
\;+\;\sum_{s=1}^K r^{s-1}p\cdot \widetilde\gamma^{s-1}n. \label{eq:EK}
\end{align}
Similarly, the prophet's expected maximum satisfies
\begin{align}
\mathbb E[\mathrm{OPT}]
&= r^S\cdot y
\;+\;\sum_{s=1}^S r^{s-1}p\cdot \widetilde\gamma^{s-1}n. \label{eq:OPT}
\end{align}

\vspace{3mm}
\textbf{Step 5: In the equal-length/common-decay case, the optimal $K$ is $1$ or $S$.}
Compute the one-step increment using \eqref{eq:EK}:
\begin{align*}
\mathbb E[X_{\tau_{K+1}}]-\mathbb E[X_{\tau_K}]
&= r^{K+1}\widetilde\gamma^{K}y - r^{K}\widetilde\gamma^{K-1}y
\;+\; r^{K}p\,\widetilde\gamma^{K}n \\
&= r^{K}\widetilde\gamma^{K-1}\Big( \widetilde\gamma n(1-r) + y(\widetilde\gamma r-1)\Big).
\end{align*}
Since $r^{K}\widetilde\gamma^{K-1}\ge0$, the sign is determined by
\[
D := \widetilde\gamma n(1-r) + y(\widetilde\gamma r-1),
\]
which does not depend on $K$. Hence $\mathbb E[X_{\tau_K}]$ is monotone in $K$, and the optimizer
satisfies $K^\star\in\{1,S\}$ (ties possible).

\vspace{3mm}
\textbf{Step 6: Asymptotics as $n\to\infty$.}
With $q=1/n^2$ and $m=n/S$,
\[
r=(1-q)^m=(1-1/n^2)^{n/S} = 1-\frac{1}{Sn}+o(1/n),
\qquad
p=1-r=\frac{1}{Sn}+o(1/n).
\]
Define the decay-average constant 
\[
c_{\widetilde\gamma}:=\frac{1}{S}\sum_{s=1}^S \widetilde\gamma^{s-1}
=
\begin{cases}
\dfrac{1-\widetilde\gamma^S}{S(1-\widetilde\gamma)}, & \widetilde\gamma\neq 1,\\[6pt]
1, & \widetilde\gamma=1.
\end{cases}
\]
Using \eqref{eq:EK}--\eqref{eq:OPT} and the above expansions, as $n\to\infty$, one obtains
\[
\mathbb E[\mathrm{OPT}] \to y + c_{\widetilde\gamma},
\qquad
\mathbb E[X_{\tau_1}] \to y + \frac{1}{S},
\qquad
\mathbb E[X_{\tau_S}] \to \widetilde\gamma^{S-1}y + c_{\widetilde\gamma}.
\]
Therefore the asymptotic competitive ratios for $K=1$ and $K=S$ are
\[
A(y):=\lim_{n\to\infty}\mathrm{CR}_n(\tau_1)=\frac{y+1/S}{y+c_{\widetilde\gamma}},
\qquad
B(y):=\lim_{n\to\infty}\mathrm{CR}_n(\tau_S)=\frac{\widetilde\gamma^{S-1}y+c_{\widetilde\gamma}}{y+c_{\widetilde\gamma}}.
\]
Since the DP policy selects the better of these two,
\[
\lim_{n\to\infty}\mathrm{CR}_n(\tau_{DP})=\max\{A(y),B(y)\}.
\]

\vspace{3mm}
\textbf{Step 7: Worst-case choice of $y$.}
Assume $\widetilde\gamma\in[0,1)$.
Then $A(y)$ is increasing in $y$ (as $c_{\widetilde\gamma}\ge 1/S$) and $B(y)$ is decreasing in $y$
(as $\widetilde\gamma^{S-1}<1$). Hence $\max\{A(y),B(y)\}$ is minimized at the unique intersection
$A(y)=B(y)$:
\[
\frac{y+1/S}{y+c_{\widetilde\gamma}}=\frac{\widetilde\gamma^{S-1}y+c_{\widetilde\gamma}}{y+c_{\widetilde\gamma}}
\quad\Longleftrightarrow\quad
y+\frac{1}{S}=\widetilde\gamma^{S-1}y+c_{\widetilde\gamma}.
\]
Solving gives
\[
y(1-\widetilde\gamma^{S-1}) = c_{\widetilde\gamma}-\frac{1}{S}.
\]
Using
\[
c_{\widetilde\gamma}-\frac{1}{S}
= \frac{1}{S}\sum_{k=1}^{S-1}\widetilde\gamma^k
= \frac{1}{S}\cdot \frac{\widetilde\gamma(1-\widetilde\gamma^{S-1})}{1-\widetilde\gamma},
\]
we cancel $(1-\widetilde\gamma^{S-1})$ and obtain the worst-case parameter
\[
{
y^\star=\frac{\widetilde\gamma}{S(1-\widetilde\gamma)}\qquad(\widetilde\gamma<1).
}
\]
At $y=y^\star$,  the two ratios coincide, and as $n\to \infty$, the worst-case asymptotic competitive ratio converges as follows:
\[
\mathrm{CR}_n(\tau_{DP})
\to \frac{y^\star+1/S}{y^\star+c_{\widetilde\gamma}}
=\frac{1}{1+\widetilde\gamma-\widetilde\gamma^S}.
\]
In a joint limit
$\widetilde{\gamma}\uparrow 1$ and $S \to \infty$ such that
$\widetilde{\gamma}^{S}\to 0$, we have
\begin{align*}
    \frac{1}{1+\widetilde\gamma-\widetilde\gamma^S}\to \frac{1}{2}
\end{align*}

\vspace{3mm}
\textbf{Step 8: Optimality of DP  (DP dominates any stopping rule).}
Let $\mathcal F_i:=\sigma(Y_1,\ldots,Y_i)=\sigma(X_1,\ldots,X_i)$ be the natural filtration.
A \emph{stopping rule} is an $(\mathcal F_i)$-stopping time $\tau$ taking values in $\{1,\ldots,n\}$.

\begin{proposition}[DP is optimal among all stopping rules]\label{prop:DP_optimal_S_equals_n}
Let $(V_i)_{i=1}^{n+1}$ be defined by the Bellman recursion (value \emph{before} observing $X_i$)
\[
V_{n+1}=0,\qquad 
V_i=\mathbb E\big[\max\{X_i,V_{i+1}\}\big],\qquad i=n,n-1,\ldots,1.
\]
Define the DP stopping time
\[
\tau_{\mathrm{DP}}:=\inf\{\,i\in\{1,\ldots,n\}: X_i\ge V_{i+1}\,\}.
\]
Then for every stopping rule $\tau$ we have
\[
\mathbb E[X_\tau]\ \le\ \mathbb E[X_{\tau_{{DP}}}].
\]
\end{proposition}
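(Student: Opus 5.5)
The natural route is backward induction on the Bellman recursion, via a supermartingale (Snell envelope) argument. The goal is to show that for every stopping rule $\tau$ and every $i$,
\[
\mathbb E\big[X_\tau\mathbbm 1\{\tau\ge i\}\big]\le \mathbb E\big[\max\{X_i,V_{i+1}\}\mathbbm 1\{\tau\ge i\}\big]
\]
and that this holds with equality for $\tau_{\mathrm{DP}}$ when $i=1$. A cleaner way to organize this is to define the process $Z_i:=\max\{X_i,V_{i+1}\}$ for $i\le n$. Here $V_{i+1}$ is a deterministic constant, because the $X_j$ are independent and hence the future value does not depend on the past. We also use the convention $X_{n+1}=0$ for the no-stop outcome $\tau=n+1$.

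The first step is to check the supermartingale property. Since $X_{i+1}$ is independent of $\mathcal F_i$,
\[
\mathbb E[Z_{i+1}\mid\mathcal F_i]=\mathbb E[\max\{X_{i+1},V_{i+2}\}]=V_{i+1}\le Z_i .
\]
So $(Z_i)$ is an $(\mathcal F_i)$-supermartingale that dominates $X_i$.

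The second step applies optional stopping. Since $\tau$ is bounded by $n+1$, we get
\[
\mathbb E[X_\tau]\le \mathbb E[Z_{\tau\wedge n}]\le \mathbb E[Z_1]=V_1 .
\]
If $\tau=n+1$, we set the payoff to $0\le Z_n$. Integrability holds because $\mathbb E[X_i]<\infty$ (finite mean of $\mathcal D$). To keep this elementary, one can instead write the telescoping identity
\[
\mathbb E[Z_{\tau\wedge n}]-\mathbb E[Z_1]=\sum_{i=1}^{n-1}\mathbb E\big[\mathbbm 1\{\tau>i\}(Z_{i+1}-Z_i)\big].
\]
Each summand is $\le0$, because $\{\tau>i\}\in\mathcal F_i$ and $\mathbb E[Z_{i+1}-Z_i\mid\mathcal F_i]\le 0$.

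The third step shows that $\tau_{\mathrm{DP}}$ attains $V_1$. On $\{\tau_{\mathrm{DP}}>i\}$ we have $X_i<V_{i+1}$, so $Z_i=V_{i+1}=\mathbb E[Z_{i+1}\mid\mathcal F_i]$. Hence each telescoping term vanishes and $\mathbb E[Z_{\tau_{\mathrm{DP}}\wedge n}]=V_1$. At the stopping time, either $X_{\tau_{\mathrm{DP}}}\ge V_{\tau_{\mathrm{DP}}+1}$, so $Z=X$, or no stop occurs before $n$. In the latter case, at $i=n$ we have $V_{n+1}=0\le X_n$ since rewards are nonnegative, so $\tau_{\mathrm{DP}}\le n$ always and $Z_n=X_n$. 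Thus $\mathbb E[X_{\tau_{\mathrm{DP}}}]=V_1\ge\mathbb E[X_\tau]$.

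The main subtlety is handling the boundary case $\tau=n+1$ and ties $X_i=V_{i+1}$ correctly, together with verifying that $V_{i+1}$ is non-random. The latter relies on the independence of the $X_j$, since the filtration is generated by the past only. All of these are straightforward in this setting.
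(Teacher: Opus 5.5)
Your proof is correct and follows essentially the paper's argument: a Snell-envelope supermartingale, the telescoping bound over $\{\tau>i\}\in\mathcal F_i$, and the martingale property strictly before the DP stopping time. By identifying the envelope explicitly as $\max\{X_i,V_{i+1}\}$ via independence and handling the outcome $\tau=n+1$, you also make explicit two points the paper leaves implicit: that its $\tau^\star$ coincides with $\tau_{\mathrm{DP}}$, and the no-stop case.
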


\begin{proof}
We give a standard Snell-envelope argument and then connect it to the scalar recursion above.

\paragraph{(A) Define the Snell envelope.}
Let $Z_i:=X_i$. Define the process $(W_i)_{i=1}^{n+1}$ by
\begin{equation}\label{eq:Snell}
W_{n+1}:=0,\qquad 
W_i:=\max\Big\{Z_i,\ \mathbb E[W_{i+1}\mid\mathcal F_i]\Big\},\qquad i=n,\ldots,1.
\end{equation}
Then $W_i$ is $\mathcal F_i$-measurable and satisfies $W_i\ge Z_i$ almost surely for all $i$.

\paragraph{(B) $(W_i)$ is a supermartingale.}
By construction,
\[
W_i\ \ge\ \mathbb E[W_{i+1}\mid\mathcal F_i],
\]
so $(W_i)_{i=1}^{n+1}$ is a supermartingale.

\smallskip
\paragraph{(C) Any stopping rule is dominated by $W_1$.}
Let $\tau$ be any $(\mathcal F_i)$-stopping time with values in $\{1,\ldots,n\}$. Since $Z_\tau\le W_\tau$
a.s., it suffices to bound $\mathbb E[W_\tau]$. Because $(W_i)$ is a supermartingale and $\tau\le n$ is bounded, we may write
\[
\mathbb E[W_\tau]
=\mathbb E[W_1]+\sum_{i=1}^{n-1}\mathbb E\big[(W_{i+1}-W_i)\mathbf 1\{\tau>i\}\big].
\]
Now $\mathbf 1\{\tau>i\}$ is $\mathcal F_i$-measurable, hence
\[
\mathbb E\big[(W_{i+1}-W_i)\mathbf 1\{\tau>i\}\big]
=\mathbb E\Big[\mathbf 1\{\tau>i\}\,\mathbb E[W_{i+1}-W_i\mid\mathcal F_i]\Big]
\le 0,
\]
since $\mathbb E[W_{i+1}\mid\mathcal F_i]-W_i\le 0$. Therefore $\mathbb E[W_\tau]\le \mathbb E[W_1]$,
and thus
\begin{equation}\label{eq:domination}
\mathbb E[X_\tau]=\mathbb E[Z_\tau]\ \le\ \mathbb E[W_\tau]\ \le\ \mathbb E[W_1].
\end{equation}

\paragraph{(D) DP attains $W_1$.}
Define the stopping time
\[
\tau^\star:=\inf\{\,i\in\{1,\ldots,n\}: Z_i=W_i\,\}.
\]
On the event $\{i<\tau^\star\}$ we have $Z_i<W_i$, hence by \eqref{eq:Snell} necessarily
\[
W_i=\mathbb E[W_{i+1}\mid\mathcal F_i]\qquad\text{on }\{i<\tau^\star\}.
\]
This implies the stopped process $(W_{i\wedge\tau^\star})_{i=1}^{n+1}$ is a martingale, and since
$\tau^\star\le n$ is bounded,
\[
\mathbb E[W_{\tau^\star}]=\mathbb E[W_1].
\]
Moreover, by definition of $\tau^\star$ we have $W_{\tau^\star}=Z_{\tau^\star}=X_{\tau^\star}$ almost surely.
Hence
\begin{equation}\label{eq:attain}
\mathbb E[W_1]=\mathbb E[W_{\tau^\star}]
=\mathbb E[Z_{\tau^\star}]
=\mathbb E[X_{\tau^\star}].
\end{equation}
Combining \eqref{eq:domination} and \eqref{eq:attain} yields
\[
\mathbb E[X_\tau]\le \mathbb E[X_{\tau^\star}]=\mathbb E[X_{\tau_{DP}}]\qquad\text{for all stopping rules }\tau.
\]
\end{proof}
\textbf{Conclusion.}
From the above results, we conclude that for any stopping rule $\tau$,
\[
\mathrm{CR}_n(\tau)\le \mathrm{CR}_n(\tau_{DP}).
\]
Moreover, in the joint limit $n\to\infty$, $\widetilde{\gamma}\uparrow 1$, and $S\to\infty$
such that $\widetilde{\gamma}^{S}\to 0$,
\[
\mathrm{CR}_n(\tau_{DP}) \longrightarrow \frac{1}{2}.
\]



%% file: proof_lower_gen.tex
\subsection{Proof of Theorem~\ref{thm:lower2}: Lower bound  under  general decay factors and phase length}\label{app:lower2}
For simplicity, we present the argument for atomless $\mathcal D$; when $\mathcal D$ has atoms,
the same bounds follow by the randomized tie-breaking (see Remark~\ref{rm:atom_proof}). 

We first provide an upper bound on the expected reward of the phase-dependent
threshold policy.
We can write
\begin{align}
\mathbb{E}[X_\tau]
&= \mathbb{E}\Bigg[
\sum_{s=1}^{S} \sum_{i \in \mathcal{T}_s}
X_i \,\mathbbm{1}\{X_i > \widetilde{\alpha}_s\}\,
\mathbbm{1}\Big\{
X_j \le \widetilde{\alpha}_t \ \forall t < s,\ \forall j \in \mathcal{T}_t,
\ \text{and}\
X_j \le \widetilde{\alpha}_s \ \forall j \in \mathcal{T}_s \cap \{1,\dots,i-1\}
\Big\}
\Bigg].
\end{align}
By independence of the $X_i$’s, this yields
\begin{align}
\mathbb{E}[X_\tau]
&= \sum_{i\in \mathcal{T}_1}F_1(\widetilde{\alpha}_1)^{i-1}\mathbb{E}[X_i\mathbbm{1}\{X_i\ge \widetilde{\alpha}_1\}]\cr& \qquad+ \sum_{s=2}^{S} \sum_{i \in \mathcal{T}_s}
\Bigg(\prod_{t=1}^{s-1} F_t(\widetilde{\alpha}_t)^{n_t}\Bigg)
F_s(\widetilde{\alpha}_s)^{\,i-\sum_{t=1}^{s-1}n_t-1}\,
\mathbb{E}\big[X_i \mathbbm{1}\{X_i > \widetilde{\alpha}_s\}\big],
\end{align}
where $F_s$ is the CDF of $X_i$ in phase $s$. In our setting, for $i\in\mathcal{T}_{s}$, $X_i = \Gamma_{s-1} Y_i$ with $Y_i \sim \mathcal{D}$ i.i.d.,
and the thresholds are chosen so that
\[
\mathbb{P}(Y > \widetilde{\alpha}_1) = \frac{\widetilde{a}}{n}\quad \text{and}\quad\mathbb{P}(\Gamma_{s-1}Y > \widetilde{\alpha}_s) = \frac{\widetilde{a}}{n}\qquad \text{for } s \in [2,S],
\]
which implies $F_s(\widetilde{\alpha}_s) = 1 - \widetilde{a}/n$ for all $s$ and
\[
\mathbb{E}[X_i \mathbbm{1}\{X_i > \widetilde{\alpha}_s\}]
= \Gamma_{s-1}\mathbb{E}[Y \mathbbm{1}\{Y > \widetilde{\alpha}_1\}],
\]
by construction of the thresholds.
 Therefore with $F_1(\widetilde{\alpha}_1)=F_s(\widetilde{\alpha}_s)$, we have
\begin{align}
\mathbb{E}[X_\tau]
&= \left(
\sum_{i \in \mathcal{T}_1} F_1(\widetilde{\alpha}_1)^{\,i-1}
+\sum_{s=2}^{S}
\sum_{i \in \mathcal{T}_s}
F_1(\widetilde{\alpha}_1)^{\sum_{t=1}^{s-1}n_t} F_1(\widetilde{\alpha}_1)^{\,i-\sum_{t=1}^{s-1}n_t-1}
\,\Gamma_{s-1}\right)\,\mathbb{E}[Y \mathbbm{1}\{Y > \widetilde{\alpha}_1\}] \nonumber\\
&= \left( \sum_{i \in \mathcal{T}_1} F_1(\widetilde{\alpha}_1)^{\,i-1}+\sum_{s=2}^{S}
F_1(\widetilde{\alpha}_1)^{\sum_{t=1}^{s-1}n_t} 
\left(\sum_{i \in \mathcal{T}_s} F_1(\alpha_1)^{\,i-\sum_{t=1}^{s-1}n_t-1}\right)
\Gamma_{s-1}\right)\mathbb{E}[Y \mathbbm{1}\{Y > \widetilde{\alpha}_1\}].
\end{align}
Using the geometric-series identity and $F_1(\widetilde{\alpha}_1) = 1 - \widetilde{a}/n$, we obtain
\begin{align}\label{eq:exp_reward_alg2}
\mathbb{E}[X_\tau]
&= 
\left(\frac{1 - F_1(\widetilde{\alpha}_1)^{n_1}}{\widetilde{a}}+\sum_{s=2}^{S}\frac{1 - F_1(\widetilde{\alpha}_1)^{n_s}}{\widetilde{a}} F_1(\widetilde{\alpha}_1)^{\sum_{t=1}^{s-1}n_t} \Gamma_{s-1}\,
\right)n\,\mathbb{E}[Y \mathbbm{1}\{Y > \widetilde{\alpha}_1\}].
\end{align}
Now we provide a lower bound for \eqref{eq:exp_reward_alg2}.
Recall $B_G=\sum_{s=1}^{S}n_s\Gamma_{s-1}$. From  Lemma~\ref{lem:multistage_gen},  we have
\begin{align}\mathbb{E}\left[\max_{i\in[B_G]}Y_i\right]\ge \mathbb{E}\left[\max_{i\in[n]}X_i\right].\label{eq:upper_bd_max2}
    \end{align} This is because,   by the standard expression for the expectation of a maximum in terms of its survival function, the  inequality is equivalent to showing that
\begin{align}
    \int_{0}^{\infty} \bigl[1 - F(x)^{B_G}\bigr]\,dx
    \;\ge\;
    \int_{0}^{\infty} \Bigl[1 - \prod_{s=1}^{S} F(x / \Gamma_{s-1})^{n_s}\Bigr]\,dx.
\end{align}
Recall  $Y \sim \mathcal{D}$. First, we have
     \[Y\mathbbm{1}(Y>\widetilde{\alpha}_1)=(Y-\widetilde{\alpha}_1)_++\widetilde{\alpha}_1 \mathbbm{1}(Y>\widetilde{\alpha}_1).\] From the above, with $\mathbb{P}(Y>\widetilde{\alpha}_1)=\widetilde{a}/n=1/B_G$, we have 
\begin{align}\label{eq:B=B+alpha2}
    B\mathbb{E}[Y\mathbbm{1}(Y>\widetilde{\alpha}_1)]=B_G\mathbb{E}[(Y-\widetilde{\alpha}_1)_+]+\widetilde{\alpha}_1.
\end{align}
From $\max_{i\in[B_G]}Y_i\le \widetilde{\alpha}_1+(\max_{i\in[B_G]}Y_i-\widetilde{\alpha}_1)_+$, we also have
\begin{align}\label{eq:Y_max_bd_alpha2}
\mathbb{E}\left[\max_{i\in[B]}Y_i\right]\le \widetilde{\alpha}_1+\mathbb{E}\left[\left(\max_{i\in[B_G]}Y_i-\widetilde{\alpha}_1\right)_+\right]\le \widetilde{\alpha}_1+\sum_{i\in[B_G]}\mathbb{E}[(Y_i-\widetilde{\alpha}_1)_+]=\widetilde{\alpha}_1+B_G\mathbb{E}[(Y-\widetilde{\alpha}_1)_+].
\end{align}
From \eqref{eq:B=B+alpha2}and \eqref{eq:Y_max_bd_alpha2}, we can show that 
\begin{align}
\label{eq:upper_prophet3}
B_G\mathbb{E}[Y\mathbbm{1}(Y>\widetilde{\alpha}_1)]\ge \mathbb{E}\left[\max_{i\in[B_G]}Y_i\right]\ge \mathbb{E}\left[\max_{i\in[n]}X_i\right],\end{align}
where the last inequality is obtained from \eqref{eq:upper_bd_max2}.


Let
\[
\widetilde a
:=\frac{1}{(n_1/n)+\sum_{s=2}^{S}(n_s/n)\Gamma_{s-1}}
=\frac{n}{B_G},
\qquad
p:=F_1(\widetilde{\alpha}_1)=1-\frac{\widetilde a}{n}=1-\frac{1}{B_G}.
\]
Using \eqref{eq:exp_reward_alg2} and the identity $n/\widetilde a = B_G$, we can rewrite
\begin{align}
\mathbb{E}[X_\tau]
&=\left(\frac{1-p^{n_1}}{\widetilde a}
+\sum_{s=2}^{S}\frac{1-p^{n_s}}{\widetilde a}\,p^{\sum_{t=1}^{s-1}n_t}\,\Gamma_{s-1}\right)
n\,\mathbb{E}\big[Y\mathbbm{1}\{Y>\widetilde{\alpha}_1\}\big]\notag\\
&=\left((1-p^{n_1})
+\sum_{s=2}^{S}(1-p^{n_s})\,p^{\sum_{t=1}^{s-1}n_t}\,\Gamma_{s-1}\right)
B_G\,\mathbb{E}\big[Y\mathbbm{1}\{Y>\widetilde{\alpha}_1\}\big]\notag\\
&\ge
\left((1-p^{n_1})
+\sum_{s=2}^{S}(1-p^{n_s})\,p^{\sum_{t=1}^{s-1}n_t}\,\Gamma_{s-1}\right)
\mathbb{E}\Big[\max_{i\in[n]}X_i\Big]\notag\\
&=
\left((1-(1-1/B_G)^{n_1})
+\sum_{s=2}^{S}(1-(1-1/B_G)^{n_s})\,(1-1/B_G)^{\sum_{t=1}^{s-1}n_t}\,\Gamma_{s-1}\right)
\mathbb{E}\Big[\max_{i\in[n]}X_i\Big],
\label{eq:X_tau_lower_bd_prelimit_general}
\end{align}
where the inequality uses \eqref{eq:upper_prophet3}.

\medskip

\noindent
\textbf{Asymptotics.}
Since $\widetilde a = n/B_G$ is $O(1)$ (indeed, $\widetilde a$ depends only on the proportions $\lambda_s$),
we have $B_G=n/\widetilde a\to\infty$ as $n\to\infty$. Moreover, for each fixed $s$,
\[
(1-1/B_G)^{n_s}=\exp\!\Big(-\frac{n_s}{B_G}+o(1)\Big),
\qquad
(1-1/B_G)^{\sum_{t=1}^{s-1}n_t}
=\exp\!\Big(-\frac{\sum_{t=1}^{s-1}n_t}{B_G}+o(1)\Big).
\]
Plugging these into \eqref{eq:X_tau_lower_bd_prelimit_general} with $e^{o(1)}=1+o(1)$ yields
\begin{align}
\mathbb{E}[X_\tau]
&\ge
\Bigg(
\Bigl(1-e^{-n_1/B_G}\Bigr)
+\sum_{s=2}^{S}\Bigl(1-e^{-n_s/B_G}\Bigr)\Gamma_{s-1}
\exp\Bigl(-\frac{\sum_{t=1}^{s-1}n_t}{B_G}\Bigr)
+o(1)
\Bigg)\,
\mathbb{E}\Big[\max_{i\in[n]}X_i\Big].
\label{eq:X_tau_lower_bd_general_asymp}
\end{align}

In particular, if $n_s/n\to \lambda_s$ with $\sum_{s=1}^S\lambda_s=1$, then
$\widetilde a\to \big(\lambda_1+\sum_{s=2}^S\lambda_s\Gamma_{s-1}\big)^{-1}$ and
$n_s/B_G=\widetilde a\,(n_s/n)\to \big(\lambda_1+\sum_{s=2}^S\lambda_s\Gamma_{s-1}\big)^{-1}\lambda_s$, so \eqref{eq:X_tau_lower_bd_general_asymp} implies
\begin{align}
\liminf_{n\to\infty}\frac{\mathbb{E}[X_\tau]}{\mathbb{E}[\max_{i\in[n]}X_i]}
\ge\sum_{s=1}^{S}\Bigl(1-\exp\Bigl(-\frac{\lambda_s}{\sum_{t=1}^S\lambda_t\Gamma_{t-1}}\Bigr)\Bigr)\Gamma_{s-1}\,
\exp\Bigl(-\frac{\sum_{t=1}^{s-1}\lambda_t}{\sum_{t=1}^S\lambda_t\Gamma_{t-1}}\Bigr).
\label{eq:X_tau_lower_bd_general_limit}
\end{align}

%% file: classical_continuous_decay.tex
\subsection{Propositions for optimal competitive ratios in the classical infinite-horizon setting}\label{app:continuous_opt}

\begin{proposition}[Stationary i.i.d.\ infinite horizon: attainability up to $\varepsilon$]
\label{prop:iid_infty_eps}
Let $(X_i)_{i\ge1}$ be i.i.d.\ with distribution $\mathcal D$ on $\mathbb R_+$ and
finite essential supremum $M:=\operatorname{ess\,sup}X_1<\infty$.
Then for every $\varepsilon>0$ there exists a stopping rule $\tau_\varepsilon$ such that
\[
\frac{\mathbb E[X_{\tau_\varepsilon}]}{\mathbb E[\sup_{i\ge1} X_i]} \ge 1-\varepsilon.
\]
In particular, the optimal competitive ratio in this stationary infinite-horizon
setting equals $1$.
\end{proposition}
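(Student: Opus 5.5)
The stopping rule will be a high value threshold. Since $M<\infty$ and each $X_i\le M$ almost surely, the supremum $\sup_{i\ge1}X_i$ equals $M$ almost surely. The reason is that $\mathbb P(X_1>M-\delta)>0$ for every $\delta>0$ (definition of essential supremum), so by Borel--Cantelli (second lemma, independence) infinitely many $X_i$ exceed $M-\delta$ almost surely. Taking $\delta=1/k$ and intersecting over $k$ gives $\sup_i X_i = M$ a.s., hence $\mathbb E[\sup_{i\ge1}X_i]=M$. If $M=0$ the statement is trivial (both sides vanish, or we interpret the ratio as $1$), so assume $M>0$.

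Given $\varepsilon\in(0,1)$, set $\delta:=\varepsilon M$ and define
\[
\tau_\varepsilon:=\inf\{i\ge1: X_i> M-\delta\}.
\]
Let $p:=\mathbb P(X_1>M-\delta)>0$. Then $\tau_\varepsilon$ is a stopping time, and it is geometric with parameter $p$, so $\tau_\varepsilon<\infty$ almost surely. On the event $\{\tau_\varepsilon<\infty\}$ we have $X_{\tau_\varepsilon}>M-\delta=(1-\varepsilon)M$, so
\[
\mathbb E[X_{\tau_\varepsilon}]\ge (1-\varepsilon)M=(1-\varepsilon)\,\mathbb E\Big[\sup_{i\ge1}X_i\Big].
\]
The event $\tau_\varepsilon=\infty$ has probability zero, so it does not matter that its payoff is defined to be $0$. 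This gives the bound.

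For the "in particular" claim, no stopping rule can exceed ratio $1$, since $X_\tau\le\sup_i X_i$ pointwise (with payoff $0$ when $\tau=\infty$). Together with the bound above, the supremum of the competitive ratio over stopping rules equals $1$.

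There is no real obstacle. The only points needing care are two:
\begin{itemize}
\item Showing $\mathbb E[\sup_i X_i]=M$, which uses positivity of $p$ together with independence.
\item Noting that $\tau_\varepsilon$ is almost surely finite, so the convention that never stopping yields $0$ costs nothing.
\end{itemize}
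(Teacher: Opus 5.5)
Your proposal is correct and matches the paper's proof essentially step for step: the same threshold rule $\tau_\varepsilon=\inf\{i\ge1: X_i>(1-\varepsilon)M\}$, the same use of $\tau_\varepsilon<\infty$ almost surely, and the same identity $\mathbb E[\sup_{i\ge1}X_i]=M$. Your additions fill in details the paper leaves implicit. These are the Borel--Cantelli justification of $\sup_i X_i=M$ almost surely, and the separate handling of $M=0$, where the paper's claim $\mathbb P(X_1>a_\varepsilon)>0$ actually fails and the ratio is $0/0$. They also include the pointwise bound $X_\tau\le\sup_i X_i$, which gives the upper half of the ``in particular'' claim.
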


\begin{proof}
Since $M=\operatorname{ess\,sup}X_1<\infty$, we have $\sup_{i\ge1}X_i=M$ almost surely,
hence $\mathbb E[\sup_{i\ge1}X_i]=M$.
Fix $\varepsilon>0$ and define $a_\varepsilon:=(1-\varepsilon)M$.
By definition of essential supremum, $\mathbb P(X_1>a_\varepsilon)>0$.
Let
\[
\tau_\varepsilon := \inf\{i\ge1: X_i>a_\varepsilon\}.
\]
Then $\tau_\varepsilon<\infty$ almost surely and, on the event $\{\tau_\varepsilon=i\}$,
we have $X_{\tau_\varepsilon}>a_\varepsilon$.
Therefore,
\[
\mathbb E[X_{\tau_\varepsilon}] \;\ge\; a_\varepsilon \;=\; M(1-\varepsilon).
\]
Dividing by $\mathbb E[\sup_{i\ge1}X_i]=M$ yields
\[
\mathrm{CR}_\infty(\tau_\varepsilon)
=\frac{\mathbb E[X_{\tau_\varepsilon}]}{M}
\ge 1-\varepsilon.
\]
Since $\varepsilon>0$ is arbitrary, the optimal competitive ratio equals $1$.
\end{proof}

\begin{proposition}[Infinite horizon, non-identical rewards]
\label{prop:nonidentical_infty_half}
For any $\varepsilon\in(0,1)$, there exists a sequence of independent but non-identically distributed rewards
$(X_i)_{i\ge1}$ such that, for any stopping rule $\tau$,
\[
\mathrm{CR}_\infty(\tau)\le \frac1{2-\varepsilon}.
\]
\end{proposition}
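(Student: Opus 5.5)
We will build the classical two-point instance behind Samuel-Cahn's $1/2$ bound, shifted to infinite horizon by padding with zeros. Fix $\varepsilon\in(0,1)$ and set $\delta:=\varepsilon/2$ (any small parameter tied to $\varepsilon$ works). Let $X_1\equiv 1$ deterministically and
\[
X_2=\begin{cases}1/\delta,&\text{with probability }\delta,\\ 0,&\text{with probability }1-\delta,\end{cases}
\]
with $X_i\equiv 0$ for all $i\ge 3$. The variables are independent but not identically distributed. Since $X_i\ge 0$ and all later variables are zero, $\sup_{i\ge1}X_i=\max\{X_1,X_2\}$, and this supremum is integrable.

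First compute the prophet benchmark:
\[
\mathbb E[\sup_i X_i]=\delta\cdot\tfrac1\delta+(1-\delta)\cdot 1=2-\delta.
\]

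Next bound every stopping rule. Any $\tau$, possibly randomized or taking the value $\infty$, reduces to a choice at time $1$. Stopping at time $1$ yields exactly $1$. Continuing yields at most $\mathbb E[X_2]=1$, since every reward after time $2$ is zero and an unstopped run pays $0$. Conditioning on the (trivial) information at time $1$ and on any external randomization therefore gives $\mathbb E[X_\tau]\le 1$. One could also invoke the DP optimality argument of Proposition~\ref{prop:DP_optimal_S_equals_n} truncated at $n=2$, because later stages contribute nothing.

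Dividing gives
\[
\mathrm{CR}_\infty(\tau)\le \frac1{2-\delta}=\frac{1}{2-\varepsilon/2}.
\]
This is weaker than the target $\frac1{2-\varepsilon}$, so the right parameter choice is simply $\delta:=\varepsilon$, which gives $\mathrm{CR}_\infty(\tau)\le 1/(2-\varepsilon)$ exactly.

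The only mildly delicate point is the formal treatment of infinite-horizon stopping times, including $\tau=\infty$ and randomization. We handle it by noting $X_{\tau}\mathbbm 1\{\tau\ge 2\}\le X_2\mathbbm 1\{\tau= 2\}$, and that $\{\tau=2\}$ is independent of $X_2$ given $\mathcal F_1$ only through the decision made before observing $X_2$. Hence $\mathbb E[X_\tau]\le \mathbb P(\tau=1)+\mathbb P(\tau\ge2)\cdot 1\le 1$.
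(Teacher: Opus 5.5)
Your proof is correct and is essentially the paper's argument: the same instance ($X_1\equiv 1$, $X_2\in\{0,1/\varepsilon\}$ with $\mathbb P(X_2=1/\varepsilon)=\varepsilon$, zeros afterwards), the same prophet value $2-\varepsilon$, and the same observation that no stopping rule earns more than $1$. Two small slips: the choice $\delta=\varepsilon/2$ already suffices, because $1/(2-\varepsilon/2)<1/(2-\varepsilon)$ is a stronger bound, not a weaker one; and in your formal step the event independent of $X_2$ is $\{\tau\ge 2\}=\{\tau=1\}^c$, which is decided before $X_2$ is seen, not $\{\tau=2\}$, so you should write $X_2\mathbbm 1\{\tau=2\}\le X_2\mathbbm 1\{\tau\ge 2\}$ before taking expectations.
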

\begin{proof}
Fix $\varepsilon\in(0,1)$.
Define the reward sequence as follows:
\[
X_1 = 1 \quad \text{a.s.},
\]
\[
X_2 =
\begin{cases}
\frac{1}{\varepsilon}, & \text{with probability } \varepsilon,\\
0, & \text{with probability } 1-\varepsilon,
\end{cases}
\]
and
\[
X_i = 0 \quad \text{a.s.} \qquad \text{for all } i\ge3.
\]
The prophet benchmark is
\[
\mathbb E\!\left[\sup_{i\ge1} X_i\right]
= \mathbb E[\max\{X_1,X_2\}]
= (1-\varepsilon)\cdot 1 + \varepsilon\cdot \frac{1}{\varepsilon}
= 2-\varepsilon.
\]
Consider any stopping rule $\tau$.
If $\tau=1$, then $\mathbb E[X_\tau]=1$.
If $\tau=2$, then
\[
\mathbb E[X_\tau]
= (1-\varepsilon)\cdot 0 + \varepsilon\cdot \frac{1}{\varepsilon}
= 1.
\]
Any stopping rule that stops after time $2$ yields zero reward. Any (possibly randomized) stopping rule induces a convex combination of stopping
at times $1$, $2$, or later; since rewards after time $2$ are zero and
$\mathbb{E}[X_1]=\mathbb{E}[X_2]=1$, linearity of expectation implies
$\mathbb{E}[X_\tau]\le 1$. Therefore, for any stopping rule $\tau$,
\[
\mathrm{CR}_\infty(\tau)
=\frac{\mathbb E[X_\tau]}{\mathbb E[\sup_{i\ge1}X_i]}
\le \frac{1}{2-\varepsilon}.
\]

\end{proof}

%% file: proof_S=n.tex
\subsection{Proof of Theorem~\ref{prop:S=n}: Impossibility bound for continuous decay}\label{app:S=n}

We analyze the continuous-decay (unit-phase) model on an \emph{infinite horizon}.
We first upper bound the competitive ratio achieved by the dynamic programming (DP) policy
$\tau_{\mathrm{DP}}$ on a hard instance family. We then use optimality of DP among all stopping
rules to conclude that no stopping policy can achieve a larger competitive ratio on this family.

\vspace{3mm}
\textbf{Setup (unit phases, geometric decay, infinite horizon).}
Set $S=n=\infty$ and $n_s\equiv 1$, so the phases are singletons
\[
\mathcal T_s=\{s\},\qquad s=1,2,\ldots
\]
Fix a common decay factor $\widetilde\gamma\in[0,1)$ and define
\[
X_i=\widetilde\gamma^{\,i-1}Y_i,\qquad i=1,2,\ldots
\]
Consider the two-point ``hard'' base distribution, parameterized by $N>1$:
\[
Y_i=
\begin{cases}
N & \text{with probability } q:=1/N,\\
y & \text{with probability } 1-q,
\end{cases}
\qquad i=1,2,\ldots,
\]
where $y\in[0,N)$ is a parameter. Let $r:=1-q$.

Throughout this proof, the prophet benchmark is
\[
\mathrm{OPT}_\infty:=\sup_{i\ge 1} X_i,
\qquad
\mathbb E[\mathrm{OPT}_\infty]=\int_0^\infty \Bigl(1-\mathbb P(\mathrm{OPT}_\infty\le z)\Bigr)\,dz.
\]

\vspace{3mm}
\textbf{Dynamic programming values (infinite horizon via truncation).}
For each truncation level $T\in\mathbb N$, define the finite-horizon DP values
\[
V_{T+1}^{(T)}:=0,
\qquad
V_i^{(T)}:=\mathbb E\big[\max\{X_i,V_{i+1}^{(T)}\}\big],
\qquad i=T,T-1,\ldots,1.
\]
The sequence $(V_i^{(T)})_{T\ge i}$ is nondecreasing in $T$, and is uniformly bounded by $N$.
Hence the limit
\[
V_i:=\lim_{T\to\infty}V_i^{(T)}
\]
exists for each fixed $i$. We refer to $(V_i)_{i\ge 1}$ as the infinite-horizon DP values.

\vspace{3mm}
\textbf{Step 1: DP always accepts upon seeing $Y=N$.}
Fix $i\ge 1$. If $Y_i=N$, then $X_i=\widetilde\gamma^{\,i-1}N$. For any $j\ge i+1$,
\[
X_j=\widetilde\gamma^{\,j-1}Y_j\le \widetilde\gamma^{\,j-1}N\le \widetilde\gamma^{\,i-1}N=X_i.
\]
Therefore the continuation value satisfies $V_{i+1}\le X_i$, and it is optimal to stop.

\vspace{3mm}
\textbf{Step 2: DP reduces to a time-cutoff rule on $y$.}
The only nontrivial decision is what to do upon observing $Y_i=y$.
Conditioning on the event that the process has not stopped before time $i$ and that no $N$
has appeared so far, the observed history is a deterministic string of $y$'s and carries no
additional information (by independence). Thus the optimal decision upon observing $Y_i=y$
depends only on the time index $i$.

Consequently, DP is captured by a cutoff time $K\in\mathbb N\cup\{\infty\}$:
continue upon observing $y$ for all $i<K$, and stop at time $K$ (if no $N$ has appeared)
taking $\widetilde\gamma^{K-1}y$. Equivalently, it suffices to consider the family of stopping
times $\{\tau_K\}_{K\in\mathbb N\cup\{\infty\}}$:
\[
\tau_K :=
\min\Big\{\,\min\{i\le K : Y_i=N\},\ K \Big\},\qquad K\in\mathbb N,
\]
and
\[
\tau_\infty := \min\{i\ge 1: Y_i=N\}.
\]
The infinite-horizon DP optimal stopping time satisfies
\[
\tau_{\mathrm{DP}}\in\arg\max_{K\in\mathbb N\cup\{\infty\}}\mathbb E[X_{\tau_K}].
\]

\vspace{3mm}
\textbf{Step 3: Exact expected reward of $\tau_K$.}
Recall $r=1-q=1-1/N$. For $K\in\mathbb N$,
\[
\Pr(\text{no $N$ occurs in times }1,\ldots,K)=r^K,\qquad
\Pr(\text{first $N$ occurs at time }s)=r^{s-1}q.
\]
Hence
\begin{align}
\mathbb E[X_{\tau_K}]
&= r^K\,\widetilde\gamma^{K-1}y
\;+\;\sum_{s=1}^K r^{s-1}q\,\widetilde\gamma^{s-1}N. \label{eq:EK_inf}
\end{align}
For $K=\infty$ we obtain the geometric-series identity
\begin{equation}\label{eq:EK_infty}
\mathbb E[X_{\tau_\infty}]
=\sum_{s=1}^\infty r^{s-1}q\,\widetilde\gamma^{s-1}N
=\frac{qN}{1-r\widetilde\gamma}.
\end{equation}

\vspace{3mm}
\textbf{Step 4: The optimal cutoff is $K=1$ or $K=\infty$.}
Using \eqref{eq:EK_inf}, for any $K\in\mathbb N$,
\begin{align*}
\mathbb E[X_{\tau_{K+1}}]-\mathbb E[X_{\tau_K}]
&= r^{K+1}\widetilde\gamma^{K}y - r^{K}\widetilde\gamma^{K-1}y
\;+\; r^{K}q\,\widetilde\gamma^{K}N \\
&= r^{K}\widetilde\gamma^{K-1}\Big( \widetilde\gamma N(1-r) + y(\widetilde\gamma r-1)\Big).
\end{align*}
Since $r^{K}\widetilde\gamma^{K-1}\ge 0$, the sign is determined by
\[
D_N(y):=\widetilde\gamma N(1-r)+y(\widetilde\gamma r-1),
\]
which does not depend on $K$. Therefore $\mathbb E[X_{\tau_K}]$ is monotone in $K$, and the
optimizer satisfies
\[
K^\star\in\{1,\infty\}\qquad (\text{ties possible}).
\]

\vspace{3mm}
\textbf{Step 5: Asymptotics of the two DP endpoints as $N\to\infty$.}
Recall $q=1/N$ and $r=1-q$. For $K=1$,
\begin{equation}\label{eq:tau1_inf}
\mathbb E[X_{\tau_1}]
= ry+qN
=\Big(1-\frac{1}{N}\Big)y+1.
\end{equation}
For $K=\infty$, by \eqref{eq:EK_infty},
\begin{equation}\label{eq:tauinf_inf}
\mathbb E[X_{\tau_\infty}]
=\frac{qN}{1-r\widetilde\gamma}
=\frac{1}{1-\widetilde\gamma+\widetilde\gamma/N}
=\frac{1}{1-\widetilde\gamma}+O\Big(\frac{1}{N}\Big).
\end{equation}

\vspace{3mm}
\textbf{Step 6: Asymptotics of $\mathbb E[\mathrm{OPT}_\infty]$ under the worst-case scaling.}
The worst-case regime is $y=\Theta(1)$. Accordingly, set
\[
y={\alpha}\qquad\text{for some constant }\alpha\ge 0.
\]
Let $T:=\min\{i\ge 1:Y_i=N\}$ denote the first arrival time of $N$, so
$\Pr(T=s)=r^{s-1}q$. Since $\max_{i<T}\widetilde\gamma^{i-1}y=y$ (the largest discounted $y$ is at time $1$),
we have
\begin{equation}\label{eq:OPT_exact_inf}
\mathbb E[\mathrm{OPT}_\infty]
=\sum_{s=1}^\infty r^{s-1}q\,\max\{y,\widetilde\gamma^{s-1}N\}.
\end{equation}

Define the index
\[
s_0:=\min\Big\{s\ge 1:\widetilde\gamma^{s-1}N<y\Big\},
\]
so that $\widetilde\gamma^{s-1}N\ge y$ for $s<s_0$ and $\widetilde\gamma^{s-1}N<y$ for $s\ge s_0$.
Then \eqref{eq:OPT_exact_inf} decomposes as
\begin{equation}\label{eq:OPT_split_inf}
\mathbb E[\mathrm{OPT}_\infty]
= y\sum_{s=s_0}^\infty r^{s-1}q
\;+\;\sum_{s=1}^{s_0-1} r^{s-1}q\,\widetilde\gamma^{s-1}N
= y\,r^{s_0-1} \;+\; qN\sum_{s=1}^{s_0-1}(r\widetilde\gamma)^{s-1}.
\end{equation}
Note that $s_0=O(\log N)$.
Hence $r^{s_0-1}=(1-1/N)^{O(\log N)}=1+O\Big(\frac{\log N}{N}\Big)$, so
\[
y\,r^{s_0-1}=y+O\Big(\frac{\log N}{N}\Big).
\]
Moreover,
\[
qN\sum_{s=1}^{s_0-1}(r\widetilde\gamma)^{s-1}
=\frac{qN}{1-r\widetilde\gamma}\Big(1-(r\widetilde\gamma)^{s_0-1}\Big)
=\frac{qN}{1-r\widetilde\gamma}+O\Big(\frac{1}{N}\Big),
\]
because $(r\widetilde\gamma)^{s_0-1}\le \widetilde\gamma^{s_0-1}<\alpha/N$.
Combining with \eqref{eq:tauinf_inf} yields
\begin{equation}\label{eq:OPT_asymp_inf}
\mathbb E[\mathrm{OPT}_\infty]
= y+\frac{1}{1-\widetilde\gamma}+O\Big(\frac{\log N}{N}\Big)
=\alpha+\frac{1}{1-\widetilde\gamma}+O\Big(\frac{\log N}{N}\Big).
\end{equation}

\vspace{3mm}
\textbf{Step 7: Limiting competitive ratios and worst-case $y$.}
With $y=\alpha$, \eqref{eq:tau1_inf}, \eqref{eq:tauinf_inf}, and \eqref{eq:OPT_asymp_inf} give
\[
\mathbb E[X_{\tau_1}]=\alpha+1+O\Big(\frac{1}{N}\Big),\qquad
\mathbb E[X_{\tau_\infty}]=\frac{1}{1-\widetilde\gamma}+O\Big(\frac{1}{N}\Big),
\]
\[
\mathbb E[\mathrm{OPT}_\infty]=\alpha+\frac{1}{1-\widetilde\gamma}+O\Big(\frac{\log N}{N}\Big).
\]
Therefore the limiting competitive ratios for $K=1$ and $K=\infty$ are
\[
A(\alpha):=\lim_{N\to\infty}\mathrm{CR}(\tau_1)
=\frac{\alpha+1}{\alpha+\frac{1}{1-\widetilde\gamma}},
\qquad
B(\alpha):=\lim_{N\to\infty}\mathrm{CR}(\tau_\infty)
=\frac{\frac{1}{1-\widetilde\gamma}}{\alpha+\frac{1}{1-\widetilde\gamma}}
=\frac{1}{1+\alpha(1-\widetilde\gamma)}.
\]
Since $\tau_{\mathrm{DP}}$ chooses the better endpoint,
\[
\lim_{N\to\infty}\mathrm{CR}(\tau_{\mathrm{DP}})=\max\{A(\alpha),B(\alpha)\}.
\]
For $\widetilde\gamma\in[0,1)$, $A(\alpha)$ is increasing and $B(\alpha)$ is decreasing in $\alpha$,
so $\max\{A(\alpha),B(\alpha)\}$ is minimized at $A(\alpha)=B(\alpha)$, i.e.,
\[
\alpha=\frac{\widetilde\gamma}{1-\widetilde\gamma}.
\]
Thus the worst-case parameter is $y^\star=\alpha=\widetilde\gamma/(1-\widetilde\gamma)$ and
\[
\lim_{N\to\infty}\mathrm{CR}(\tau_{\mathrm{DP}})
=\frac{1}{1+\widetilde\gamma}.
\]

\vspace{3mm}
\textbf{Step 8: Optimality of DP (DP dominates any stopping rule on each instance).}
Let $\mathcal F_i:=\sigma(Y_1,\ldots,Y_i)=\sigma(X_1,\ldots,X_i)$ be the natural filtration.
A stopping rule is an $(\mathcal F_i)$-stopping time $\tau$ taking values in $\mathbb N$; we also allow
$\tau=\infty$ with payoff $X_\infty:=0$.

In our instance, $0\le Y_i\le N$ a.s., hence $0\le X_i\le N$ a.s.\ and $X_i\to 0$ a.s.\ since
$X_i=\widetilde\gamma^{i-1}Y_i$ with $\widetilde\gamma\in[0,1)$.

\medskip
\noindent\textit{(A) Finite-horizon truncations.}
Fix a horizon $T\in\mathbb N$ and consider the truncated problem on times $\{1,\ldots,T\}$.
Define the finite-horizon Snell envelope $(W_i^{(T)})_{i=1}^{T+1}$ by
\begin{equation}\label{eq:Snell_T}
W_{T+1}^{(T)}:=0,\qquad
W_i^{(T)}:=\max\Big\{X_i,\ \mathbb E[W_{i+1}^{(T)}\mid\mathcal F_i]\Big\},
\qquad i=T,\ldots,1.
\end{equation}
Let the corresponding finite-horizon optimal stopping time be
\[
\tau^{(T)}:=\inf\{\,i\in\{1,\ldots,T\}: X_i=W_i^{(T)}\,\},
\qquad (\inf\emptyset:=T).
\]
By the standard finite-horizon Snell-envelope theorem, we have for every $(\mathcal F_i)$-stopping time
$\sigma$ with values in $\{1,\ldots,T\}$,
\begin{equation}\label{eq:finite_domination}
\mathbb E[X_\sigma]\ \le\ \mathbb E[W_1^{(T)}]
\ =\ \mathbb E[X_{\tau^{(T)}}].
\end{equation}
In particular, the finite-horizon DP rule $\tau_{\mathrm{DP}}^{(T)}$ attains $\mathbb E[W_1^{(T)}]$; in our
independent setting, $\tau_{\mathrm{DP}}^{(T)}$ coincides with $\tau^{(T)}$. Define the finite-horizon optimal value
\[
V_1^{(T)}:=\mathbb E[W_1^{(T)}]=\sup_{\sigma\le T}\mathbb E[X_\sigma],
\]
where $\sigma\le T$ means $\sigma$ takes values in $\{1,\ldots,T\}$.

\medskip
\noindent\textit{(B) Domination for an arbitrary (possibly unbounded) stopping rule.}
Let $\tau$ be any $(\mathcal F_i)$-stopping time taking values in $\mathbb N\cup\{\infty\}$.
For each $T$, the truncated time $\tau\wedge T$ is an $(\mathcal F_i)$-stopping time taking values in
$\{1,\ldots,T\}$. Applying \eqref{eq:finite_domination} with $\sigma=\tau\wedge T$ yields
\begin{equation}\label{eq:trunc_domination}
\mathbb E[X_{\tau\wedge T}]\ \le\ V_1^{(T)}\qquad\text{for every }T\in\mathbb N.
\end{equation}
We now pass to the limit $T\to\infty$ on the left-hand side.
On the event $\{\tau<\infty\}$, we have $\tau\wedge T=\tau$ for all $T\ge \tau$, hence
$X_{\tau\wedge T}\to X_\tau$ a.s.
On the event $\{\tau=\infty\}$, we have $X_{\tau\wedge T}=X_T\to 0=X_\infty=X_\tau$ a.s.
Therefore $X_{\tau\wedge T}\to X_\tau$ almost surely.

Moreover, $0\le X_{\tau\wedge T}\le N$ a.s.\ for all $T$, so by dominated convergence,
\begin{equation}\label{eq:DCT_tau}
\mathbb E[X_{\tau\wedge T}]\ \xrightarrow[T\to\infty]{}\ \mathbb E[X_\tau].
\end{equation}

\medskip
\noindent\textit{(C) Monotone limit of the truncated values.}
Since the set of admissible stopping times grows with $T$,
\[
V_1^{(T)}=\sup_{\sigma\le T}\mathbb E[X_\sigma]\ \le\ \sup_{\sigma\le T+1}\mathbb E[X_\sigma]=V_1^{(T+1)}.
\]
Thus $(V_1^{(T)})_{T\ge 1}$ is nondecreasing and bounded above by $N$, hence the limit
\[
V_1:=\lim_{T\to\infty}V_1^{(T)}\in[0,N]
\]
exists. Combining \eqref{eq:trunc_domination} with \eqref{eq:DCT_tau} and letting $T\to\infty$ gives
\begin{equation}\label{eq:upper_all_tau}
\mathbb E[X_\tau]\ \le\ V_1\qquad\text{for every stopping rule }\tau.
\end{equation}
In particular, $V_1$ is an upper bound on the optimal infinite-horizon value
$\sup_{\tau}\mathbb E[X_\tau]$.

Conversely, $V_1^{(T)}\le \sup_{\tau}\mathbb E[X_\tau]$ for every $T$, so taking $T\to\infty$ yields
$V_1\le \sup_{\tau}\mathbb E[X_\tau]$. Hence
\[
V_1=\sup_{\tau}\mathbb E[X_\tau],
\]
i.e., $V_1$ is the infinite-horizon optimal value.

\medskip
\noindent\textit{(D) Existence of an optimal DP stopping time attaining $V_1$.}
Define the infinite-horizon Snell envelope as the monotone limit
\[
W_i:=\lim_{T\to\infty} W_i^{(T)},\qquad i\ge 1,
\]
which exists a.s.\ because $W_i^{(T)}$ is nondecreasing in $T$ for each fixed $i$ (this follows directly
from the recursion \eqref{eq:Snell_T} by backward induction) and $0\le W_i^{(T)}\le N$.

Define the candidate infinite-horizon optimal stopping time
\[
\tau^\star:=\inf\{\,i\ge 1:\ X_i=W_i\,\},
\]
with the convention that $\tau^\star=\infty$ if the set is empty.

On the event $\{i<\tau^\star\}$ we have $X_i<W_i$, hence necessarily
\[
W_i=\mathbb E[W_{i+1}\mid\mathcal F_i]\qquad\text{on }\{i<\tau^\star\},
\]
because $W_i=\max\{X_i,\mathbb E[W_{i+1}\mid\mathcal F_i]\}$ (this identity follows by taking
$T\to\infty$ in \eqref{eq:Snell_T} using conditional monotone convergence).
Therefore the stopped process $(W_{i\wedge \tau^\star})_{i\ge 1}$ is a martingale.

Applying the optional stopping theorem to the bounded stopping time $\tau^\star\wedge T$ yields
\[
\mathbb E[W_{\tau^\star\wedge T}]=\mathbb E[W_1]\qquad\text{for all }T\in\mathbb N.
\]
Since $0\le W_{\tau^\star\wedge T}\le N$ and $W_{\tau^\star\wedge T}\to W_{\tau^\star}$ a.s.,
dominated convergence gives $\mathbb E[W_{\tau^\star\wedge T}]\to \mathbb E[W_{\tau^\star}]$.
Thus
\[
\mathbb E[W_{\tau^\star}]=\mathbb E[W_1].
\]
By definition of $\tau^\star$, we have $W_{\tau^\star}=X_{\tau^\star}$ a.s.\ (and on
$\{\tau^\star=\infty\}$ we use $X_\infty=0$).
Hence
\[
\mathbb E[X_{\tau^\star}]=\mathbb E[W_1].
\]
Taking expectations in $W_1=\lim_T W_1^{(T)}$ and using monotone convergence yields
\[
\mathbb E[W_1]=\lim_{T\to\infty}\mathbb E[W_1^{(T)}]=\lim_{T\to\infty}V_1^{(T)}=V_1.
\]
Therefore $\mathbb E[X_{\tau^\star}]=V_1=\sup_\tau \mathbb E[X_\tau]$, so $\tau^\star$ is optimal.

Finally, in our setting with independent rewards and deterministic decay, the conditional expectation
$\mathbb E[W_{i+1}\mid\mathcal F_i]$ is deterministic and equals $V_{i+1}$, so the optimal rule can be written
as the DP threshold rule
\[
\tau_{\mathrm{DP}}=\inf\{\,i\ge 1:\ X_i\ge V_{i+1}\,\}.
\]

\medskip
\noindent\textit{(E) Conclusion.}
Combining \eqref{eq:upper_all_tau} with $\mathbb E[X_{\tau_{\mathrm{DP}}}]=V_1$ gives
\[
\mathbb E[X_\tau]\ \le\ \mathbb E[X_{\tau_{\mathrm{DP}}}]
\qquad\text{for every stopping rule }\tau,
\]
so DP dominates any stopping rule on this (infinite-horizon) instance family.



\textbf{Conclusion.}
For every $\widetilde\gamma\in[0,1)$ and every $\varepsilon>0$, choosing $N$ sufficiently large in the
above hard instance gives
\[
\mathrm{CR}(\tau)\le \mathrm{CR}(\tau_{\mathrm{DP}})\le \frac{1}{1+\widetilde\gamma}+\varepsilon.
\]
Hence no stopping policy can guarantee a competitive ratio strictly larger than $1/(1+\widetilde\gamma)$
in the $S=n=\infty$ continuous-decay setting.

%% file: proof_continuous_decay_lower.tex
\subsection{Proof of Theorem~\ref{prop:continuous_lower}: Lower bound  under
continuous decay}\label{app:contiuous_lower}

For simplicity we assume $\mathcal D$ is atomless; for distributions with atoms, the same bound
follows by randomized tie-breaking (cf.\ Remark~\ref{rm:atom_proof}).

\medskip\textbf{Setup ($S=n=\infty$, unit phases).}
Let $\widetilde\gamma\in[0,1)$ and define $X_i=\widetilde\gamma^{\,i-1}Y_i$ for $i\ge 1$, where
$(Y_i)_{i\ge 1}$ are i.i.d.\ from $\mathcal D$ with CDF $F$ and $\mathbb E[Y]<\infty$.
Define the infinite-horizon prophet benchmark
\[
\mathrm{OPT}_\infty:=\sup_{i\ge 1}X_i.
\]
The effective horizon equals
\[
B_\infty:=\sum_{i=1}^\infty \widetilde\gamma^{\,i-1}=\frac{1}{1-\widetilde\gamma}.
\]
Let $\alpha_1$ be chosen so that
\[
\mathbb P(Y>\alpha_1)=\frac{1}{B_\infty}=1-\widetilde\gamma,
\qquad\text{equivalently }F(\alpha_1)=\widetilde\gamma,
\]
and set the phase-$i$ threshold $\alpha_i:=\widetilde\gamma^{\,i-1}\alpha_1$.
The policy $\tau$ stops at the first time $i$ such that $X_i>\alpha_i$.

\medskip\textbf{Step 1: Compute $\mathbb E[X_\tau]$ (geometric series).}
Since $X_i>\alpha_i$ iff $Y_i>\alpha_1$, the stopping time is
\[
\tau=\inf\{i\ge 1: Y_i>\alpha_1\}.
\]
Using independence,
\begin{align*}
\mathbb E[X_\tau]
&=\sum_{i=1}^\infty
\mathbb E\!\left[X_i\,\mathbbm 1\{Y_i>\alpha_1\}\,\mathbbm 1\{Y_j\le \alpha_1\ \forall j<i\}\right] \\
&=\sum_{i=1}^\infty
\mathbb P(Y\le \alpha_1)^{i-1}\,
\mathbb E\!\left[\widetilde\gamma^{\,i-1}Y\,\mathbbm 1\{Y>\alpha_1\}\right] \\
&=\mathbb E\!\left[Y\,\mathbbm 1\{Y>\alpha_1\}\right]\sum_{i=1}^\infty
\big(\widetilde\gamma F(\alpha_1)\big)^{i-1}.
\end{align*}
Since $F(\alpha_1)=\widetilde\gamma$, we obtain
\begin{equation}\label{eq:EXtau_infty}
\mathbb E[X_\tau]
=\frac{1}{1-\widetilde\gamma^2}\,\mathbb E\!\left[Y\,\mathbbm 1\{Y>\alpha_1\}\right]
=\frac{1}{1+\widetilde\gamma}\,B_\infty\,\mathbb E\!\left[Y\,\mathbbm 1\{Y>\alpha_1\}\right].
\end{equation}

\medskip\textbf{Step 2: Relate $B_\infty\mathbb E[Y\mathbbm 1\{Y>\alpha_1\}]$ to the prophet.}
We use two bounds.

\smallskip
\noindent\textit{(2a) A truncation inequality for the fractional maximum.}
For any $b\ge 1$ and any $\alpha\ge 0$,
\[
\int_0^\infty \bigl(1-F(z)^b\bigr)\,dz
\le
\alpha + b\int_\alpha^\infty (1-F(z))\,dz
=
\alpha + b\,\mathbb E[(Y-\alpha)_+].
\]
Indeed, on $[0,\alpha]$ the integrand is at most $1$, and on $[\alpha,\infty)$ we use
$1-u^b\le b(1-u)$ for $u\in[0,1]$.
If $\alpha$ satisfies $\mathbb P(Y>\alpha)=1/b$, then
$b\,\mathbb E[Y\mathbbm 1\{Y>\alpha\}]=\alpha+b\mathbb E[(Y-\alpha)_+]$, so
\begin{equation}\label{eq:trunc_frac}
b\,\mathbb E\!\left[Y\,\mathbbm 1\{Y>\alpha\}\right]
\ \ge\
\int_0^\infty \bigl(1-F(z)^b\bigr)\,dz.
\end{equation}
Applying \eqref{eq:trunc_frac} with $b=B_\infty$ and $\alpha=\alpha_1$ gives
\begin{equation}\label{eq:BEtail_ge_fracmax}
B_\infty\,\mathbb E\!\left[Y\,\mathbbm 1\{Y>\alpha_1\}\right]
\ \ge\
\int_0^\infty \bigl(1-F(z)^{B_\infty}\bigr)\,dz.
\end{equation}

\smallskip
\noindent\textit{(2b) Multistage comparison ($S=\infty$, unit phases).}
By the infinite-stage extension of Lemma~\ref{lem:multistage} for
$\Gamma_{s-1}=\widetilde\gamma^{\,s-1}$ and $n_s\equiv 1$ (from the monotone convergence theorem),
\begin{equation}\label{eq:multistage_infty}
\int_0^\infty \bigl(1-F(z)^{B_\infty}\bigr)\,dz
\ \ge\
\int_0^\infty \Bigl(1-\prod_{s=1}^{\infty}F\bigl(z/\widetilde\gamma^{\,s-1}\bigr)\Bigr)\,dz.
\end{equation}
Moreover,
\[
\mathbb P(\mathrm{OPT}_\infty\le z)
=\prod_{s=1}^{\infty}F\bigl(z/\widetilde\gamma^{\,s-1}\bigr),
\]
so by the tail-integral identity,
\begin{equation}\label{eq:OPT_tail}
\mathbb E[\mathrm{OPT}_\infty]
=\int_0^\infty \Bigl(1-\prod_{s=1}^{\infty}F\bigl(z/\widetilde\gamma^{\,s-1}\bigr)\Bigr)\,dz.
\end{equation}
Combining \eqref{eq:BEtail_ge_fracmax}--\eqref{eq:OPT_tail} yields
\begin{equation}\label{eq:BEtail_ge_OPT}
B_\infty\,\mathbb E\!\left[Y\,\mathbbm 1\{Y>\alpha_1\}\right]
\ \ge\ \mathbb E[\mathrm{OPT}_\infty].
\end{equation}

\medskip\textbf{Step 3: Competitive ratio.}
Combining \eqref{eq:EXtau_infty} with \eqref{eq:BEtail_ge_OPT} gives
\[
\mathbb E[X_\tau]
\ \ge\
\frac{1}{1+\widetilde\gamma}\,\mathbb E[\mathrm{OPT}_\infty],
\]
so
\[
\mathrm{CR}_\infty(\tau)\ \ge\ \frac{1}{1+\widetilde\gamma},
\qquad 0\le \widetilde\gamma<1.
\]
This concludes the proof.